\def\Xdef#1{\expandafter\xdef\csname#1\endcsname}
\def\TakeInt#1#2{\GetTake#2\Fin\expandafter\PutTakeInt\pgfmathresult.0:#1\Fin}
\def\GetTake#1\Fin{\pgfmathparse{#1}\pgfmathparse{int(\pgfmathresult)}}
\def\PutTakeInt#1.#2:#3\Fin{\Xdef{Result#3}{#1}}
\newcommand{\mN}{\mathbb{N}}
\def\cost{\mbox cost}
\def\En{E_0}
\def\Ee{E_1}
\def\Ex{E_{Multi}}
\def\Eu{E_{used}}
\def\KnotenDiam{0.9mm}
\title{The Graph Exploration Problem with Advice}
\author{Hans-Joachim B\"ockenhauer}{ETH Z\"urich\\{Department of Computer Science}}{hjb@inf.ethz.ch}{}{}
\author{Janosch Fuchs}{RWTH Aachen University\\{Computer Science 1}}{fuchs@algo.rwth-aachen.de}{}{[funding:UnRaVeL?]}
\author{Walter Unger}{RWTH Aachen University\\{Computer Science 1}}{quax@algo.rwth-aachen.de}{}{}
\authorrunning{H.-J. B\"ockenhauer and J. Fuchs and W. Unger}
\subjclass{Online algorithms, Graph algorithms analysis}
\keywords{Graph Exploration, advice complexity}
\begin{document}
\maketitle
\begin{abstract}
Moving an autonomous agent through an unknown environment is one of the crucial problems for robotics and network analysis. 
Therefore, it received a lot of attention in the last decades and was analyzed in many different settings. 
The \emph{graph exploration problem} is a theoretical and abstract model, where an algorithm has to decide how the agent, also called \emph{explorer}, moves through a network such that every point of interest is visited at least once. 
For its decisions, the knowledge of the algorithm is limited by the perception of the explorer.

There are different models regarding the perception of the explorer. 
We look at the \emph{fixed graph scenario} proposed by Kalyanasundaram and Pruhs (Proc. of ICALP, 1993), where the explorer starts at a vertex of the network and sees all reachable vertices, their unique names and their distance from the current position. 
Therefore, the algorithm recognizes already seen vertices and can adapt its strategy during exploring, because it does not forget anything.

Because the algorithm only learns the structure of the graph during computation, it cannot deterministically compute an optimal tour that visits every vertex at least once without prior knowledge. 
Therefore, we are interested in the amount of crucial a-priori information needed to solve the problem optimally, which we measure in terms of the well-studied model of \emph{advice complexity}. 
Here, the algorithm can at any time access a binary string of information written beforehand by an oracle that knows the optimal solution, the network and the behavior of the algorithm. 
The number of bits read by the algorithm until the end of computation, is called the \emph{advice complexity}. 

We look at different variations of the graph exploration problem and distinguish between directed or undirected edges, cyclic or non-cyclic solutions, unit costs or individual costs for the edges and different amounts of a-priori structural knowledge of the explorer. 
For general graphs, it is known that $\mathcal{O}(n\log(n))$ bits of advice suffice to compute an optimal solution. 
In this work, we present algorithms with an advice complexity of $\mathcal{O}(m+n)$, thus improving the classical bound for sparse graphs. 
\end{abstract}



\section{Introduction}
Orientation and navigation in an unknown environment is one of the basic tasks for autonomous agents. 
The environment can be physical or virtual like in a network of computers, where the connections between the computers are unknown. 
To send messages as fast as possible, it is helpful to know the structure of the network. 
Thus an explorer can be used to visit and test the connections for every computer in the network. 

A more physical example is the field of robotics. 
There are many applications for robots that explore unknown environments on their own \cite{thrun2002robotic,thrun2004autonomous,kortenkamp1998artificial}. 
For example, exploring caves or abandoning mines is often a dangerous task and autonomous robots can be used to create maps of such environments. 

Because there are many different applications for exploring, there are also many different models for the environment and for the perception of the explorer. 
The survey of Berman \cite{berman1998line} gives an overview of navigation problems and distinguishes the following main properties: 
The representation of the environment, the task that should be solved, and the senses of the explorer. 
The environment can be a geometric space with obstacles \cite{blum1997navigating} or, this is the case that we analyze in this paper, an abstract and discrete space, where the explorer can move from one point to a neighboring one, i.e., a graph consisting of vertices and edges. 
Starting from a vertex in the given graph, the task can be to find a shortest path to a target vertex or to compute a shortest tour that visits every vertex at least once. 

The task to compute a shortest tour that visits every vertex at least once is related to the well-known \emph{Traveling Salesman Problem}. 
Kalyanasundaram et al.~introduced the graph exploration problem as an online version of the TSP \cite{kalyanasundaram1993constructing} with the \emph{fixed graph scenario} which defines the senses of the explorer. 
In this model, the vertices have unique labels, and the explorer sees all reachable vertices, their labels and their distances to the current position. 
Moving onto a new vertex reveals the adjacent vertices and the explorer recognizes if a vertex was already reachable from a previous step. 

Obviously, not having complete knowledge of the graph beforehand makes it impossible in general for the explorer to find a tour of optimal length. 
Thus, algorithms achieving some provable approximation guarantees have been investigated. 
The best known lower bound on the approximation ratio for exploring general and undirected graphs in the fixed graph scenario is $\frac{5}{2} - \epsilon$ \cite{Dobrev2012}. 
For the special case of undirected graphs with bounded genus $g$, an upper bound of $16(1+2g)$ is known \cite{megow2012online}. 
The case of directed graphs in fixed graph scenario is also well studied \cite{albers2000exploring,fleischer2005exploring,foerster2016lower}. 
In \cite{foerster2016lower}, the authors give tight bounds for deterministic and randomized graph exploration in directed graphs with wighted or unweighted edges. 
Moreover, they look at a variation of the problem where the explorer has to search for a specific vertex in the graph. 
There are many slight variations of the graph exploration problem. 
The memory of the algorithm \cite{diks2004tree,fraigniaud2004digraphs}, the number of explorers in the graph \cite{brass2014improved,chalopin2010network,das2007map} or the abilities to set pebbles \cite{bender2002power} are well studied variations. 
 
For a more fine-grained analysis of how much information about the unknown graph is really needed by the explorer, we look at a variation of the graph exploration problem where the algorithm has access to some information in the form of a bit string, provided by a helpful oracle that knows the network. 
The number of bits that the algorithm reads until it finishes its computation is then called its \emph{advice complexity}. 
We are interested in how many bits are needed to compute an optimal exploration sequence. 
Dobrev et al.~\cite{Dobrev2008} introduced this model in the setting of
online algorithms, which was later improved by Hromkovi\v{c} et al.~\cite{HKK10} and B\"ockenhauer et al.~\cite{Bockenhauer2009} as well as by Emek et al.~\cite{Emek2011online}. 
The setting and many results are explained in detail in \cite{Komm16}. 
We use the model of B\"ockenhauer et al.~\cite{Bockenhauer2009} in this paper. 

The first time that the graph exploration problem was analyzed using advice complexity model was in \cite{fraigniaud2008tree}, where Fraigniaud et al.~were able to improve the classical upper bound of $2$ on the competitive ratio for tree exploration by adding advice. 
They proved that $\log\log(D)-c$ bits of advice suffice for $c$-competitiveness, where $D$ is the diameter of the input tree and $c<2$. 
Moreover, they showed that every algorithm that uses less advice bits has to be at least $2$-competitive. 

Since then, there have been many results regarding graph searching problems with advice \cite{Dobrev2012,komm2015treasure,gorain2016deterministic}. 
The search for a specific vertex in the graph stands in focus of research in \cite{komm2015treasure}. 
The authors present an algorithm that uses $\Theta (n/r)$ bits of advice for a competitive ratio $r$. 
In \cite{Dobrev2012}, Dobrev et al.~looked at the trade-off between advice and competitiveness for the cyclic graph exploration problem. 
More over they show that $\Omega (n\log n)$ bits of advice are necessary for optimality. 
Gorain et al.~\cite{gorain2016deterministic} show bounds for a weaker oracle model, where the oracle does not know the starting position of the algorithm. 
Moreover, they show that a number of advice bits linear in the number of vertices results in a solution of quadratic size. 

In the following, we will look at the graph exploration problem for different settings and prove that $\mathcal{O} (m+n)$ bits of advice suffice to compute an optimal solution. 
Note that an upper bound of $\mathcal{O}(n\log n)$ advice bits can be easily achieved by sorting the vertices by their first visits and encoding this order. 
Our result improves over this strategy for sparse graphs. 
We first concentrate on the case that the graph is directed edge-weighted and unknown to the algorithm, and the goal is to compute a cyclic tour visiting all vertices. 
After that, we show how all other variants can be solved by some modification of the algorithm. 
Among the different settings, the a-priori knowledge of the input graph has the largest impact on the advice complexity. 
Since our algorithm relies on the encoding of an optimal solution within the given advice and does not take the edge costs into account, we formulate our results for the more general case of arbitrary edge weights only. 

The paper is organized as follows. 
In Section \ref{sec:definitions}, we give the basic definitions for dealing with the graph exploration problem. 
Section \ref{sec:basic} gives some basic observations and, in Section \ref{sec:known} we develop some techniques to compute an optimal solution for the case where the graph is known to the algorithm. 
These techniques are used in the more complicated case where the input graph is unknown. 
In Section \ref{sec:bounded}, we prove our main result for directed graphs with in-degree and out-degree bounded by $2$. 
This result is extended to directed graphs of arbitrary degree in Section \ref{sec:general}, and, in Section \ref{sec:extensions}, we consider extensions of our algorithm to the cases of undirected graphs and path solutions instead of cyclic solutions. 

The results for all considered variants are summarized in \Cref{table:results}. 

\begin{table}[t]
  \caption{Upper bounds on the advice complexity of graph
    exploration}\label{table:results} 
  \begin{center}
    \begin{tabular}{lll}
      \textbf{graph model} & \textbf{exploration} & \textbf{upper bound}\\
      unknown directed & cyclic & $2n+23m$\\
      unknown directed & path & $2n+23m + \lceil\log(n)\rceil$\\
      unknown undirected & cyclic & $\log(6)(n+m) + 42m$\\
      unknown undirected & path & $\log(6)(n+m) + 42m + \lceil\log(n)\rceil$ \\
      known directed & cyclic & $\lceil\log(3) m\rceil$\\
      known directed & path & $\lceil\log(3) m\rceil+\lceil\log(n)\rceil$\\
      known undirected & cyclic & $\lceil\log(6) m\rceil$\\
      known undirected & path & $\lceil\log(6) m\rceil+\lceil\log(n)\rceil$
    \end{tabular}
  \end{center}
\end{table}


\section{Basic Definitions}
\label{sec:definitions}
We start with a definition of the basic variant of the graph exploration
problem that we consider in this paper. 

\begin{definition}\label{def-variations}
  Let $G=(V,E)$ be a directed graph. Every vertex $v\in V$ has a fixed unique identifier. 
  There is an agent, called \emph{explorer}, initially positioned on some start vertex $v_0 \in V$. 
  The algorithm has to move this explorer along the directed edges of $G$ to visit all vertices and return to $v_0$. 
  For the non-cyclic graph exploration problem, the explorer does not need to return to $v_0$. 
  The edges are weighted by a cost function $\cost \colon E \to \mN$, and the goal is to minimize the total cost along the cyclic tour traveled by the explorer. 
  In every vertex the explorer is located, it sees the outgoing edges, their costs, and the vertex identifiers at the endpoints of these edges, but not the incoming edges.
\end{definition}

Throughout the paper, we denote, for any graph $G=(V,E)$ with vertex set $V$ and edge set $E$, the number of vertices by $n=|V|$ and the number of edges by $m=|E|$. 
For the graph exploration problem, the algorithm decides in each round along which edge the explorer is moved. 
So, starting at the vertex $v_0$, the explorer sees the reachable neighborhood of the starting vertex and their unique identifiers. 
In the following, we will call the reachable neighborhood the \emph{out-neighborhood}, which is defined as follows. 

\begin{definition}
Let $G=(V,E)$ be a directed graph. 
The \emph{out-neighborhood} of a vertex $v \in V$ is defined as $N_{out} (v) = \{ w \mid (v,w) \in E \}$. 
Analogously, we define the \emph{in-neighborhood} of a vertices $v$ as $N_{in} (v) = \{ w \mid (w,v) \in E \}$. 
\end{definition}

We describe the tour followed by the explorer in terms of a \emph{search sequence}. 

\begin{definition}
Let $G=(V,E)$ be a graph, a sequence $S=(v_0,v_1, \ldots , v_{s})$ is called a \emph{search sequence} if $(v_{i-1},v_i)\in E$ for all $1\leq i\leq s$. 
If $v_0 = v_{s}$, we call $S$ a \emph{cyclic search sequence}. 

For a search sequence $S=(v_0, \ldots, v_k)$ with $e_i = (v_{i-1}, v_i)$, for $1 \leq i \leq k$, we denote by $E(S)=(e_1,e_2,\ldots,e_k)$ the sequence of edges in $S$ and by $V(S)=\{v_0,\ldots,v_k\}$ the set of vertices in $S$. 
We also interpret $E(S)$ as a multiset of edges and thus write $e\in E(S)$ if there exists some $i$ with $e=e_i$. 
The cost of a search sequence $S=(v_0, \ldots,v_k)$ is defined by $\cost(S)=\sum_{e\in E(S)}\cost(e)\cdot \#_{S}(e)$, where $\#_{S}(e)$ is the number of traversals through an edge $e\in E$ in the search sequence $S$. 
\end{definition}

The search sequence is determined by the algorithm as follows. 
In each step, the explorer is located at some vertex $v$ and the algorithm chooses one of the vertices from $N_{out}(v)$ as target and moves the explorer towards it. 
As soon as the explorer arrives at a vertex, a new round starts and the algorithm receives again the unique identifiers for the reachable neighborhood and has to make an unrecoverable decision, the \emph{response}. 
With each round, the algorithm extends its search sequence by one. 
The cost for a sequence is the number of used edges or, if a cost function is given, the sum of the costs for the used edges. 
The goal is to compute a cyclic search sequence visiting each vertex at least once, we call such a sequence an \emph{exploration sequence}. 

\begin{definition}
A search sequence $S=(v_0,v_1, \ldots ,v_{end})$ for a graph $G=(V,E)$ is called an \emph{exploration sequence} if $V(S)=V$ and $v_0$ is the start vertex. 
If $v_0 = v_{end}$, we call $S$ a \emph{cyclic exploration sequence}. 
\end{definition}

Note that with each decision, the algorithm influences the new input for the next decision. 
Thus, strictly speaking, the graph exploration problem is no classical online problem. 
But the adversary still knows the behavior of the deterministic algorithm and can, with this knowledge, prepare the input graph, the unique identifiers for the vertices, and thus the enumeration of the edges. 
Hence, we can analyze the graph exploration problem using the same methodology as used for online problems. 



Since the algorithm lacks global information about the structure of the graph, there is no deterministic algorithm that finds an optimal exploration sequence for any graph. 
We employ the model of online algorithms with advice as defined in \cite{HKK10,Bockenhauer2009} for measuring the amount of missing information, which we can define in the framework of graph exploration as follows.

An \emph{online algorithm with advice} computes a search sequence $S=(v_0,v_1, \ldots ,v_{end})$, where $v_i$ is computed from $N_{out}(v_0),\ldots, N_{out}(v_{i-1})$ (i.e., from the partial knowledge about the graph gathered in the first $i-1$ rounds) and the content $\phi$ of the advice tape, i.e., an unbounded binary sequence of \emph{advice bits} computed by an \emph{oracle} that sees the complete input graph together with its edge-cost function. 
An online algorithm with advice \emph{solves} the graph exploration problem if, for any input $(G,\cost)$, there exists a computable advice $\phi$ such that $S$ is an optimal exploration sequence. 
The algorithm has sequential access to the bits from the advice tape, and its \emph{advice complexity} is the number of accessed advice bits. 
As usual, we measure the advice complexity with respect to the input size by considering a worst-case input of the respective size. 

We now define how the algorithm makes a decision to extend a search sequence $S$, based on some advice from the oracle. 
The basic idea is that the oracle chooses some fixed optimal exploration sequence $S^*$ and communicates a sufficient amount of information about it such that the algorithm can compute this sequence without taking the costs of the edges into consideration. 
We observe that it is sufficient for the algorithm to know the exact number of traversals for each edge in $E(S^*)$. 
Explicitly communicating these traversal numbers could be done with $\mathcal{O}(m\log n)$ advice bits in a straightforward way since no edge can be traversed more than $n$ times in an optimal exploration sequence. 
But, for sparse graphs, this would be too expensive, so the oracle only communicates some partial information from which the algorithm can compute the traversal numbers. 
As a first step, we partition the edges into three sets according to their number of traversals (none, one, or multiple times) in an optimal exploration sequence.

\begin{definition}
  Let $G=(V,E)$ be a graph, let $S^*=(v_0, \ldots ,v_{end})$ be an exploration sequence of minimum cost and let $S=(v_0, \ldots ,v_k)$ be an arbitrary search sequence. 
  We define:
  \begin{itemize}
  \item $\En = \{ e \in E \mid \#_{S^*}(e)=0\}$ is the set of edges in $E$ which are not visited by $S^*$. 
  \item $\Ee = \{ e \in E \mid \#_{S^*}(e)=1\}$ is the set of edges in $E$ which are visited exactly once by $S^*$. 
  \item $\Ex = \{ e \in E \mid \#_{S^*}(e)>1\}$ is the set of edges in $E$ which are visited more than once by $S^*$. 
  \item $\Eu = \Ex \cup \Ee$ is the set of edges in $E$ which are visited at least once by $S^*$. 
  \end{itemize}
  If $\#_{S}(e) = \#_{S^*}(e)$ holds, e.g., an edge $e$ is as often used in $S$ as in the fixed optimal solution $S^*$, we say $e$ is \emph{exhausted} in $S$. 
\end{definition}



The number of traversals for the edges could differ for different optimal solutions, but the oracle fixes one of the optimal solutions such that the given advice is consistent during the exploration. 
\Cref{unique-solution} shows a sample graph where the number of traversals for the edges in an optimal solution is non-unique. 
The five traversals over the vertex $x$ needs to be split up between the two paths $(y,v_1,x)$ and $(y,v_2,x)$. 

\begin{figure}[htb]
    \begin{center}
    \begin{tikzpicture}[>=latex]
    \tikzset{line/.style={-latex'}}
    \draw (0, 0) node[label={[label distance=-0.03cm]180:$y$}] (y) {}; \filldraw (y) circle (\KnotenDiam);
      \draw (1, -0.8) node[label={[label distance=-0.03cm]270:$v_1$}] (r) {}; \filldraw (r) circle (\KnotenDiam);
      \draw (1, 0.8) node[label={[label distance=-0.03cm]270:$v_2$}] (l) {}; \filldraw (l) circle (\KnotenDiam);
      \draw (2,0) node[label={[label distance=-0.05cm]90:$x$}] (x) {}; \filldraw (x) circle (\KnotenDiam);
      \draw (5.5,0) node (c) {}; \filldraw (c) circle (\KnotenDiam);
      \foreach \x in {0,...,4} {\draw (4.5,0.85*\x-1.5) node (\x) {}; \filldraw (\x) circle (\KnotenDiam);
        \draw[->=latex,line width=1] (x) -- (\x) {}; \draw[->=latex,line
        width=1] (\x) -- (c) {}; }
      \draw[->,line width=1] (y) -- (r) {}; \draw[->,line width=1] (y) -- (l) {};
      \draw[->,line width=1] (r) -- (x) {}; \draw[->,line width=1] (l) -- (x) {};
      \draw[->,line width=1] (c) .. controls +(up:3.5) and +(up:3.5) .. (y) {};
    \end{tikzpicture}
    \end{center}
    \caption{
    The optimal number of traversals for the edges of this graph is non-unique. 
    The five successors of $x$ and the two possible paths to $x$ require that the algorithm traverses $(y,v_1,x)$ or $(y,v_2,x)$ multiple times. 
    }\label{unique-solution}
  \end{figure}
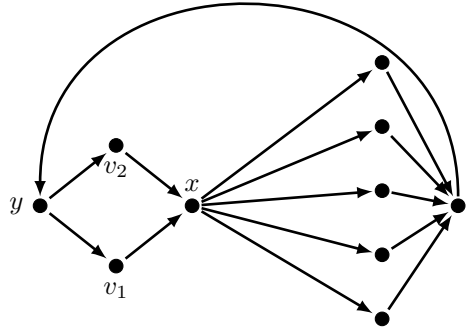
  
Our algorithm depends on the oracle and the fixed solution because it tries to reconstruct the fixed solution. 
Therefore, it ignores the cost function for the edges and tries to use the edges as often as their are used in the fixed optimal solution. 
Knowing, for every edge $e$, its membership to the three sets, the traversal number is unknown only for the edges from $\Ex$. 
Otherwise, the algorithm knows that it is used once or never. 
Because the edges from $\Ex$ need additional advice to compute their exact number of traversals, the oracle is interested in fixing a solution that minimizes the number of edges from $\Ex$. 
For the graph presented in \Cref{unique-solution}, the oracle would chose a solution where the edges $(y,v_2)$ and $v_2,x$ or  $(y,v_1)$ and $v_1,x$ are used exactly once. 
Such a solution sets the number of edges from $\Ex$ to three, instead of five. 

To ease the analysis of the advice complexity of our algorithms, the oracle constructs a graph $G'$ from the given one such that the optimal solution $S^*$ for $G'$ is unique in the sense that, for all other optimal solutions $S'$, $\#_{S^*}(e) = \#_{S'}(e)$. 
Additionally, every optimal solution for $G'$ must be an optimal solution for the input graph. 
The oracle constructs $G'$ by perturbing the costs on the edges by a small amount such that , for all $x$ and $y$, only one of two paths
$p_1=(y, \ldots, x)$ and $p_2=(y, \ldots, x)$ with equal costs is used multiple times.
This minimizes the number of edges from $\Ex$ and fixates the traversal number, for every edge. 
The cost function is perturbed by the following instruction: $\cost^\prime(e_i) = \cost(e_i) + 1/m^{2i}$. 
The following lemma guarantees that we can choose a sufficiently small perturbation such that $S^*$ is also optimal for $G$.

\begin{lemma}\label[lemma]{lem:trav-number}
  Let $S^*$ be an optimal exploration sequence for a graph with $n$
  vertices. Then $\#_{S^*}(e) \leq n$, for all $e\in E(S^*)$.
\end{lemma}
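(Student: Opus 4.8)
The plan is to prove the bound by a local shortcut argument that exploits the optimality of $S^*$: if some edge were used too often, one could delete a redundant closed subwalk to obtain a strictly cheaper exploration sequence, contradicting optimality. Here I rely on the cost function mapping into $\mN$ with positive values, so that removing any nonempty block of the walk strictly decreases the cost. I would therefore fix an edge $e=(a,b)$ and set $k=\#_{S^*}(e)$; the cases $k\le 1$ are trivial, so assume $k\ge 2$ and let $i_1<i_2<\dots<i_k$ be the positions at which $e$ is traversed in $S^*=(v_0,\dots,v_{end})$, meaning $v_{i_j-1}=a$ and $v_{i_j}=b$ for every $j$.

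The key construction is, for two consecutive occurrences, the contiguous block $B_j$ spanning positions $i_j-1$ (vertex $a$) up to $i_{j+1}-1$ (again vertex $a$); this is a closed subwalk based at $a$ whose first step is the edge $e$. I would delete the positions $i_j,\dots,i_{j+1}-1$, keeping the copy of $a$ at position $i_j-1$ and reconnecting it directly to $v_{i_{j+1}}=b$ along the next copy of $e$. The result is again a search sequence: the reconnection is legal precisely because it reuses the edge $e$, and since the deleted positions lie in $\{1,\dots,end-1\}$, the start vertex $v_0=v_{end}$ at the two endpoints is untouched, so the reduced sequence still starts and ends at $v_0$ and is strictly cheaper.

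The counting step then converts this into the bound. Because $S^*$ is optimal, each deletion must destroy the exploration property, i.e. each block $B_j$ must contain a vertex that occurs \emph{nowhere else} in $S^*$; call it a private vertex of $B_j$. Such a private vertex is neither $a$ nor $b$, since both of these recur at the remaining $k-1\ge 1$ traversal positions of $e$, so private vertices lie in $V\setminus\{a,b\}$. Furthermore, the private vertices of distinct blocks are distinct, as a vertex private to $B_j$ appears only inside the position range of $B_j$. Associating to each of the $k-1$ blocks $B_1,\dots,B_{k-1}$ one private vertex yields $k-1$ distinct vertices inside a set of size $n-2$, so $k-1\le n-2$ and hence $\#_{S^*}(e)=k\le n-1\le n$.

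I expect the main obstacle to be the bookkeeping that certifies the shortcut is legitimate, rather than the final counting: I must verify that after the deletion every consecutive pair is still an edge (the delicate point being the reconnection through $e$), that the start vertex survives, and that coverage is preserved for every vertex except the private ones. A secondary point worth stating explicitly is where positivity of the costs enters, since it is exactly what makes the shortcut strictly improving; if zero-cost edges were allowed, one would only conclude that \emph{some} optimal sequence meets the bound after finitely many reductions rather than that every optimal $S^*$ does.
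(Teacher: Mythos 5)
Your proof is correct and follows essentially the same idea as the paper's: the paper argues informally that a traversal of $e$ not needed to explore any vertex could be removed, contradicting optimality, and your block-deletion-with-reconnection-through-$e$ plus the private-vertex counting is precisely the rigorous form of that argument (even yielding the slightly stronger bound $\#_{S^*}(e)\le n-1$ for non-loop edges). Your closing caveat about positive edge costs is well taken, since the paper's one-line proof silently relies on the same assumption.
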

\begin{proof}
Assume that there exists an edge $e$ with $\#_{S^*}(e) > n$. 
This would imply that the algorithm sends the explorer more often along the edge $e$ than there are vertices in $G$. 
Thus, at least one of the traversal along $e$ is not needed to explore a vertex and this contradicts the optimality of $S^*$. 
Therefore, an additional cost of $1/m^{2i}$ for an edge $e_i$ will not sum up to $1$ which could change the optimal solution. 
\end{proof}

From now on, we assume that the optimal exploration sequence $S^*$ fixed by the oracle is unique. 

If the input graph has a solution that visits every vertex at least once, it must be strongly connected for the case of the cyclic graph exploration problem and it must be at least connected if the algorithm is asked for a non-cyclic exploration sequence. 
Because the connectivity of the graph alone does not reflect the current position of the explorer and the already exhausted edges, we introduce the term of an \emph{expandable search sequence} to formulate a stronger and more precise connectivity requirement. 

\begin{definition}\label{expandability}
We call a search sequence $S=(v_0, \ldots ,v_k)$ \emph{expandable} with respect to a fixed optimal exploration sequence $S^* = \{ v_0, \ldots, v_{end} \}$ if there exists an outgoing edge $e=(v_k,w)$ that is less used than in $S^*$ and there is a search sequence from $v_k$ that uses only not exhausted edges and reaches the final vertex $v_{end}$ such that $\cost(S^*)=\cost(S)$ and $V(S)=V$. 
\end{definition}
\section{Structural Observations}
\label{sec:basic}
In this section, we focus on the structural properties of the fixed optimal solution $S^*$. 
We distinguish between two different graphs, that can be induced by $S^*$. 
If we remove the edges from $\En$ and just look at the graph $G_{S^*}=(V,\Eu)$, we have a graph for which all edges are needed in the optimal exploration sequence. 
If we are more interested on traversed subsequences and want to distinguish different traversals over the same edge from $\Ex$, we look at the multigraph $M_{S^*}=(V,\Eu, \#_{S^*})$. 
In $M_{S^*}$, every edge $e=(v,u) \in \Ex$ is replaced by $\#_{S^*}(e)$ many edges that point from $v$ to $u$. 
In the case of the cyclic graph exploration problem, the multigraph $M_{S^*}=(V,\Eu, \#_{S^*})$ is an Eulerian graph. 
Let $d_{in}(v)$ denote the in-degree of the vertex $v$ in $M_{S^*}$.

\begin{definition}
Let $G=(V,E)$ be a graph with the optimal exploration sequence $S^*$, we denote for a vertex $v \in V$ the difference between the number of traversals for the outgoing edges and the incoming edges in $S^*$ with $d_\Delta(v) = \sum_{e=(v,x)}\#_{S^*}(e) - \sum_{e=(x,v)}\#_{S^*}(e)$. 
\end{definition}

Thus, for each vertex $v$ we have $d_\Delta(v)=0$ and $M_{S^*}$ is strongly connected. 
Note that $G_{S^*}$ is also strongly connected and $d_\Delta(v)=0$ holds. 
On the other hand, $M_{S^*}$ is an Eulerian path graph with start vertex $v_0$ and terminal vertex $v_{end}$ if $d_\Delta(v_0)=1$, $d_\Delta(v_{end})=-1$ and for each vertex $v \in V \setminus\{ v_0 , v_{end} \}$ we have $d_\Delta(v)=0$. 
Note that by adding the edge $(v_0,v_{end})$ to $\Eu$, $G_{S^*}$ becomes an Eulerian graph. 

We start by considering the cyclic graph exploration problems on directed graphs. 
The following lemmas describe structural properties of optimal exploration sequences which will be helpful for deducting the traversal numbers of the edges from a rather small amount of advice. 

\begin{lemma}\label[lemma]{zurueck}
  Let $G=(V,E)$ be a strongly connected graph, with $d_\Delta(v)=0$ for all vertices $v\in V$ and let $S^*$ be the optimal exploration sequence. 
  If there is a path from $y$ to $x$ that uses only edges $e$ with $\#_{S^*}(e)>0$, then there exists a cycle $(y, \ldots, x, \ldots, y)$ that uses only edges $e$ with $\#_{S^*}(e)>0$. 
\end{lemma}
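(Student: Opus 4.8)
The plan is to reduce the claim to a single reachability statement in the subgraph of used edges and then to establish that reachability by a flow-conservation argument. Since we are handed a path from $y$ to $x$ using only edges $e$ with $\#_{S^*}(e)>0$, it suffices to produce a directed path from $x$ back to $y$ that also uses only such edges; concatenating the two gives the desired closed walk $(y,\ldots,x,\ldots,y)$ inside the used-edge graph. So the entire task is to show that $x$ can reach $y$ in $G_{S^*}=(V,\Eu)$, the graph spanned by the edges actually traversed by $S^*$.

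To prove this reachability, I would read $\#_{S^*}$ as a nonnegative integer circulation on $G$: the hypothesis $d_\Delta(v)=0$ for every $v$ says exactly that the number of traversals entering each vertex equals the number leaving it, i.e.\ flow is conserved everywhere, and the support of this circulation is precisely $\Eu$. Let $R$ denote the set of vertices reachable from $x$ using only edges of $\Eu$. By the definition of $R$, no edge of $\Eu$ can leave $R$, since such an edge would enlarge the reachable set; hence the total flow crossing from $R$ to $V\setminus R$ is zero. Summing the identity $d_\Delta(v)=0$ over all $v\in R$, the traversals of edges internal to $R$ cancel and one is left with (flow leaving $R$) $=$ (flow entering $R$). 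As the former is zero, no edge of $\Eu$ enters $R$ either.

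With this in hand I would finish as follows. The given path from $y$ to $x$ uses only edges of $\Eu$ and ends at $x\in R$. If $y$ were not in $R$, this path would start outside $R$ and end inside $R$, so it would have to contain an edge of $\Eu$ crossing from $V\setminus R$ into $R$, contradicting the fact that no used edge enters $R$. Therefore $y\in R$, i.e.\ there is a path from $x$ to $y$ in $G_{S^*}$, which is the return path we needed.

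The step I expect to be the crux is the cut-counting, namely seeing that reachability is symmetric in a balanced digraph: this is precisely what converts ``no used edge leaves $R$'' into ``no used edge enters $R$,'' and everything else is bookkeeping. I note that this is really a self-contained proof that the used-edge subgraph is strongly connected on the component containing $x$ and $y$, so one could alternatively just invoke the already-observed strong connectivity of $G_{S^*}$ and take the return path directly; I prefer the flow argument because it derives exactly the needed reachability from the balance hypothesis $d_\Delta(v)=0$ without appealing to that remark.
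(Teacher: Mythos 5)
Your proposal is correct, but it reaches the conclusion by a different engine than the paper. The paper's proof is a two-step appeal: the multigraph $M_{S^*}=(V,\Eu,\#_{S^*})$ is Eulerian (because $S^*$ itself is a closed walk traversing every used edge the right number of times), hence strongly connected, and then a cut/partition argument turns the assumed absence of a connecting path into a contradiction with that strong connectivity. You never invoke the Eulerian property or any pre-established connectivity of the used subgraph: you take the set $R$ of vertices reachable from $x$ in $(V,\Eu)$, observe that no used edge leaves $R$, and sum the balance condition $d_\Delta(v)=0$ over $R$ to conclude that no used edge enters $R$ either; the hypothesis path from $y$ to $x$ then forces $y\in R$, giving the return path. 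Both arguments are cut arguments at heart, but yours makes explicit the counting that the paper's ``Eulerian, in particular strongly connected'' step hides, and it is strictly more general: it shows that in any digraph with a nonnegative balanced edge-weighting, reachability within the support is symmetric, using neither the optimality of $S^*$, nor the strong connectivity of $G$, nor the fact that $\#_{S^*}$ arises from a single closed walk. What the paper's version buys in exchange is brevity, since the Eulerian observation about $M_{S^*}$ was already made in Section~\ref{sec:basic} and is reused elsewhere; your version buys self-containedness, and it also sidesteps a small infelicity in the paper's write-up, which phrases the contradiction hypothesis as the absence of a path from $y$ to $x$ (the direction that is already assumed) rather than from $x$ back to $y$ (the direction actually needed).
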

\begin{proof}
	The graph $M_{S^*}=(V,\Eu, \#_{S^*})$ is Eulerian, in particular, it is strongly connected. 
	Assume that there is no path from $y$ to $x$ that uses only edges $e$ with $\#_{S^*}(e)>0$. 
  	Then we may partition $V$ into two sets $X$ and $Y$ with $X \cup Y = V$, $x \in X$ and $y \in Y$ such that there is no edge $e^\prime = (y^\prime, x^\prime) \in E$ with $\#_{S^*}(e^\prime)>0$, $y'\in Y$ and $x'\in X$. 
  	This contradicts the strong connectivity of $M_{S^*}$. 
\end{proof}

When the optimal solution contains multiple cycles that start at the same vertex, it is possible to interchange some of the cycles without changing the number of traversals for any edge and therefore, without changing the cost of the solution. 

\begin{lemma}\label[lemma]{interchangeable}
  Let $G=(V,E)$ be a connected graph, and let $S^*$ be the optimal exploration sequence for $G$. 
  A vertex $v$ is part of $d_{in}(v)$ cycles $c_i=(v, \ldots, v) \in S^*$ with $0 \leq i \leq d_{in}(v)$, and at least $d_{in}(v)-1$ cycles are interchangeable, i.e., changing their order in $S^*$ does not change the cost of $S^*$.
\end{lemma}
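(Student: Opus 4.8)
The plan is to exploit the Eulerian structure of $M_{S^*}$ established above. Since we are in the cyclic setting, $M_{S^*}=(V,\Eu,\#_{S^*})$ is Eulerian and $d_\Delta(v)=0$ yields $d_{in}(v)=d_{out}(v)$ for every vertex $v$. The optimal exploration sequence $S^*$ is then precisely an Eulerian circuit of $M_{S^*}$, i.e., a closed walk starting and ending at $v_0$ that uses each edge of the multigraph exactly once. First I would count the visits to a fixed vertex $v$: every time the walk passes through $v$ it consumes one incoming and one outgoing edge of $M_{S^*}$, so, viewing $S^*$ cyclically, $v$ is visited exactly $d_{in}(v)$ times.

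For the cycle decomposition I would cut the closed walk at each occurrence of $v$. Cutting a cyclic closed walk at its $d_{in}(v)$ occurrences of $v$ splits it into $d_{in}(v)$ arcs, each of which begins and ends at $v$ and is therefore a cycle $c_i=(v,\ldots,v)$. This produces the claimed $d_{in}(v)$ cycles and shows that $v$ lies on each of them.

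The core of the argument is interchangeability. I would show that permuting the order in which these cycles are traversed yields another closed walk through $v$ that is again a valid exploration sequence: concatenating cycles that all begin and end at $v$ keeps consecutive endpoints adjacent in $M_{S^*}$, the multiset of used edges together with its multiplicities is unchanged, so $\cost$ is unchanged by definition and every vertex is still covered; hence the reordered sequence is again optimal. The only constraint is the fixed start vertex $v_0$: the exploration sequence must begin and end at $v_0$, so the unique cycle among $c_1,\ldots,c_{d_{in}(v)}$ that carries the start/end point $v_0$ of the linear sequence $S^*$ is pinned to the boundary and cannot be moved. The remaining $d_{in}(v)-1$ cycles may therefore be permuted freely, giving at least $d_{in}(v)-1$ interchangeable cycles. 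When $v=v_0$ itself, all $d_{in}(v)$ cycles start and end at $v_0$, so every order is admissible and in fact all $d_{in}(v)$ cycles are interchangeable, which is consistent with the stated bound.

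I expect the main obstacle to be the bookkeeping around the fixed start vertex $v_0$: making precise that exactly one of the $d_{in}(v)$ cycles is forced into the boundary position while the others are genuinely free, and verifying that an arbitrary reordering of the free cycles still satisfies the definition of a search sequence (consecutive edges present in $E$, all vertices covered, and cost preserved). The remainder reduces to the standard fact that an Eulerian circuit of a multigraph decomposes into cycles through any chosen vertex.
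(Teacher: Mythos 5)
Your proposal is correct and follows essentially the same route as the paper: decompose $S^*$ into the $d_{in}(v)$ closed walks between consecutive visits of $v$, observe that permuting them leaves every edge multiplicity $\#_{S^*}(e)$ (and hence the cost) unchanged, and account for the one walk pinned by the fixed start vertex $v_0$. In fact your write-up is more careful than the paper's own proof, which asserts the decomposition without deriving it from the Eulerian structure of $M_{S^*}$ and pins all cycles containing $v_0$ rather than only the one carrying the start/end cut point.
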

\begin{proof}
Changing the order of two cycles $c_i$ and $c_j$, which start at a vertex $v \in V \setminus \{ v_{0} \}$ and do not contain the starting vertex, does not change $\#_{S}(e)$ for any $e \in E(S^*)$. 
Note that all cycles that start at the starting vertex $v_0$ are interchangeable. 
Because, the cost of $S^*$ is defined as $\sum_{e\in E(S)}\cost(e)\cdot \#_{S}(e)$, the cost of $S^*$ is still minimal if $c_i$ and $c_j$ are exchanged. 
\end{proof}

Our goal is to prove that, due to the uniqueness of $S^*$, the edges from $\Ex$ cannot form a cycle, even in the underlying undirected graph. 
To this end, we first show that, in an optimal exploration sequence $S^*$, a single edge from $(v,w) \in \Ex$ cannot have a backward edge $(w,v)$ that is used in $S^*$. 

\begin{lemma}\label[lemma]{doppel}
   Let $G=(V,E)$ be a graph with the optimal exploration sequence $S^*$, then there is no vertex pair $v,w$ with $(v,w) \in \Ee$ and $(w,v) \in \Ex$. 
\end{lemma}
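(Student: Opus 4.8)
The plan is to argue by contradiction and to cancel the offending pair of traversals inside the Eulerian multigraph. So suppose there were a pair $v,w$ with $(v,w)\in\Ee$ and $(w,v)\in\Ex$, i.e.\ $\#_{S^*}(v,w)=1$ and $\#_{S^*}(w,v)=k\geq 2$. I would then construct a competing exploration sequence and derive a contradiction with the optimality (equivalently, the uniqueness) of $S^*$.

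Concretely, I would pass to the multigraph $M_{S^*}=(V,\Eu,\#_{S^*})$, which is Eulerian in the cyclic case, and delete the single copy of $(v,w)$ together with one of the $k$ copies of $(w,v)$, obtaining a multigraph $M'$. The first thing to verify is that $M'$ is still balanced: deleting $(v,w)$ lowers the out-degree of $v$ and the in-degree of $w$ by one each, while deleting a copy of $(w,v)$ lowers the out-degree of $w$ and the in-degree of $v$ by one each, so $d_\Delta(u)$ is unchanged for every $u\in V$. Moreover every vertex keeps a positive degree: balance already forces $d_{in}(v)=d_{out}(v)\geq k$ and $d_{in}(w)=d_{out}(w)\geq k$, so after the deletions these degrees are still at least $k-1\geq 1$, and no other vertex is affected.

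The step I expect to be the crux is connectivity, and this is exactly where the hypothesis $(w,v)\in\Ex$ is used. Since $k\geq 2$, at least $k-1\geq 1$ copies of $(w,v)$ survive in $M'$, so $v$ and $w$ remain directly joined; as the only edges removed are two of the parallel edges between $v$ and $w$, the underlying undirected graph of $M'$ stays connected. A connected balanced digraph is Eulerian, so $M'$ admits an Eulerian circuit through $v_0$, and because every vertex still has positive degree this circuit visits all of $V$. It therefore realizes a cyclic exploration sequence $S'$ with $\#_{S'}(v,w)=0$ and $\#_{S'}(w,v)=k-1$.

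Finally I would cash in the contradiction. Working with the perturbed costs $\cost'$ fixed by the oracle, in which $\cost'(e_i)=\cost(e_i)+1/m^{2i}>0$ for every edge, we get $\cost'(S')=\cost'(S^*)-\cost'(v,w)-\cost'(w,v)<\cost'(S^*)$, contradicting the optimality of $S^*$ for the perturbed instance; alternatively, since $S'$ already differs from $S^*$ in its traversal profile, it contradicts the assumed uniqueness of $S^*$. By \Cref{lem:trav-number} the perturbation is small enough that optimality for the perturbed graph coincides with optimality for $G$, so the contradiction is genuine and no such pair $v,w$ can exist.
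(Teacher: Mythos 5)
Your proof is correct and takes essentially the same route as the paper: assume the pair exists, decrement the traversal counts of $(v,w)$ and $(w,v)$ by one each, verify that the resulting multigraph stays balanced and connected (hence Eulerian), and extract a strictly cheaper exploration sequence, contradicting the optimality of $S^*$. The only cosmetic difference is that the paper certifies connectivity by invoking \Cref{zurueck}, while you argue directly from the surviving copy of $(w,v)$ together with the fact that a balanced, weakly connected digraph is Eulerian; both are sound.
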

\begin{proof}
  Assume there is a pair $v,w$ with $(v,w) \in \Ee$ and $(w,v) \in \Ex$. 
  We define the following traversal function $c \colon E \to \mN_0$ with respect to the optimal solution $S^*$: 
  \[c(e) = \left\{\begin{array}{ll}
  \#_{S^*}(e)-1 & \mbox{for }e=(v,w)\mbox{ or }e=(w,v) \text{,}\\
  \#_{ S^*}(e) & \mbox{otherwise.}
  \end{array}\right.\]
  Thus, only one edge $(v,w)$ becomes unusable. 
  But the condition $d_\Delta(v)=0$ still holds for all vertices $v \in V$. 
  The edge $(w,v)$ is still usable, thus by applying Lemma \ref{zurueck} we get also a path from $v$ to $w$. 
  So $M=(V,\Eu,c)$ is also an Eulerian graph, thus there is an exploration sequence $S$ with less cost than $S^*$. 
  This is a contradiction to the optimality of $S^*$. 
\end{proof}

We can use the same contradicting argument as in \Cref{doppel} to prove that a directed cycle of multiply used edges cannot be part of an optimal solution. 
If we assume such a cycle exists in an optimal solution, it is possible to create a cheaper exploration sequence by reducing the number of traversals of the edges in the cycle by one.

\begin{lemma}\label[lemma]{kein-gerichteter-kreis}
  Let $G=(V,E)$ be a graph with the optimal exploration sequence $S^*$. 
  Let $S=(v_0, \ldots, v_{end})$ be a simple cyclic contiguous subsequence of $S^*$.
  Then there is an edge $e \in S(E)$ with $\#_{S^*}(e) = 1$.
\end{lemma}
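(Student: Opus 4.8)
The plan is to argue by contradiction, reusing the cost-reduction idea already employed in the proof of \Cref{doppel}. Suppose, towards a contradiction, that every edge of the cycle is used more than once, i.e., $E(S)\subseteq\Ex$ and hence $\#_{S^*}(e)\ge 2$ for all $e\in E(S)$. Because $S$ is a \emph{simple} cycle, its edges are pairwise distinct, so I can safely lower the traversal count of each of them by exactly one without any edge being affected twice.

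Concretely, I would define a new traversal function $c\colon E\to\mN_0$ by
\[c(e)=\left\{\begin{array}{ll}\#_{S^*}(e)-1 & \mbox{if }e\in E(S)\text{,}\\ \#_{S^*}(e) & \mbox{otherwise,}\end{array}\right.\]
exactly in the style of \Cref{doppel}. The first thing to check is that the balance condition $d_\Delta(v)=0$ is preserved under $c$: since $S$ is a simple directed cycle, every vertex lying on $S$ has exactly one incoming and one outgoing cycle edge, so it loses one incoming and one outgoing traversal, while vertices off the cycle are untouched; hence the in/out balance stays $0$ everywhere.

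Next I would verify that the multigraph $M=(V,\Eu,c)$ remains Eulerian. The key observation is that the assumption $\#_{S^*}(e)\ge 2$ guarantees $c(e)\ge 1$ for every cycle edge, so no edge drops out of the used-edge set: the support of $c$ is still exactly $\Eu$. Consequently $M$ is still strongly connected (as was $M_{S^*}$) and still covers every vertex of $V$, and together with $d_\Delta\equiv 0$ this makes $M$ Eulerian. An Euler tour of $M$ beginning at the start vertex therefore yields a cyclic exploration sequence $S'$ visiting all of $V$ whose cost is
\[\cost(S')=\sum_{e\in E}c(e)\cdot\cost(e)=\cost(S^*)-\sum_{e\in E(S)}\cost(e).\]
Because every (perturbed) edge cost is strictly positive and $E(S)$ is nonempty, this gives $\cost(S')<\cost(S^*)$, contradicting the optimality of $S^*$. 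Hence some edge of the cycle must satisfy $\#_{S^*}(e)=1$.

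The routine parts are the degree bookkeeping and the cost computation; the one point deserving care is the Eulerianness of $M$, i.e., confirming that reducing multiplicities never deletes an edge from $\Eu$ and therefore preserves both strong connectivity and full vertex coverage. This is exactly where the hypothesis $\#_{S^*}(e)>1$ (rather than merely $>0$) is essential, and it is the only structural obstacle in the argument.
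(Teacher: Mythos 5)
Your proof is correct and takes essentially the same approach as the paper's own argument: the identical traversal function $c$ that decrements each cycle edge by one, the same check that $d_\Delta\equiv 0$ and connectivity are preserved (using $\#_{S^*}(e)>1$ so that no edge drops out of $\Eu$), and the same contradiction via a strictly cheaper Eulerian exploration sequence. Your explicit bookkeeping of the in/out balance on the simple cycle and the remark about strictly positive (perturbed) edge costs merely spell out details the paper leaves implicit.
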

\begin{proof}
  Let $S$ be a simple cyclic search sequence as seen in \Cref{bild-kreise} (a). 
  Assume that $\#_{S^*}(e) > 1$ holds for all edges in $E(S)$. 
  We now define a traversal function on $c \colon E \to \mN_0$ with respect to the optimal solution $S^*$: 
  \[c(e) = \left\{\begin{array}{ll}
  \#_{S^*}(e)-1 & \mbox{for }e\in E(S) \text{,}\\
  \#_{S^*}(e) & \mbox{otherwise.}
\end{array}\right.\]
  
  Note that $\#_{S^*}(e)-1>0$ for $e\in E(S)$, thus the multigraph $M=(V,E,c)$ is still connected. 
  Furthermore, the condition $d_\Delta(v)=0$ also holds for all vertices $v \in V$. 
  Thus, the graph $M=(V,E,c)$ is Eulerian and has an exploration sequence $S'$ with lower cost than the exploration sequence $S^*$. 
  This is a contradiction to the minimality of the cost of $S^*$.
\end{proof}

  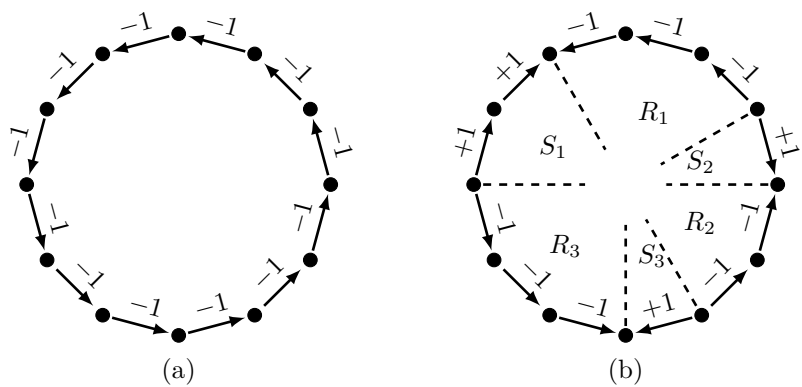
\begin{figure}[htb]
   \begin{center}
   \begin{tikzpicture}[>=latex]
     \foreach \x in {0,1,...,11} 
     {\draw (\x*30-30:2) node (\x) {}; \filldraw (\x) circle (\KnotenDiam);}
     \foreach \x in {0,1,...,11} {\TakeInt{A}{mod(\x+1,12)}\draw[->,line width=1] (\x) -- (\ResultA) node[above,sloped,midway] {$-1$};}
     
     \draw (0,-2) node[label={[label distance=0.05cm]270:(a)}] (0) {};
   \end{tikzpicture}\hskip10mm\begin{tikzpicture}[>=latex]
     \foreach \x in {0,1,...,11} {\draw (\x*30-30:2) node (\x) {};\filldraw (\x) circle (\KnotenDiam);}
     \foreach \x in {0,2,3,4,7,8,9,11} {\TakeInt{A}{mod(\x+1,12)}\draw[->,line width=1] (\x) -- (\ResultA) node[above,sloped,midway] {$-1$};}
     \foreach \x in {1,5,6,10} {\TakeInt{A}{mod(\x+1,12)}\draw[->,line width=1] (\ResultA) -- (\x) node[above,sloped,midway] {$+1$};}
     
      \draw[-,dashed,line width=1] (1) -- (0,0) node[above,sloped,midway] {}; 
      \draw (0,0) node[label={[label distance=0.55cm]3:$S_2$}] (blub) {};
     
      \draw[-,dashed,line width=1] (2) -- (0,0) node[above,sloped,midway] {};
      \draw (0,0) node[label={[label distance=0.55cm]87.9:$R_1$}] (blub) {};
      
      \draw[-,dashed,line width=1] (5) -- (0,0) node[above,sloped,midway] {};
      \draw (0,0) node[label={[label distance=0.55cm]160:$S_1$}] (blub) {};
      
      \draw[-,dashed,line width=1] (7) -- (0,0) node[above,sloped,midway] {};
      \draw (0,0) node[label={[label distance=0.55cm]230:$R_3$}] (blub) {};
      
      \draw[-,dashed,line width=1] (10) -- (0,0) node[above,sloped,midway] {};
      \draw (0,0) node[label={[label distance=0.55cm]273:$S_3$}] (blub) {};
      
      \draw[-,dashed,line width=1] (11) -- (0,0) node[above,sloped,midway] {};
      \draw (0,0) node[label={[label distance=0.55cm]340:$R_2$}] (blub) {};
     
     \filldraw (0,0) node[circle=3mm, draw, fill=white, color=white, minimum size=30pt] (blub) {}; 
     
     \draw (0,-2) node[label={[label distance=0.05cm]270:(b)}] (0) {};
   \end{tikzpicture}
   \end{center}
    \caption{Cycles in exploration sequences.}\label{bild-kreise}
  \end{figure}

Our next observation regarding the edges from $\Ex$ is that they cannot even form a cycle in the underlying undirected graph. 
For a given optimal exploration sequence $S^*$, we look at the underlying undirected graph and separate it into cycles. 
For each cycle, we summarize equally directed edges into sequences $S_i$ and $R_i$, as presented in \Cref{bild-kreise} (b). 
If we increment the traversal numbers for the sequences $S_i$ and decrease them for the sequences $R_i$, the resulting graph is still Eulerian. 
This leads to the following lemma.

\begin{lemma}\label[lemma]{kein-ungerichteter-kreis}
  Let $G=(V,E)$ be a graph with exploration sequence $S^*$ of minimal cost. 
  Let $S_i=(v^i,...,,u^i)$ and $R_i=(w^i,...,u^i)$, with $1\leq i\leq k$, be $2k$ simple search subsequences of $S^*$ with $w^i=v^{i+1}$ ($1\leq i< k$) and $v^1=w^k$. 
  Then there is an edge $e \in S_i \cup R_i$ with $\#_{S^*}(e) = 1$. 
\end{lemma}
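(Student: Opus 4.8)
The plan is to mirror the flow-perturbation argument of \Cref{doppel} and \Cref{kein-gerichteter-kreis}, now adapted to an \emph{undirected} cycle whose edges split into two orientation classes. First I would reinterpret the $2k$ subsequences as one simple cycle of the underlying undirected graph. Using the gluing conditions $w^i=v^{i+1}$ and $v^1=w^k$, the cycle reads $v^1, S_1, u^1, R_1, v^2, S_2, u^2, R_2, \ldots, v^k, S_k, u^k, R_k, v^1$, where each $S_i$ runs from $v^i$ to $u^i$ and each $R_i$ runs from $w^i=v^{i+1}$ to $u^i$. Hence every $u^i$ is a \emph{local sink} (the last edges of both $S_i$ and $R_i$ point into it) and every source $v^{i+1}=w^i$ emits the first edges of both $S_{i+1}$ and $R_i$; this is exactly the decomposition into the two orientation classes $\{S_i\}$ and $\{R_i\}$ drawn in \Cref{bild-kreise}~(b). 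Because the cycle is simple, the named corner vertices are pairwise distinct and the segments are pairwise edge-disjoint. For the contradiction I assume the negation of the claim, i.e. $\#_{S^*}(e)\ge 2$ for every $e\in\bigcup_i(S_i\cup R_i)$.

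Next I would introduce two perturbations of $\#_{S^*}$ that leave every edge outside the cycle untouched: in option $A$ set $c(e)=\#_{S^*}(e)+1$ on all $S_i$-edges and $c(e)=\#_{S^*}(e)-1$ on all $R_i$-edges, and in option $B$ swap the two signs. Under the contradiction hypothesis every modified edge keeps value $\ge 1$, so the set of used edges, and thus the strong connectivity of the induced multigraph, is unchanged. The key verification is that $d_\Delta$ stays $0$ everywhere: at an internal vertex of some $S_i$ (resp.\ $R_i$) one incoming and one outgoing edge change by the same amount, so the net shift is $0$; at a sink $u^i$ the two incoming cycle edges (the last edge of $S_i$ and of $R_i$) receive opposite changes; and at a source $v^{i+1}$ the two outgoing cycle edges (the first edge of $S_{i+1}$ and of $R_i$) receive opposite changes. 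In all three cases the in- and out-degree shifts cancel. Thus $(V,\Eu,c)$ remains a balanced, strongly connected multigraph, i.e.\ Eulerian, and admits a cyclic exploration sequence $S'$ visiting all of $V$, exactly as in \Cref{kein-gerichteter-kreis}.

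Finally I would compare costs. Since each option shifts the count of every cycle edge by exactly $\pm 1$, the change is $\cost(S')-\cost(S^*)=\pm\bigl(\sum_{e\in\bigcup_i S_i}\cost(e)-\sum_{e\in\bigcup_i R_i}\cost(e)\bigr)$, with options $A$ and $B$ giving opposite signs. If the two sums differ, one option strictly lowers the cost, contradicting the optimality of $S^*$. If they are equal, both options yield an optimal $S'$ with $\#_{S'}(e)\neq\#_{S^*}(e)$ on the cycle edges, contradicting the uniqueness of $S^*$ fixed by the oracle. Either way we reach a contradiction, so some edge of $\bigcup_i(S_i\cup R_i)$ must satisfy $\#_{S^*}(e)=1$.

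The hard part will be the orientation bookkeeping at the turning vertices: I must pin down the roles of $S$ and $R$ as the two orientation classes precisely enough that the $+1$ and $-1$ contributions provably cancel at every source and sink, regardless of the lengths of the individual runs. A secondary point to treat carefully is ruling out degeneracies (a segment consisting of a single vertex, or antiparallel edges); appealing to the simplicity of the cycle and to the segments being maximal same-orientation runs both excludes these and guarantees that at least one edge genuinely changes its count, which is precisely what makes the uniqueness tie-break effective.
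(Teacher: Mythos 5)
Your proposal is correct and follows essentially the same route as the paper's proof: assume every cycle edge is traversed at least twice, shift the traversal counts by $+1$ on the $S_i$-edges and $-1$ on the $R_i$-edges (or with swapped signs), observe that the perturbed multigraph is still Eulerian, and derive a contradiction with the optimality of $S^*$. Your write-up is in fact more careful than the paper's in two places: you explicitly verify that $d_\Delta$ remains $0$ at the turning vertices $u^i$ and $w^i=v^{i+1}$, and you settle the case of equal segment costs by appealing to the oracle-fixed uniqueness of $S^*$, whereas the paper hides that case behind a \emph{without loss of generality} assumption of strict inequality.
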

\begin{proof}
  Assume that $\#_{S^*}(e) > 1$ holds for all edges $e \in \bigcup_{1\leq i\leq k}E(S_i)\cup E(R_i)$. 
  Without loss of generality, we assume that 
  \begin{equation}\label{eq1}
  \sum_{1\leq i\leq k}\cost(S_i) < \sum_{1\leq i\leq k}\cost(R_i) \text{.}
  \end{equation}
Again, a cheaper exploration sequence can be created by choosing the cheap sequences $S_i$ to a vertex $v$ more often than the expensive $R_i$. 
We define now a traversal function on $c \colon E\to \mN_0$ with respect to the optimal solution $S^*$: 
  \[c(e) = \left\{\begin{array}{ll}
  \#_{S^*}(e)+1 & \mbox{for }e\in \bigcup_{1\leq i\leq k}E(S_i) \text{ ,}\\
  \#_{S^*}(e)-1 & \mbox{for }e\in \bigcup_{1\leq i\leq k}E(R_i) \text{ ,}\\
  \#_{S^*}(e) & \mbox{otherwise.}\\
  \end{array}\right.\]
  Due to \eqref{eq1}, the newly created solution must be cheaper. 
  So, the multigraph $M=(V,E,c)$ is Eulerian and has an exploration sequence $S'$ with $\cost(S')< \cost(S^*)$. 
  This is a contradiction to the minimality of the cost of $S^*$. 
\end{proof}

We showed in \Cref{doppel,kein-gerichteter-kreis,kein-ungerichteter-kreis} that the information about the usage of edges in an unique optimal solution $S^*$ provides some structural information. 
Because the graph induced by $\Ex$ is cycle-free, it is a forest and we will from now on call it $F$.
This holds even for the underlying undirected graph. 
As next step, we show that it is possible to derive the exact traversal number for every edge from this induced structure. 

\begin{lemma}\label[lemma]{berechne-mehrfach}
  Let $G=(V,E)$ be a graph with exploration sequence $S^*=(v_0, \ldots,v_{end})$ of minimal cost. 
  There is an algorithm $\mathcal{A}(V,\En,\Ee,\Ex)$ which computes, for each edge $e\in E$, the number of visits by $S^*$.
\end{lemma}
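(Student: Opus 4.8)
The plan is to reconstruct the unknown multiplicities from the two structural facts already in hand: the Eulerian flow-conservation identity $d_\Delta(v)=0$ for every vertex in the cyclic case, and the fact (from \Cref{doppel,kein-gerichteter-kreis,kein-ungerichteter-kreis}) that the edges of $\Ex$ form a forest $F$ even in the underlying undirected graph. Outside $\Ex$ there is nothing to compute: the algorithm $\mathcal{A}$ sets $\#_{S^*}(e)=0$ for every $e\in\En$ and $\#_{S^*}(e)=1$ for every $e\in\Ee$. The only unknown traversal numbers are those of the edges of $\Ex$, i.e.\ the edges of $F$.

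First I would isolate the unknowns in the conservation condition. For each vertex $v$, let $b(v)$ be the number of once-used edges leaving $v$ minus the number of once-used edges entering $v$ (edges of $\En$ contribute nothing). Since the edges of $\En$ and $\Ee$ already have fixed multiplicities, $d_\Delta(v)=0$ rewrites as
\[
\sum_{(v,x)\in\Ex}\#_{S^*}\bigl((v,x)\bigr)\;-\;\sum_{(x,v)\in\Ex}\#_{S^*}\bigl((x,v)\bigr)\;=\;-\,b(v),
\]
a linear equation in which the only unknowns are the multiplicities of the $\Ex$-edges incident to $v$.

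The key step is to solve this system by peeling leaves off $F$. As long as $F$ still contains an edge, it has a vertex $\ell$ of undirected degree exactly one in $F$; the equation above at $\ell$ then contains a single unknown, namely the multiplicity of the unique $\Ex$-edge $e$ incident to $\ell$, because every other edge at $\ell$ lies in $\En\cup\Ee$ and is already fixed. Hence $\#_{S^*}(e)$ is uniquely determined. The algorithm therefore repeatedly selects such a leaf, solves its one-variable equation, records the value, and deletes $e$ from $F$, treating its multiplicity as known from then on. Deleting an edge from a forest again yields a forest, so the invariant persists and after $|\Ex|$ iterations every edge has a recorded traversal number.

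Correctness and termination are then immediate: a nonempty finite forest always has a degree-one vertex, so the loop makes progress until $F$ is empty, and since the true optimal sequence $S^*$ satisfies every conservation equation, the value forced at each leaf equals the genuine $\#_{S^*}(e)$, making the reconstruction exact. The main obstacle is not the algorithmic idea but making the leaf-elimination argument airtight — verifying that each leaf really contributes exactly one unknown and that the partial assignment stays consistent — which rests entirely on the forest property and the identity $d_\Delta(v)=0$. For the non-cyclic variant one first augments $G_{S^*}$ with the edge $(v_0,v_{end})$ to restore $d_\Delta\equiv 0$, as noted above, and then runs the same procedure.
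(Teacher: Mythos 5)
Your proposal is correct and follows essentially the same route as the paper: both reduce the problem to the edges of $\Ex$, invoke \Cref{doppel,kein-gerichteter-kreis,kein-ungerichteter-kreis} to get the forest structure, and then iteratively peel leaves of the forest of still-unknown edges, solving the one-unknown flow-conservation equation $d_\Delta(v)=0$ at each leaf. Your linear-system framing and the explicit remark on augmenting with $(v_0,v_{end})$ in the non-cyclic case are just presentational variations of the paper's bottom-up recursion.
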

\begin{proof}
  	Note that the algorithm $\mathcal A$ a priori only knows the exact number of visits for an edge $e$ if $\#_{S^*}(e)\leq 1$. 
  	\Cref{kein-gerichteter-kreis} and \Cref{kein-ungerichteter-kreis} show that the graph $F = (V,\Ex)$ is a forest, even if the edges are undirected. 
  	Furthermore, \Cref{doppel} shows that, if edges are used multiple times in one direction, they are not used at all in the reverse direction. 
  
 	 We now describe a recursive algorithm which computes the exact number of visits for the edges from $\Ex$. 
 	 A tree $T$ in the forest $F$ must have at least one leaf. 
 	 A leaf $v$ in $F$ has exactly one edge that is used multiple times and the other adjacent edges in $G$ are used exactly once or never. 
	 Let $U$ be the set of those edges from $\Ex$, where the number of traversals is unknown. 
	 As long as $U \not= \emptyset$, the graph $F^\prime = (V,U)$ has at least one leaf $v$. 
	 Let $e \in U$ be the only directed edge attached to $v$ in $F^\prime$. 
	 For all other edges $e^\prime \in E(G)$ attached to $v$, we already know the number of visits, i.e., $e^\prime \in \En \cup \Ee \cup \Ex \setminus U$. 
	 The algorithm $\mathcal A$ sums up the traversal numbers of all other incoming and outgoing edges for $v$, i.e., 
  \[\begin{array}{rcl}
  c_{in} & = & \sum_{(x,v)\in E \setminus e}\#_{S^*}(x,v) \text{,}\\
  c_{out} & = & \sum_{(v,x)\in E \setminus e}\#_{S^*}(v,x)\text{.}
  \end{array}\]
	Thus, the usage of the edge $e$ can be calculated, because $d_\Delta(v) = 0$ holds for $G_{S^*}=(V,\Eu)$. 
	Therefore, the number of traversals is the difference between incoming and outgoing edges from $\Ee$. 
	Only two cases need to be considered. 
	If $e=(w,v)$, then $c_{in} < c_{out}$ and $\mathcal A$ can compute its traversal number $\#_{S^*}(e)=c_{out}-c_{in}$.
	Analogously, for $e=(v,w)$, we have $\#_{S^*}(e)=c_{in}-c_{out}$. 
	
	The result can be used in the next level of the tree to calculate analogously the number of traversals for multiply used edges. 
	If the number of traversals for all edges $e \in \Ex$ attached to a leaf of $F^\prime = (V,U)$ are computed, we update $U' = U \setminus \{ e \}$. 
	Now, we have again new leaves in the tree $F^{\prime \prime}=(V,U')$ for which exactly one adjacent edge is part of $U'$. 
	For all other edges, the precise number of traversals is known because they are from $\En$, $\Ee$ or $\Ex \setminus U$. 
	
	Repeating this bottom-up approach for the tree $F^{\prime \prime}$ allows us to compute the exact usage of every edge from $\Ex$. 
\end{proof}
\section{Results for Known Graphs}
\label{sec:known}
If the structure of the graph is known to the algorithm, the existence of the cost function makes a difference. 
If there is a unit cost function, the algorithm can compute every feasible cyclic or non-cyclic exploration sequence beforehand and chooses the cheapest as solution. 

\begin{theorem}\label{bekannt-gerichtet}
  There exists an online algorithm which solves the cyclic and the non-cyclic graph exploration problem using no advice on known directed and known undirected graphs, if all edges have unit costs. 
\end{theorem}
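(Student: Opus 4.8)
The plan is to observe that, under the stated hypotheses, the algorithm already possesses every piece of information an oracle could provide, so that an optimal exploration sequence can be precomputed and then merely replayed by the explorer. Because the structure of $G$ is known and all edges carry unit cost, the weighted instance $(G,\cost)$ is completely determined from the algorithm's viewpoint: the cost of any search sequence is simply its length, i.e.\ the number of edge traversals counted with multiplicity. There is thus nothing about the input that the algorithm must learn online, and hence no advice is needed.

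To turn this into an algorithm, I would first bound the search space so that the optimization becomes an enumeration over a finite set. By \Cref{lem:trav-number}, every optimal exploration sequence traverses each of the $m$ edges at most $n$ times, hence has total length at most $nm$. The algorithm therefore enumerates all search sequences of length at most $nm$ that start at $v_0$ and satisfy $V(S)=V$, computes the (unit) cost of each, and fixes one of minimum cost as its target sequence; it then moves the explorer edge by edge along this sequence, which is always possible since each pair of consecutive vertices forms an edge of $G$. This single scheme covers all four cases of the statement: the directed/undirected distinction only affects which traversals are feasible, and the cyclic/non-cyclic distinction only adds or drops the requirement $v_0=v_{end}$, so in each case we are selecting a cheapest feasible sequence from a finite candidate set. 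In every case zero advice bits are read.

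The one point that genuinely requires care, and the step I expect to be the main obstacle, is justifying that this offline optimization is a legitimate move in the model even though it is computationally hard --- finding a minimum-cost exploration tour in a known graph is \textsf{NP}-hard in general, since it subsumes TSP-type problems. Here I would invoke the standard convention of the advice-complexity framework: the model bounds only the number of advice bits that the online algorithm reads, not its running time, so the brute-force enumeration above is an admissible (if inefficient) algorithm. With that observed, the claim of zero advice follows immediately and uniformly for directed and undirected graphs and for both cyclic and non-cyclic exploration.
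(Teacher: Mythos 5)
Your proposal is correct and follows essentially the same route as the paper: since the graph is known and the costs are unit, the algorithm can brute-force an optimal exploration sequence offline (unbounded computation being admissible in the advice model) and simply replay it, reading zero advice bits. In fact, your version is slightly more careful than the paper's one-line argument, which speaks of enumerating ``Hamiltonian cycles/tours'' even though an optimal exploration sequence may revisit vertices; your bound of $nm$ on the sequence length via \Cref{lem:trav-number} makes the enumeration genuinely well-defined.
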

\begin{proof}
  Because the algorithm is not limited in its computing power, it can compute all possible Hamiltonian cycles, or Hamiltonian tours, respectively, by brute force. 
  Obviously, this is possible for the directed and undirected case. 
  It chooses the solution which results in the shortest exploration sequences. 
\end{proof}

But, if the cost function is not known, the algorithm will not know which of the feasible exploration sequences will be the cheapest. 
The algorithm from \Cref{berechne-mehrfach} can be used to compute the optimal exploration sequence, if the optimal number of traversals for every edge and the structure of the graph is known. 
Thus, oracle and algorithm agree on a way to enumerate the edges and the oracle encodes for every single edge if it belongs to the set $\En$, $\Ee$, or $\Ex$. 
With this approach, we can solve the cyclic exploration problem for any directed graph with cost function. 

\begin{theorem}\label{bekannt-gerichtet-a}
  There exists an online algorithm which solves the cyclic graph exploration problem using $\lceil\log(3) m\rceil$ bits of advice on known directed graphs. 
\end{theorem}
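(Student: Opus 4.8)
The plan is to reduce the problem to the reconstruction algorithm of \Cref{berechne-mehrfach} by spending exactly enough advice to communicate the partition of the edge set. Since the graph is known to the algorithm, oracle and algorithm can agree a priori on a fixed enumeration $e_1,\ldots,e_m$ of the $m$ edges. For the unique optimal exploration sequence $S^*$ fixed by the oracle, every edge lies in exactly one of the three sets $\En$, $\Ee$, $\Ex$. Hence the membership information is a string of length $m$ over a ternary alphabet, of which there are $3^m$, and encoding one of them in binary costs $\lceil\log(3^m)\rceil = \lceil\log(3)\,m\rceil$ bits. This is the entire advice.

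After reading these $\lceil\log(3)\,m\rceil$ bits, the algorithm recovers the sets $\En$, $\Ee$, $\Ex$. It then runs the algorithm $\mathcal{A}(V,\En,\Ee,\Ex)$ from \Cref{berechne-mehrfach}, which computes the exact number of traversals $\#_{S^*}(e)$ for every edge $e\in E$ --- crucially, without ever consulting the (possibly unknown) cost function. This reconstructs the multiset of used edges, i.e., the multigraph $M_{S^*}=(V,\Eu,\#_{S^*})$.

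It remains to turn the traversal numbers into an actual optimal cyclic exploration sequence. Since we are in the cyclic case, $M_{S^*}$ is Eulerian, so the algorithm computes an Eulerian circuit starting and ending at $v_0$; this is possible offline because the graph and all traversal numbers are now known. The circuit is a cyclic search sequence that uses each edge $e$ exactly $\#_{S^*}(e)$ times, so its cost equals $\sum_{e}\cost(e)\cdot\#_{S^*}(e)=\cost(S^*)$, and it visits every vertex, because $M_{S^*}$ is strongly connected and covers every vertex incident to a used edge (which, as $S^*$ is an exploration sequence, is all of $V$). Hence the output is an optimal cyclic exploration sequence.

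The heavy lifting has already been done: the main difficulty --- deducing the exact traversal numbers from only the ternary partition and with no access to edge costs --- is resolved by the structural \Cref{doppel,kein-gerichteter-kreis,kein-ungerichteter-kreis} (which make $\Ex$ a forest) together with \Cref{berechne-mehrfach}. What remains specific to this theorem is therefore merely the routine counting argument for the $\lceil\log(3)\,m\rceil$-bit ternary encoding and the observation that any Eulerian circuit of $M_{S^*}$ realizes the optimal cost, so I expect no real obstacle here beyond stating the encoding and invoking the earlier lemma.
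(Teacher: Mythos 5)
Your proposal is correct and follows essentially the same route as the paper's proof: encode the ternary classification of the $m$ edges into $\En$, $\Ee$, $\Ex$ using $\lceil\log(3)m\rceil$ bits (the paper phrases this as reading the smallest $k$ with $2^k>3^m$ bits and decoding to a ternary string), then invoke \Cref{berechne-mehrfach} to recover all traversal numbers, and finally output an Eulerian tour of $M_{S^*}$. Your explicit verification that the Eulerian circuit realizes cost $\cost(S^*)$ and visits all vertices is a slightly more detailed rendering of the paper's closing remark, but not a different argument.
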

\begin{proof}
The idea is that the edges from $E$ are split into three disjoint sets ($E=\En \cup \Ee \cup \Ex$) depending on the number of their visits during an optimal exploration sequence: 
The advice will indicate for each edge the corresponding class. 
Because the optimal solution $S^*$ is unique, \Cref{kein-gerichteter-kreis,kein-ungerichteter-kreis} are applicable. 

The online algorithm, which knows the directed graph $G$, proceeds as follows: 
\begin{enumerate}
  \item Compute the smallest $k$ with $2^k > 3^m$ and read $k$ bits from the advice tape. Interpret these $k$ bit as a binary number and transform it
    to a ternary number $a$ with $m$ digits. 
  \item Use these $m$ digits to classify all edges from $E$ into the classes $\En$, $\Ee$ and $\Ex$. 
  \item Use the algorithm from \Cref{berechne-mehrfach} to compute the values $\#_{S^*}(e)$ for the edges from $\Ex$. 
  \item Knowing now all values $\#_{S^*}(e)$ for all edges $e\in E$, it is possible to compute the optimal exploration sequence (the given Eulerian tour).\qedhere \end{enumerate} 
\end{proof}

If the algorithm has to solve the non-cyclic graph exploration problem with a cost function for the edges, it needs additional information about where the exploration sequence ends. 
Thus, the oracle encodes the identifier of the last vertex that is visited in the optimal exploration sequence, which increases the number of advice bits by $\log(n)$. 

\begin{theorem}\label{bekannt-gerichtet-b}
  There exists an online algorithm which solves the non-cyclic graph exploration problem using $\lceil\log(3) m\rceil+\lceil\log n\rceil$ bits of advice on known directed graphs.
\end{theorem}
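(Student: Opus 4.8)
The plan is to reduce the non-cyclic problem to the cyclic one of \Cref{bekannt-gerichtet-a} by closing the optimal path into a cycle. As observed in \Cref{sec:basic}, a non-cyclic optimal exploration sequence $S^*$ induces a multigraph $M_{S^*}$ with $d_\Delta(v_0)=1$, $d_\Delta(v_{end})=-1$, and $d_\Delta(v)=0$ for all remaining vertices; adding the closing edge from $v_{end}$ to $v_0$ restores $d_\Delta\equiv 0$ everywhere and makes $M_{S^*}$ Eulerian. The only quantity the algorithm cannot derive on its own is the terminal vertex $v_{end}$, so the oracle is forced to communicate it explicitly, and this is exactly the source of the extra $\lceil\log n\rceil$ bits.

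Concretely, I would have the oracle first write the identifier of $v_{end}$ with $\lceil\log n\rceil$ bits and then append exactly the ternary edge classification of \Cref{bekannt-gerichtet-a}, i.e.\ $\lceil\log(3)m\rceil$ bits assigning each of the $m$ edges of $G$ to $\En$, $\Ee$, or $\Ex$. The algorithm reads $v_{end}$, builds an augmented graph $G'$ by inserting a single virtual edge from $v_{end}$ to $v_0$, and declares this edge to lie in $\Ee$; since it is traversed exactly once as the closing edge, it costs no advice of its own, so the total stays at $\lceil\log(3)m\rceil+\lceil\log n\rceil$. On $G'$ the sequence $S^*$ extended by the closing edge is cyclic and balanced, the set $\Ex$ is untouched so the forest structure of \Cref{doppel,kein-gerichteter-kreis,kein-ungerichteter-kreis} persists, and \Cref{berechne-mehrfach} computes $\#_{S^*}(e)$ for every edge. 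Extracting an Eulerian tour of $G'$ rotated to begin at $v_0$ and to end with the virtual edge, and then deleting that last traversal, yields an Eulerian path from $v_0$ to $v_{end}$, i.e.\ the optimal non-cyclic exploration sequence.

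The main thing to verify is that the augmentation does not disturb the inherited machinery: adding an $\Ee$-edge must fix the balance precisely at the two unbalanced vertices while preserving $d_\Delta=0$ elsewhere, and it must leave $\Ex$---and hence the forest $F$ on which \Cref{berechne-mehrfach} recurses---unchanged, both of which follow because the inserted edge is used only once. A minor case to dispatch separately is $v_{end}=v_0$, where the optimum is already cyclic, no closing edge is needed, and \Cref{bekannt-gerichtet-a} applies directly; the encoding of $v_{end}$ still fits within the claimed $\lceil\log n\rceil$ bits in this degenerate case.
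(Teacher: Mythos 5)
Your proposal is correct and follows essentially the same route as the paper: spend $\lceil\log n\rceil$ bits on the identity of $v_{end}$, reuse the $\lceil\log(3)m\rceil$-bit ternary classification of \Cref{bekannt-gerichtet-a}, and recover the traversal numbers via \Cref{berechne-mehrfach}. The paper phrases the degree repair as the algorithm knowing $d_{in}(v_{end}) = d_{out}(v_{end})-1$ rather than inserting an explicit virtual closing edge, but these are equivalent (the paper itself uses your virtual-edge formulation in \Cref{same-a}), so the two proofs coincide in substance.
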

\begin{proof}
  First, the algorithm reads $\lceil\log n\rceil$ bits to know the final vertex $v_{end}$ in the optimal exploration sequence $S^*$. 
  Note that the algorithm now knows that $d_{in}(v_{end}) = d_{out}(v_{end}) -1$. 
  Then it reads the same advice as in \Cref{bekannt-gerichtet} and uses the method from the proof of \Cref{berechne-mehrfach} to compute also the values $\#_{S^*}(e)$ for the edges from $\Ex$.
\end{proof}

Analogous proofs can be given for the undirected variations of the problems from \Cref{bekannt-gerichtet-a,bekannt-gerichtet-b}.
The algorithm virtually separates every edge into two directed edges. 
These virtual directed edges can be used once, multiple times or never in an optimal exploration sequence. 
From \Cref{doppel}, we know that, if one virtual directed edge is used multiple times, the other one cannot be used. 
Therefore, an undirected edge can be classified into six cases to denote the number of traversals for the virtual directed edges. 
Either both virtual directed edges are never used or used once or exactly one directed edge is used once or multiple times. 

\begin{theorem}\label{bekannt-ungerichtet}
  There exists an online algorithm which solves the cyclic graph exploration problem using $\lceil\log(6) m\rceil$ bits of advice on known undirected graphs. 
  For the non-cyclic graph exploration problem, the algorithm needs $\lceil\log(6) m\rceil+\lceil\log n\rceil$ bits of advice. 
\end{theorem}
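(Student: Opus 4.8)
The plan is to reduce the undirected problem to the directed case treated in \Cref{bekannt-gerichtet-a,bekannt-gerichtet-b} by splitting every undirected edge $\{u,v\}$ into two oppositely oriented virtual directed edges $(u,v)$ and $(v,u)$. First I would fix the unique optimal exploration sequence $S^*$ and record, for each undirected edge, into which of the classes $\En$, $\Ee$, $\Ex$ each of its two virtual directions falls. A priori this allows nine combinations, but several are infeasible: by \Cref{doppel} a virtual edge in $\Ex$ cannot be paired with its reverse in $\Ee$, and a pair of oppositely oriented multiply-used edges would form a directed $2$-cycle of edges from $\Ex$, which is excluded by \Cref{kein-gerichteter-kreis}. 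Hence the configurations $(\Ex,\Ee)$, $(\Ee,\Ex)$ and $(\Ex,\Ex)$ are ruled out, leaving exactly the six admissible states announced before the theorem: both directions unused, both used once, exactly one direction used once (two sub-cases), or exactly one direction used multiple times (two sub-cases).

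Given this, the oracle encodes, for a fixed agreed enumeration of the $m$ undirected edges, one of six symbols per edge; collecting these into a base-$6$ number with $m$ digits and converting it to binary costs $\lceil\log(6) m\rceil$ bits, matching the bound. Reading these bits, the algorithm recovers the $\En$/$\Ee$/$\Ex$ membership of every virtual directed edge. It then runs the reconstruction algorithm of \Cref{berechne-mehrfach} on the virtual directed graph: this is legitimate because the multiply-used virtual edges, viewed as undirected edges, still form the forest $F$ guaranteed by \Cref{kein-gerichteter-kreis,kein-ungerichteter-kreis} (each multiply-used undirected edge contributes only one multiply-used virtual direction), and the degree-balance condition $d_\Delta(v)=0$ holds vertex-wise since $M_{S^*}$ is Eulerian in the cyclic case. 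The leaf-peeling argument of \Cref{berechne-mehrfach} therefore yields $\#_{S^*}(e)$ for every edge, and the resulting Eulerian tour is the optimal cyclic exploration sequence, exactly as in \Cref{bekannt-gerichtet-a}.

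For the non-cyclic variant I would prepend $\lceil\log n\rceil$ advice bits encoding the identifier of the terminal vertex $v_{end}$, as in \Cref{bekannt-gerichtet-b}. Knowing $v_0$ and $v_{end}$, the algorithm uses $d_\Delta(v_0)=1$ and $d_\Delta(v_{end})=-1$ (and $d_\Delta(v)=0$ elsewhere) when balancing in- and out-degrees during the run of \Cref{berechne-mehrfach}, and then extracts the optimal Eulerian path. This gives the claimed total of $\lceil\log(6) m\rceil+\lceil\log n\rceil$ bits.

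The main obstacle is the counting step: I must be certain that the three forbidden configurations really are the only ones eliminated and that all six remaining states are simultaneously realizable and unambiguously decodable. In particular, care is needed to confirm that \Cref{berechne-mehrfach} applies verbatim to the virtual directed graph — that the forest structure and the degree balance transfer cleanly through the edge-splitting — since the whole reduction hinges on being able to recover the exact multiplicities from the six-way per-edge labels alone.
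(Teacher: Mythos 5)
Your proposal is correct and follows essentially the same route as the paper: splitting each undirected edge into two virtual directed edges, ruling out the infeasible class pairs to arrive at the six admissible states, encoding one base-$6$ symbol per edge for $\lceil\log(6) m\rceil$ bits, recovering all multiplicities via \Cref{berechne-mehrfach}, and adding $\lceil\log n\rceil$ bits for the terminal vertex in the non-cyclic case. The only cosmetic difference is that you invoke \Cref{kein-gerichteter-kreis} to exclude the pair of oppositely oriented multi-edges, whereas the paper attributes all exclusions to \Cref{doppel}; your attribution is, if anything, the more precise one.
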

\begin{proof}
  When the given graph $G=(V,E)$ is undirected, the online algorithm does not know in which direction an edge $\{v,w\}$ is used. 
  Using \Cref{doppel}, the following traversals are possible for any edge $\{v,w\}$:
  \begin{enumerate}
  \item $\#_{S^*}(v,w) = 0$ and $\#_{S^*}(w,v) = 0$, 
  \item $\#_{S^*}(v,w) = 1$ and $\#_{S^*}(w,v) = 0$, 
  \item $\#_{S^*}(v,w) = 0$ and $\#_{S^*}(w,v) = 1$, 
  \item $\#_{S^*}(v,w) = 1$ and $\#_{S^*}(w,v) = 1$, 
  \item $\#_{S^*}(v,w) > 1$ and $\#_{S^*}(w,v) = 0$, 
  \item $\#_{S^*}(v,w) = 0$ and $\#_{S^*}(w,v) > 1$. 
  \end{enumerate}
  Thus, the advice which is necessary to have the same information as in the proof of \Cref{bekannt-gerichtet-a} (\Cref{bekannt-gerichtet-b}, resp.), is at most $\lceil\log(6)\cdot m\rceil$ ($\lceil\log(6)\cdot m\rceil+\lceil\log n\rceil$, resp.).
  Therefore, we can proceed in the same way as before.
\end{proof}

So, if the graph is known to the algorithm, we proved for every presented variation of the problem that a linear number of advice bits suffices to compute an optimal exploration sequence. 
\section{Unknown Directed Graphs with In- and Outdegree Two}
\label{sec:bounded}
We cannot apply the previous approach for the setting of unknown graphs because the lack of edge information prohibits the edge classification at the beginning. 
Even if the explorer asks, for every newly revealed edge, in which set it is contained, \Cref{berechne-mehrfach} can only be applied after all edges are explored. 
Therefore, we now informally explain an extended algorithm that requires that every vertex in the input graph has at most two incoming and also at most two outgoing edges. 
In \Cref{sec:general}, we will show how every graph can be transformed into such a degree-bounded graph. 
We now give a high level description of an algorithm with advice that computes the precise number of traversals while traversing such a given degree-bounded graph and give a detailed analysis together with its correctness in the next subsection. 

As soon as the explorer resides in a vertex $v$, the algorithm gets the identifiers of $v$ and of all out-neighbors $N_{out}(v)$ of $v$. 
If a neighbor $w \in N_{out}(v)$ was already the out-neighbor of a previously visited vertex, the algorithm recognizes this vertex. 
The vertices $N_{in}(v)$ that lead to $v$ and their corresponding edges stay hidden if the explorer is positioned at $v$. 
So, visiting a vertex $v$ reveals only its successors $N_{out}(v)$ and not its predecessors $N_{in}(v)$. 
The algorithm accesses the advice tape to overcome this lack of information. 

On the first visit of $v$, the algorithm reads advice bits to classify the outgoing edges into the three sets $\En$, $\Ee$, or $\Ex$ and to know the number of incoming edges adjacent to $v$ and their membership in the sets $\En$, $\Ee$, or $\Ex$. 
So, the algorithm knows, for every visited vertex, all incoming and outgoing edges, and whether they are used once, more than once, or never in an optimal exploration sequence. 
The edges from the set $\En$ are ignored in the further process, so they are not involved in computing the exploration sequence. 
Obviously, the graph $G_{S^*}=(V,\Eu)$ is also bounded in the number of adjacent vertices if the input graph is bounded. 
Therefore, every vertex in $G_{S^*}$ looks like one of the vertices shown in \Cref{possible-vertices}. 

\begin{figure}[htb]
    \begin{center}
    \begin{tikzpicture}[scale=0.8, >=latex]
      \draw (0,0) node[label={[label distance=0.05cm]120:$v_1$}] (v1) {};\filldraw (v1) circle (\KnotenDiam);
      \draw[->,line width=1.5] (v1) -- (1,0) node[above,sloped,midway] {};
      \draw[<-,line width=1.5] (v1) -- (-1,0) node[above,sloped,midway] {};
      
      \draw (3,0) node[label={[label distance=0.05cm]120:$v_2$}] (v2) {};\filldraw (v2) circle (\KnotenDiam);
      \draw[->,line width=1.5] (v2) -- (4,1) node[above,sloped,midway] {};
      \draw[->,line width=1.5] (v2) -- (4,-1) node[above,sloped,midway] {};
      \draw[<-,line width=1.5] (v2) -- (2,0) node[above,sloped,midway] {};
      
      \draw (6,0) node[label={[label distance=0.05cm]180:$v_3$}] (v3) {};\filldraw (v3) circle (\KnotenDiam);
      \draw[->,line width=1.5] (v3) -- (7,0) node[above,sloped,midway] {};
      \draw[<-,line width=1.5] (v3) -- (5,1) node[above,sloped,midway] {};
      \draw[<-,line width=1.5] (v3) -- (5,-1) node[above,sloped,midway] {};
      
      \draw (9,0) node[label={[label distance=0.05cm]180:$v_4$}] (v4) {};\filldraw (v4) circle (\KnotenDiam);
      \draw[->,line width=1.5] (v4) -- (10,1) node[above,sloped,midway] {};
      \draw[->,line width=1.5] (v4) -- (10,-1) node[above,sloped,midway] {};
      \draw[<-,line width=1.5] (v4) -- (8,1) node[above,sloped,midway] {};
      \draw[<-,line width=1.5] (v4) -- (8,-1) node[above,sloped,midway] {};
      
      \draw (12,0) node[label={[label distance=0.05cm]180:$v_5$}] (v5) {};\filldraw (v5) circle (\KnotenDiam);
      \draw[->,line width=1.5] (v5) -- (13,-1) node[above,sloped,midway] {};
      \draw[->,gray,line width=1.0] (v5) -- (13,1) node[above,sloped,midway] {};
      \draw[<-,line width=1.5] (v5) -- (11,1) node[above,sloped,midway] {};
      \draw[<-,line width=1.5] (v5) -- (11,-1) node[above,sloped,midway] {};
      
      \draw (15,0) node[label={[label distance=0.05cm]180:$v_6$}] (v6) {};\filldraw (v6) circle (\KnotenDiam);
      \draw[->,line width=1.5] (v6) -- (16,-1) node[above,sloped,midway] {};
      \draw[->,line width=1.5] (v6) -- (16,1) node[above,sloped,midway] {};
      \draw[<-,gray,line width=1.0] (v6) -- (14,1) node[above,sloped,midway] {};
      \draw[<-,line width=1.5] (v6) -- (14,-1) node[above,sloped,midway] {};
      
      \draw (0,-3) node[label={[label distance=0.05cm]180:$v_7$}] (v7) {};\filldraw (v7) circle (\KnotenDiam);
      \draw[->,line width=1.5] (v7) -- (1,-4) node[above,sloped,midway] {};
      \draw[->,gray,line width=1.0] (v7) -- (1,-2) node[above,sloped,midway] {};
      \draw[<-,gray,line width=1.0] (v7) -- (-1,-2) node[above,sloped,midway] {};
      \draw[<-,line width=1.5] (v7) -- (-1,-4) node[above,sloped,midway] {};
      
      \draw (3,-3) node[label={[label distance=0.05cm]180:$v_8$}] (v8) {};\filldraw (v8) circle (\KnotenDiam);
      \draw[->,line width=1.5] (v8) -- (4,-3) node[above,sloped,midway] {};
      \draw[<-,gray,line width=1.0] (v8) -- (2,-2) node[above,sloped,midway] {};
      \draw[<-,line width=1.5] (v8) -- (2,-4) node[above,sloped,midway] {};
      
      \draw (6,-3) node[label={[label distance=0.05cm]120:$v_{9}$}] (v9) {};\filldraw (v9) circle (\KnotenDiam);
      \draw[->,line width=1.5] (v9) -- (7,-4) node[above,sloped,midway] {};
      \draw[->,gray,line width=1.0] (v9) -- (7,-2) node[above,sloped,midway] {};
      \draw[<-,line width=1.5] (v9) -- (5,-3) node[above,sloped,midway] {};
      
      \draw (9,-3) node[label={[label distance=0.05cm]120:$v_{10}$}] (v10) {};\filldraw (v10) circle (\KnotenDiam);
      \draw[->,gray,line width=1] (v10) -- (10,-4) node[above,sloped,midway] {};
      \draw[->,gray,line width=1.0] (v10) -- (10,-2) node[above,sloped,midway] {};
      \draw[<-,line width=1.5] (v10) -- (8,-3) node[above,sloped,midway] {};
      
      \draw (12,-3) node[label={[label distance=0.05cm]180:$v_{11}$}] (v11) {};\filldraw (v11) circle (\KnotenDiam);
      \draw[->,line width=1.5] (v11) -- (13,-3) node[above,sloped,midway] {};
      \draw[<-,gray,line width=1] (v11) -- (11,-2) node[above,sloped,midway] {};
      \draw[<-,gray,line width=1] (v11) -- (11,-4) node[above,sloped,midway] {};
     
      \draw (15,-3) node[label={[label distance=0.05cm]120:$v_{12}$}] (v12) {};\filldraw (v12) circle (\KnotenDiam);
      \draw[->,gray,line width=1] (v12) -- (16,-3) node[above,sloped,midway] {};
      \draw[<-,gray,line width=1] (v12) -- (14,-3) node[above,sloped,midway] {};
    \end{tikzpicture}
    \end{center}
    \caption{All possible edge configurations for $G_{S^*}$ with bounded degree. 
    The thick black edges are from $\Ex$. 
    The gray edges are from $\Ee$. }\label{possible-vertices}
\end{figure}
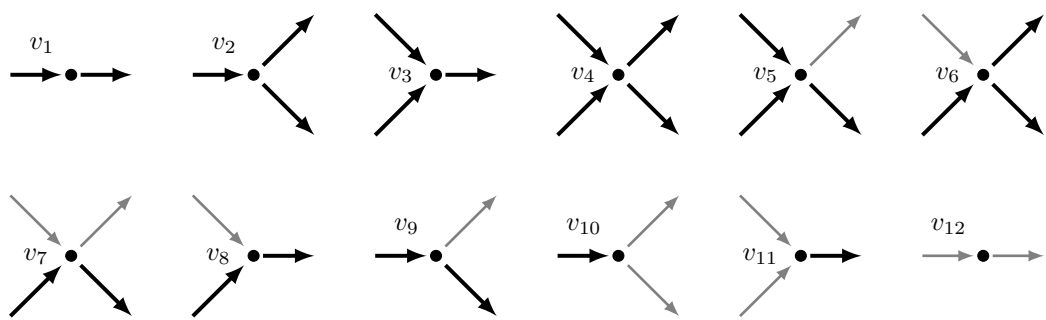

Our algorithm will prefer a path that is relatively often used and will avoid to exhaust an edge. 
As long as we know that the explorer does not use an edge for the last time, we know that the incident vertices are visited again. 
Thus, the algorithm chose one of many exchangeable cycles and did not make any mistake. 
If the explorer stands on a vertex with only one outgoing edge, the algorithm does not need to make a decision because the only valid move is to use this edge. 

Therefore, we assume that the explorer is positioned at a vertex where the algorithm has to decide between two edges which were not used in some previous step. 
If there is only one edge of $\Ex$, the algorithm will prefer it without knowing its number of traversals. 
But, if both outgoing edges are used multiple times, the lack of precise traversal numbers makes it impossible for the algorithm to choose the more often used edge. 
Thus, the algorithm accesses the advice tape to compare which edge is more often used in an optimal exploration sequence. 

\begin{definition}\label{heavy-light}
Let $G$ be a graph where every vertex has at most two outgoing and two incoming edges and let $S^*$ be an exploration sequence of minimal cost. 
For a vertex $v$ with two outgoing edges $e_1$ and $e_2$ from $\Ex$ we call $e_1$ \emph{light} if $\#_{S^*}(e_1) < \#_{S^*}(e_2)$. 
The other edge $e_2$ will be called \emph{heavy}. 
If $v$ has only one outgoing edge from $\Ex$, it is also called heavy. 
Analogously, we define for one or two incoming multi-edges of $v$ one edge as heavy and, if a second edge exists, the fewer traversed edge as light. 
\end{definition}

If both edges are equally often used, the oracle can decide arbitrarily which edge is heavy. 
The algorithm will ask for the exact number of traversals for the light edges and moves the explorer along the heavy one. 

Thus, the algorithm moves the explorer along a path of edges for which the precise number of traversals is unknown, but larger than one. 
But, the algorithm will know the exact number of traversals for all the {\em light} edges (which are not yet used for leaving the vertex for the first time). 
From \Cref{kein-ungerichteter-kreis}, we know that the path of multi-edges, which are preferred by the algorithm, are part of a tree $T$ in the forest $F=(V, \Ex)$. 
When the explorer reaches a leaf $w \in T$, the algorithm knows how often $w$ will be visited, because $w$ has only outgoing edges from $\Ee$, like $v_5$ in \Cref{possible-vertices}. 

Additionally, due to the bounded degree and the Eulerian property, $w$ can have only one incoming edge $e \in \Ex$ and therefore, the algorithm knows the exact number of outgoing traversals for $e$ is exactly $2$. 
This number of traversals can be used to compute the number of traversals for preceding edges, similar to \Cref{berechne-mehrfach}. 
For every preceding vertex with two outgoing edges, there was at most one incoming and one outgoing edge for which the number of traversals was unknown. 
Since, the number of traversals for the outgoing heavy edge can be computed, there is only one incident edge with an unknown number of traversals left. 
Thus, the algorithm can compute the number of traversals for all heavy edges adjacent to the traversed path in a bottom-up approach as soon as the explorer reaches a leaf $w \in T$. 
So, when a vertex is visited for the second time, the algorithm knows, for all adjacent edges, their precise number of traversals in the optimal solution.

We would like to summarize the idea briefly: 
If the explorer is on a vertex that is part of a tree of multi-edges for the first time, the algorithm asks for both, incoming and outgoing edges which are light, their precise number of traversals, and follows the outgoing heavy edge. 
At the point where the explorer has to leave the tree of multi-edges, the algorithm can compute the number of traversals for the heavy edges adjacent to the traversed path. 
Thus, the algorithm moved the explorer only once along the heavy multi-edges, which are more often used than once, to compute their precise number of traversals. 
So, when a vertex is visited for the second time, the algorithm knows, for all adjacent edges, their precise number of traversals in an optimal solution. 

Often, the algorithm just chooses between some interchangeable cycles, but it can happen that the graph has some critical edges whose early traversal would prevent the algorithm from computing an optimal exploration sequence. 
If the algorithm is in the situation to choose between two edges with only one traversal remaining (or two edges from $\Ee$), it asks for advice to know which edge should be used as {\em last} edge to leave the vertex. 
We say that such an edge is important to sustain the expandability (see \Cref{expandability}) of the current exploration sequence. 

\subsection{Correctness of Expanding the Search Sequence} 
In this subsection, we explain the algorithm in detail and prove its correctness. 
Recall that we assume $G_{S^*}$ has bounded in- and out-degree. 
An exploration sequence $S=(v_0,...,v_s)$ contains a vertex $v$ as often as its outgoing edges $e_1$ and $e_2$ are traversed. 
This implies that $v$ is part of $\#_{S}(e_1) + \#_{S}(e_2) $ cycles of which all but one can be traversed in any arbitrary order, as shown in \Cref{interchangeable}. 
Thus, as long as the {\em last} edge is not exhausted before the other one, the algorithm constructs an optimal solution. 
Note that because the graph is not known, the number of traversals in the optimal exploration sequence $\#_{S^*}(e)$ and the marking which edge is {\em last} are not known to the algorithm. 
This lack of information will be resolved with advice. 
Thus, we assume that the algorithm knows, for each vertex, which edge is the {\em last} one that is used in the optimal solution $S^*$, and for every edge its membership in the sets  $\En$, $\Ee$, and $\Ex$. 
On a second visit of a vertex, the algorithm also knows the precise number of traversals for all adjacent edges. 
This knowledge is necessary to recognize when the remaining number of traversals for an edge becomes critical, e.g., when only one traversal remains. 
We will explain in the next subsection how many bits are necessary to obtain this knowledge and how the algorithm will use the advice bits. 

\begin{theorem}\label{sicher-weg}
  Let $G=(V,E)$ be a graph with the optimal exploration sequence $S^*=(v_0,\ldots,v_{end})$ and let $S=(v_0, \ldots ,v_p)$ be an expandable prefix of it such that, for all $v\in V(S^*)$ the algorithm knows the edge through which $v$ is left for the last time in $S^*$, and, for all $e\in E(S^*)$, it knows whether $\#_{S^*}(e) - \#_{S}(e)>1$ or $\#_{S^*}(e) - \#_{S}(e)$ or $\#_{S^*}(e) - \#_{S}(e) = 0$.
  Then $S$ can be extended to an exploration sequence $S'=(v_0,v_p,v_{p+1}, \ldots, v_{end})$ with $\cost(S^*)=\cost(S')$. 
\end{theorem}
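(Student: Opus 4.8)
The plan is to prove the statement by induction on the number of remaining traversals $r(S)=\sum_{e\in E}\bigl(\#_{S^*}(e)-\#_{S}(e)\bigr)$, maintaining as an invariant that the current prefix is expandable with respect to $S^*$ (see \Cref{expandability}) and still meets the two knowledge conditions of the hypothesis. The base case $r(S)=0$ is immediate: every edge is then exhausted, and since the remaining multigraph satisfies the Eulerian degree balance inherited from $M_{S^*}=(V,\Eu,\#_{S^*})$, a walk that has used up all edge budget can only have stopped at $v_{end}$; together with $V(S)=V$ and $\cost(S)=\cost(S^*)$, which hold by expandability, this means $S$ is already the desired sequence $S'$.

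For the inductive step I would fix the current endpoint $v_p$ and inspect its outgoing edges in $G_{S^*}$, of which there are at most two by the bounded-degree assumption recalled before the theorem. If only one outgoing edge $e=(v_p,w)$ is not yet exhausted, the move is forced, and since any completion guaranteed by expandability must leave $v_p$ along a non-exhausted edge, it must begin with $e$; deleting this first edge from that completion exhibits a completion of the extended prefix, so the invariant survives. The substantive case is when $v_p$ has two non-exhausted outgoing edges, and here I would split by the remaining-count categories the algorithm knows: if one edge is heavy (remaining count $>1$) while the other is light or from $\Ee$, the algorithm follows the heavy edge; if both edges have a single traversal left, it consults the stored last-edge marking and takes the edge that is \emph{not} marked as the final exit of $v_p$.

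The heart of the argument is to show that the edge $e=(v_p,v_{p+1})$ chosen by the algorithm can always begin a completion. Starting from an arbitrary completion $C$ furnished by expandability, the still-available edges form a multigraph on which $C$ traces an Eulerian path from $v_p$ to $v_{end}$. If $C$ does not already start with $e$, then by the analogue of \Cref{zurueck} for this remaining multigraph there is a return path from $v_{p+1}$ back to $v_p$ through non-exhausted edges, so the cycle opened by $e$ is one of the interchangeable cycles of \Cref{interchangeable}; swapping it to the front of $C$ produces a completion beginning with $e$ without changing any edge's total traversal count. The role of the last-edge marking is exactly to certify that the chosen edge is not the unique bridge out of $v_p$ in the remaining multigraph (the edge whose premature use would strand the explorer), so that this interchange is legitimate; and the forest structure of $F=(V,\Ex)$ established in \Cref{kein-gerichteter-kreis,kein-ungerichteter-kreis}, combined with \Cref{doppel}, guarantees that along a heavy path the only bridge is the single incoming or outgoing multi-edge, so that preferring the heavy edge likewise never burns a bridge.

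I expect the main obstacle to be precisely this rerouting step: verifying that respecting the last-edge marking (and, for multi-edges, the heavy-over-light preference) is exactly sufficient to avoid every bridge of the remaining multigraph, so that the cycle interchange of \Cref{interchangeable} is always applicable and at least one of the two available edges is safe. Once this is in place, cost invariance is automatic, since reordering cycles leaves every $\#(e)$ unchanged and hence the completed sequence satisfies $\cost(S')=\sum_{e\in E}\cost(e)\cdot\#_{S^*}(e)=\cost(S^*)$. Finally, the two knowledge conditions persist through the step: the last-edge markings are static, while the remaining-count category of each edge can only drop by the single traversal just performed, a change the algorithm tracks exactly once the incident vertex is revisited (via the bottom-up computation of \Cref{berechne-mehrfach}), completing the induction.
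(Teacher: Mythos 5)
You have isolated exactly the right crux, but the proposal does not overcome it, and the gap is genuine rather than a matter of missing detail. Two problems. First, the appeal to an ``analogue of \Cref{zurueck}'' is invalid: that lemma requires $d_\Delta(v)=0$ at every vertex, whereas the multigraph of remaining traversals is balanced everywhere \emph{except} at $v_p$ and $v_{end}$, and in such an Eulerian-path graph a return path from $v_{p+1}$ back to $v_p$ exists precisely when the chosen edge is not a bridge toward $v_{end}$ --- which is the very statement you are trying to prove, not a tool you may assume. Second, the step you explicitly defer (``respecting the last-edge marking \ldots is exactly sufficient to avoid every bridge'') is not a routine verification left as an exercise; it is actually \emph{false} under the invariant your induction carries, namely expandability (\Cref{expandability}) plus the two knowledge conditions.

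Concretely, take vertices $v_0,v_p,a,x',b,y$ and unit-cost edges $(v_0,v_p)$, $(v_p,a)$, $(a,x')$, $(x',v_p)$, $(v_p,b)$, $(b,v_0)$, $(a,y)$, $(y,v_0)$. Every optimal cyclic tour traverses $(v_0,v_p)$ and $(v_p,a)$ twice and all other edges once (cost $10$), so the traversal counts are unique. Let the oracle fix $S^*=(v_0,v_p,b,v_0,v_p,a,x',v_p,a,y,v_0)$; its last departure from $v_p$ uses $(v_p,a)$, so $(v_p,b)$ is the edge \emph{not} marked last at $v_p$. The prefix $S=(v_0,v_p,a,y,v_0,v_p)$ is expandable (complete it by $a,x',v_p,b,v_0$) and satisfies both knowledge hypotheses, and both remaining out-edges of $v_p$ have exactly one traversal left; your rule for this case (the same as the paper's case (b)) moves the explorer along $(v_p,b)$, after which it is forced to $v_0$ and is stuck with $x'$ unvisited and $(v_p,a),(a,x'),(x',v_p)$ unexhausted. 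So the inductive step fails as stated: the marking refers to $S^*$'s own traversal history, while the remaining multigraph is shaped by the algorithm's different history, and nothing in your invariant ties the two together. (The algorithm run from scratch would never enter this state --- at $a$ it would respect $a$'s marking and take $(a,x')$ before $(a,y)$ --- which is exactly the point: closing the induction requires strengthening the hypothesis to prefixes that themselves respected the markings, information your invariant does not record. Note also that the paper argues differently, via a global contradiction on a vertex missed by the computed tour rather than via your local cycle-exchange; your exchange argument is fine for the heavy-edge case, where a second forced traversal guarantees a non-final interchangeable cycle, but it cannot be completed for the two-single-traversal case from the stated assumptions.)
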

\begin{proof}
For each response $p$, we define a function $c_p$ that computes, for every edge $e$, the difference between the number of already used traversals in $S$ and the number of traversals in $S^*$: $c_p(e)=\#_{S^*}(e)-\#_{S}(e)$. 
The algorithm chooses to move the explorer along one of the adjacent edges, depending on the remaining traversals for the possible edges, represented by the function $c_p$. 
We write $c$ if $p$ is clear. 

The algorithm will never choose an exhausted edge to extend the sequence. 
The algorithm distinguishes between the following cases to extend its search sequence $S=(v_0,..., v_p)$:
  \begin{enumerate}
  \renewcommand{\labelenumi}{(\alph{enumi})}
  \item
  	If there is an edge $e=(v_p,w)$ with more than one remaining traversal $c(e)>1$, the algorithm chooses this edge to move the explorer to $w$. 
  	This extends the initial sequence to $S=(v_0,\ldots, v_p,w)$. 
    	The expandability of $S$ is preserved by this choice. 
    	Thus the explorer will visit $v_p$ again and choose again one of the outgoing edges, but this time $c(e)$ is reduced by one. 
    	This case repeats until $c(e)\leq 1$, for all outgoing edges of $v_p$. 
  \item
  	If there is an edge $e'=(v_p,w')$ with $c(e')=1$ which is not marked as {\em last}, the algorithm chooses it to extends the initial sequence to $S=(v_0, \ldots, v_p, w)$. 
    	This means that all search sequences from $v_p$ to $v_{end}$ have to traverse the edge that is not marked as {\em last} before the {\em last} edge is chosen.
    	Thus, the {\em last} edge $e=(v_p,w'')$ is the only remaining edge. 
  \item
	If there is only one edge $e=(v_p,w)$ with $c(e)=1$, it is the only adjacent edge of $v_p$ or it is marked as {\em last}. 
	Therefore , the algorithm moves the explorer to $w$ and all adjacent edges of $v_p$ are exhausted. 
	Thus, the initial sequence is extended to $S=(v_0, \ldots, v_p, w)$ and the vertex $v_p$ will not be visited again. 
  \end{enumerate}
  
We prove that this approach always finds an exploration sequence $S^\prime$ with $\cost(S^*)=\cost(S^\prime)$  by contradiction. 
Because the algorithm will never traverse an edge more often than it is used in the optimal solution, the costs will never exceed the costs of the optimal solution. 
We assume for the sake of contradiction that the computed sequence does not visit every vertex. 
Thus, there exists at least one vertex $u$ that is not visited. 
We look at the vertex $v$ that has an edge $e = (v,u) \in \Eu$. 
Such a vertex $v$ exists since $S^*$ visits every vertex, in particular $u$. 
So, the algorithm missed the edge $e$.

If $e \in \Ex$, the explorer traverses it on the first arrival on $v$ unless there exists another edge from $\Ex$ that is heavy. 
But if $e$ is light, the explorer traverses exchangeable cycles until both outgoing edges have only one remaining traversal. 
Therefore, the algorithm has to use $e$ if it is from $\Ex$. 

On the other hand, if $e \in \Ee$, it can be marked as {\em last} or not. 
If it is not marked as last, it is either the only outgoing edge of $v$, which makes it impossible to not visit $u$, or there exists another edge that is marked as {\em last}. 
But then, the algorithm will send the explorer along $e$ before the {\em last} edge is exhausted. 
If $e$ is marked as {\em last} edge, the explorer will always come back to $v$ until it uses $e$, due to the Eulerian property of $M_{S^*}$. 
Therefore, the algorithm will not leave out a vertex during its computation. 
\end{proof}
\subsection{Upper Bound for the Advice Complexity}

To overcome the lack of information that arises from the online scenario, the algorithm needs advice. 
In \Cref{sec:known} we showed that it was crucial for the algorithm to know, for every edge, its number of traversals in an optimal solution. 
The first step to compute the number of traversals for every edge is the classification of an edge into one of the three sets $\En$, $\Ee$, or $\Ex$. 
Because the algorithm sees only the outgoing edges of visited vertices, additional advice bits are necessary to learn the number of incoming edges. 

\begin{lemma}\label[lemma]{incoming-edges}
The algorithm can learn, for every edge, its membership to the sets $\En$, $\Ee$, or $\Ex$ on the first visit of an incident vertex using $n+\log(3)m$ advice bits overall. 
\end{lemma}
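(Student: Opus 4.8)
The plan is to have the oracle classify every edge \emph{exactly once}, namely in the advice block that the algorithm reads on the first visit of whichever of the two endpoints of the edge the explorer reaches first. First I would fix the visit order of the vertices (which is determined by the algorithm together with the advice) and, for an edge $e=(u,w)$, call the earlier-visited endpoint its \emph{first-incident} vertex. The membership of $e$ in $\{\En,\Ee,\Ex\}$ is stored as one ternary symbol read at the first visit of this vertex. Since each edge has a unique first-incident vertex, this stores one ternary symbol per edge, i.e.\ $\log(3)\,m$ bits in total; the whole difficulty is therefore not \emph{how much} class information to store but \emph{how the algorithm parses it}, because on first reaching a vertex $v$ it sees all of $N_{out}(v)$ but none of its incoming edges.

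For the outgoing edges this causes no trouble: on reaching $v$ the algorithm sees $N_{out}(v)$ and hence knows exactly how many outgoing edges point to vertices not yet visited; these are precisely the outgoing edges whose first-incident vertex is $v$, and it reads one ternary symbol for each of them. An outgoing edge whose target is already visited was classified earlier (as an incoming edge of that target) and is simply looked up. The step I expect to be the main obstacle is the incoming edges whose source has not yet been visited: these are invisible to the explorer, so a priori the algorithm cannot tell how many further ternary symbols it must read.

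The observation that resolves this is that every non-start vertex $v$ is entered, on its first visit, along some edge $(u,v)$ whose source $u$ has already been visited; hence at least one incoming edge of $v$ is already recorded (it was revealed when $u$ was visited), and because the in-degree is bounded by $2$ at most \emph{one} incoming edge of $v$ can still be unknown. A single advice bit per vertex therefore suffices to announce whether such a hidden incoming edge exists; if it does, the algorithm reads one more ternary symbol for its class, which is again one of the $m$ per-edge symbols, and when the source of that edge is eventually visited the edge is recognised as already classified, the degree bound guaranteeing that this matching is unambiguous. Summing the costs, the oracle spends one ternary symbol per edge and at most one bit per vertex, giving $n+\log(3)\,m$ advice bits; only the start vertex $v_0$, which is entered along no edge, may require a constant number of extra bits to encode the classes of its at most two incoming edges, which is absorbed in the stated bound.
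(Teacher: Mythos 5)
Your proposal is correct and takes essentially the same approach as the paper: the paper also classifies each edge exactly once with a ternary symbol read at the first visit of the earlier-reached endpoint (with positional, order-of-discovery matching for incoming edges whose sources are still unvisited), and it likewise spends one extra bit per vertex, phrased there as asking whether the in-degree is one or two, which under the degree bound and strong connectivity is equivalent to your hidden-incoming-edge announcement and also uniformly covers the start vertex $v_0$ without any special-case bits.
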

\begin{proof}
Assume the algorithm moves the explorer onto the vertex $v$ for the first time (or the starting position is $v$). 
All successors $N_{out}(v)$ of $v$ are visible together with their edges. 
The algorithm will read $\log(3)$ bits of advice for each of these outgoing edges. 
This classifies all outgoing edges of the vertex into one of the three sets. 

The in-degree of the vertices is unknown to the algorithm. 
But due to the bounded degree and the strong connectivity, it is either one or two. 
Thus, the algorithm reads on the first visit of a vertex one bit of advice to know the in-degree. 
If $v$ is not the starting vertex, the algorithm already used one of the incoming edges. 
Thus, it was already classified and it can read $\log(3)$ bits to know the membership of the other not yet used edge, if it exists. 
If $v$ is the starting vertex, the oracle writes the advice bits for classification in the same order in which the incident vertices will be explored. 
Whenever the explorer is moved for the first time onto some predecessor $u_i$ of $v$, it knows that the edge $(u_i,v)$ was classified as the first incoming edge of $v$. 
And if the explorer reaches the second predecessor, the edge leading to $v$ is the one that was classified as the second incoming edge of $v$. 

Thus, for every vertex, the algorithm reads if the number of incoming edges is one or two and, for every edge, $\log(3)$ bits to determine its membership to the sets $\En$, $\Ee$, or $\Ex$.
\end{proof}
All edges from the set $\En$ are not considered for any further decision of the algorithm and are subsequently ignored. 
The next important information given by the oracle is a comparison between the number of traversals for the equally directed multi-edges. 
The algorithm needs to know which of two possible outgoing multi-edges is light, recall that a light edge is less often used in the optimal solution $S^*$ than the other outgoing edge, according to \Cref{heavy-light}. 

\begin{lemma}\label{light-last}
For a given graph $G_{S^*}=(V,\Eu)$ with in- and out-degree bounded by $2$, with an optimal exploration sequence $S^*$, the algorithm needs at most $3n$ bits of advice to learn about the light incoming and outgoing edges and which outgoing edge is \emph{last}, for each vertex. 
\end{lemma}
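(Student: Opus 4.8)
The plan is to allocate a constant number of advice bits per vertex---one for each of the three designations (light incoming, light outgoing, and \emph{last} outgoing)---and to use the degree bound to guarantee that each designation is a binary choice.

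First I would fix, for every vertex $v$, a canonical ordering of its incident edges that both the oracle and the algorithm can reconstruct. For the outgoing edges this is immediate: on the first visit of $v$ the algorithm sees $N_{out}(v)$ together with the unique identifiers of the target vertices, so ordering the at most two outgoing edges by increasing target identifier yields an ordering available to the algorithm. For the incoming edges I would reuse the scheme from \Cref{incoming-edges}: the at most two incoming edges of $v$ are ordered by the order in which their tail vertices are first reached by the explorer (for $v\neq v_0$ the edge used to first enter $v$ is the first incoming edge), so the algorithm can identify ``the first'' and ``the second'' incoming edge of $v$ consistently with the oracle.

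Given these orderings, each of the three designations reduces to a single bit. For the outgoing light edge: by \Cref{heavy-light} the light/heavy distinction is only defined when $v$ has two outgoing edges from $\Ex$, in which case one bit names which of the two (in the canonical order) is light; when the distinction is vacuous the bit is set to a default and simply ignored. For the outgoing \emph{last} edge: one bit names which of the at most two outgoing edges is traversed last in $S^*$. For the incoming light edge: one bit names which of the at most two incoming edges is light. Inspecting the configurations of \Cref{possible-vertices}, in each case every designation is a binary (or trivial) choice, so three bits per vertex suffice, for a total of at most $3n$ bits.

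The main obstacle will be arguing that the single incoming-light bit can be correctly matched to an edge, since incoming edges are never directly observed. I expect to resolve this exactly as in \Cref{incoming-edges}: the algorithm already learns the number and membership of the incoming edges in the predecessor-discovery order, and the light-edge bit is read and interpreted in that same order, so no further synchronization is required. The remainder is a plain bit count; allocating a bit even in the vacuous cases (degree-one vertices, or vertices with at most one multi-edge) only wastes advice and hence preserves the claimed upper bound of $3n$.
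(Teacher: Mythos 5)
Your proposal is correct and follows essentially the same route as the paper: three binary questions per vertex (light outgoing, light incoming, \emph{last} outgoing), each resolvable with one bit thanks to the degree bound, for $3n$ bits in total. The extra care you take in fixing canonical edge orderings (target identifiers for outgoing edges, predecessor-discovery order for incoming edges, as in \Cref{incoming-edges}) is a detail the paper leaves implicit, but it is the same argument.
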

\begin{proof}
For two outgoing edges $e_1$ and $e_2$ which are used multiple times in an optimal solution $S^*$, the algorithm asks if $\#_{S^*}(e_1) < \#_{S^*}(e_2)$ holds. 
Or in other words, the algorithm asks if the edge $e_1$ is light. 
This can be determined with one bit of advice. 
If a vertex has two incoming edges from $\Ex$ the algorithm will also ask for one bit of advice to know which incoming edge is light. 
Additionally, the algorithm asks for two outgoing edges which of them should be used for the {\em last} time to leave the vertex if both of them are in danger of being exhausted. 
The algorithm will ask these three binary questions at most once per vertex in the graph. 
\end{proof}

The next step of the algorithm is to ask, for every light edge, its exact number of traversals in the optimal solution $S^*$. 
To determine the cost for this information, we need to make some observations about the possible number of traversals of an edge and how this relates to the traversals of the incident vertices. 
A vertex $v$ is as often visited as the sum of traversals of its outgoing or incoming edges due to the Eulerian property of the graph $M_{S^*}$. 

\begin{definition}
The number of traversals for a vertex $v$ with respect to an optimal exploration sequence $S^*$ for an Eulerian graph $H=(V',E')$ is defined as $d_{S^*}(v) = \sum_{e=(v,w_i)\in E' } \#_{S^*}(e)$.
\end{definition}

We already mentioned above (see \Cref{kein-ungerichteter-kreis}) that every cycle contains at least one edge that is used only once. 
This implies that the number of visits for a vertex is bounded. 

\begin{lemma}\label{max-degree}
For every vertex $v$ in an Eulerian graph $G=(V,E)$ with optimal exploration sequence $S^*$, the number of traversals is $d_{S^*}(v) \leq n$. 
\end{lemma}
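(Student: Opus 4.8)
The plan is to bound the number of visits to $v$ by a counting argument that turns optimality of $S^*$ into an exchange (excision) of a wasteful cycle. First I would observe that, since $M_{S^*}$ is Eulerian, the number of times the tour passes through $v$ equals $d_{S^*}(v)$, the total out-traversal count, and that $d_{S^*}(v)$, the cost, and the set of visited vertices are all invariant under cyclically rotating the tour. Hence I may assume the closed walk induced by $S^*$ starts and ends at $v$. Cutting this closed walk at each of its passages through $v$ splits it into exactly $k := d_{S^*}(v)$ edge-disjoint closed sub-walks $W_1, \ldots, W_k$, each leaving $v$ and returning to $v$, and together using every edge $e$ with its full multiplicity $\#_{S^*}(e)$.

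Next I would set up the pigeonhole. For each $j$, let $N_j$ be the set of vertices that make their first appearance, in the order $W_1, W_2, \ldots, W_k$, inside $W_j$. Since every vertex of $V$ appears somewhere in the tour and is first seen in exactly one sub-walk, the sets $N_1, \ldots, N_k$ partition $V$, so $\sum_{j} |N_j| = n$.

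The crux is to show that each $W_j$ contributes at least one fresh vertex, i.e.\ $N_j \neq \emptyset$. I would argue this by contradiction using optimality: if some $W_j$ revealed no new vertex, then every vertex it touches already appears in an earlier sub-walk, hence elsewhere in the tour. Excising the contiguous block $W_j$ yields $W_1 \cdots W_{j-1} W_{j+1} \cdots W_k$, which is again a closed walk at $v$ (each $W_i$ is a closed loop at $v$, so the concatenation remains a valid search sequence), still visits every vertex, but uses strictly fewer edges because $W_j$ has positive length. Rotating this shorter closed walk back so that it starts at $v_0$ produces a cyclic exploration sequence of cost strictly below $\cost(S^*)$, contradicting the minimality of $S^*$. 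This feasibility bookkeeping---that excision keeps the walk closed, keeps coverage intact, and strictly lowers the cost---is the step I expect to require the most care, and it is essentially the same exchange idea already used in \Cref{lem:trav-number} and \Cref{kein-ungerichteter-kreis}. Granting it, each $|N_j| \ge 1$, so $k = \sum_{j} 1 \le \sum_{j} |N_j| = n$, i.e.\ $d_{S^*}(v) \le n$.
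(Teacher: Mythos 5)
Your proof is correct and follows essentially the same route as the paper: the paper also decomposes the tour into the $d_{S^*}(v)$ cycles through $v$ and argues each must contain a vertex visited for the first time, which forces $d_{S^*}(v)\leq n$. Your version merely makes explicit the excision/exchange justification (and the reliance on strictly positive edge costs after the oracle's perturbation) that the paper leaves implicit.
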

\begin{proof}
The explorer will visit the vertex $v$ exactly $d_{S^*}(v)$ times. 
Thus, there are at most $d_{S^*}(v)$ many cycles that contain $v$. 
In every cycle, there must be at least one vertex that is visited for the first time. 
Thus, $d_{S^*}(v) \leq n$ holds for every vertex $v \in V$. 
\end{proof}

We recall \Cref{kein-ungerichteter-kreis} and remember that a graph induced by the edges from $\Ex$ is a forest $F=(V,\Ex)$. 
\Cref{max-degree} bounds the number of different cycles that one vertex in one tree $T \in F$ is contained in. 
But, this lemma can be extended to bound the number of different cycles for all trees $T \in F$. 

\begin{lemma}\label{max-cycles} 
Let $M_{S^*}=(V,\Eu, \#_{S^*})$ be an Eulerian multigraph with an optimal exploration sequence $S^*$. 
Every edge $e \in \Ex$ is replaced by $\#_{S^*}(e)$ parallel edges. 
The multigraph $M_{S^*}$ can be decomposed into at most $n$ edge-disjoint cycles. 
\end{lemma}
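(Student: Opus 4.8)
The plan is to exhibit one specific decomposition of $M_{S^*}$ into edge-disjoint simple cycles and to charge each cycle injectively to a vertex, so that the number of cycles is bounded by $n=|V|$. Since $M_{S^*}$ is Eulerian it is a standard fact that it decomposes into edge-disjoint simple cycles; the whole task is to produce a decomposition with at most $n$ parts. I would obtain it from the natural recursive structure of an Euler tour $S^*=(v_0,v_1,\ldots,v_L=v_0)$, which traverses each of the $L=|\Ee|+\sum_{e\in\Ex}\#_{S^*}(e)$ multigraph-edges exactly once: scanning $v_0,v_1,\ldots$ while maintaining the current simple open walk, I close off a simple cycle as soon as the walk first returns to a vertex it already contains, record that cycle, splice it out, and continue. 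This is exactly the cycle structure underlying \Cref{interchangeable}.

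The counting step is to charge each closed-off cycle to a vertex that is \emph{visited for the first time} inside it, reading the argument in the proof of \Cref{max-degree} globally rather than vertex by vertex. Recall that in \emph{any} cycle decomposition the number of cycles through a vertex $v$ equals its traversal count $d_{S^*}(v)$, which is at most $n$ by \Cref{max-degree}; the point is now to bound the total number of cycles over all trees of $F$ simultaneously. If each of the $n$ vertices can act as the first-visit witness of at most one cycle, then there are at most $n$ cycles. To make this charge well defined I would lean on the forest structure $F=(V,\Ex)$ from \Cref{kein-ungerichteter-kreis}: the multiply used edges traversed by a single simple cycle form paths inside the trees of $F$, and \Cref{doppel} forbids a multiply used edge from having a reverse partner in $\Eu$. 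I would then organise the splitting along the leaf-to-root order of the trees of $F$, exactly as in \Cref{berechne-mehrfach}, so that once a tree's multi-edges are exhausted its internal vertices stop generating fresh first visits.

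The main obstacle is precisely the injectivity of this charge, i.e.\ guaranteeing that no vertex is the first-visit witness of two distinct cycles. This is where the structural hypotheses are indispensable, because without them the statement is simply false: two vertices joined by $t$ parallel forward and $t$ parallel backward edges force $t$ edge-disjoint digons and hence $t>n=2$ cycles. \Cref{doppel} excludes exactly this configuration, and \Cref{kein-ungerichteter-kreis} excludes cycles of multi-edges, so the pathological ``many short cycles on few vertices'' patterns cannot occur. I would therefore devote the bulk of the proof to verifying that, under these two constraints, the first-return splitting can be arranged so that each extracted cycle contains a new first-visit vertex. A convenient cross-check is the circulation-decomposition viewpoint, in which every extracted cycle saturates at least one edge; once the forbidden parallel-digon pattern is ruled out, each saturating edge can be matched to a distinct incident vertex, which re-derives the bound of $n$ cycles by a second route.
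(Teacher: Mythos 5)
Your overall frame---decompose $M_{S^*}$ into edge-disjoint cycles along the Euler tour $S^*$ and charge each cycle to a vertex whose first visit lies inside it---is exactly the paper's. (A small correction: what you call ``injectivity'' is automatic, since each vertex has a unique first visit and that visit lies in exactly one cycle of a tour-based decomposition; the real issue is \emph{totality}, i.e., that every extracted cycle contains some first-visited vertex, which is indeed what you later say you would verify.) The genuine gap is that this totality step \emph{is} the lemma, and you never prove it: you announce that you ``would devote the bulk of the proof'' to it, leaning on \Cref{doppel} and \Cref{kein-ungerichteter-kreis}. Moreover, those are the wrong tools. They constrain only the multiply used edges, so when $\Ex=\emptyset$ (every edge of $\Eu$ traversed exactly once by $S^*$) they impose no restriction whatsoever---yet the bound of $n$ cycles is still nontrivial in that case and your plan then has nothing left to work with. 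The ingredient that is actually needed, and that the paper invokes at precisely this point, is the \emph{optimality} of $S^*$ itself: if an extracted cycle $C$ contained no first-visited vertex, then every vertex of $C$ is visited elsewhere in the tour, so splicing $C$ out of $S^*$ leaves a closed walk that still visits all of $V$ at strictly smaller cost (the perturbed edge costs are positive), contradicting optimality; the paper handles separately the case where removing $C$ would destroy connectivity (cycles nested inside $C$), observing that the cut-off part must then contain a first-visited vertex. The structural lemmas you cite are themselves \emph{consequences} of optimality, but they do not capture enough of it to carry this step.

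Your digon example (two vertices with $t$ parallel edges in each direction) correctly shows that Eulerian-ness alone is insufficient, but it does not show that the forest/doppel structure is sufficient---it is equally consistent with optimality being the indispensable hypothesis, which it is. The ``cross-check'' via circulation decompositions has the same defect in vaguer form: ``each saturating edge can be matched to a distinct incident vertex'' is an unsubstantiated system-of-distinct-representatives claim (each edge has only two endpoints, and without optimality a decomposition can have far more cycles than vertices), so it does not re-derive the bound. As written, the proposal reduces the lemma to the assertion that every extracted cycle discovers a new vertex, and then asserts rather than proves it.
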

\begin{proof}
Let there be $k$ different trees $T_1, \ldots, T_k$ of multi-edges. 
We choose the vertex with the largest number of outgoing traversals in $T_i$ as a representative vertex for this tree and denote it by $v_i$ and its number of traversals with $d_{S^*}(v_i)$. 
Every $v_i$ is part of $d_{S^*}(v_i)$ different cycles which have to explore at least one new vertex each. 
These cycles can be used to decompose the multigraph $M_{S^*}$ into edge-disjoint cycles. 

We prove the claim by contradiction and assume that $M_{S^*}$ is decomposed into more than $n$ edge-disjoint cycles. 
This would mean that there is at least one cycle that does not contain a vertex that is visited for the first time. 
If a cycle would not discover a new vertex, we could remove the cycle which would lead to a better solution or destroy the connectivity of the graph. 
Because $S^*$ is optimal, it is not possible that removing this cycle improves the solution. 

If removing the cycle destroys the connectivity, the cycle was necessary to explore a cut-off component. 
But then, there must be at least one vertex in this component that is discovered for the first time. 
Therefore, the assumption that the graph can be decomposed into more than $n$ cycles is wrong and the claim is proven. 
\end{proof}

With the knowledge that, for an optimal exploration sequence, the number of cycles in $G_{S^*}$ connecting different trees from $F$ is bounded by the number of vertices, we are able to prove that, if we choose one representative vertex $v_i$ for every tree $T_i$ the sum of their traversal numbers is bounded by the number of edges. 

\begin{lemma}\label{sum-traversals}
Let $G_{S^*}=(V,\Eu)$ be a graph induced by the edges of an optimal exploration sequence $S^*$. 
Let $T_1, \ldots, T_k$ be the trees induced by $\Eu$. 
Let $v_i \in V(T_i)$ be a representative vertex in $T_i$. 

Then, 
\begin{align*}
\sum_{i=1}^{k}d_{S^*}(v_i) \leq m \text{.}
\end{align*}
\end{lemma}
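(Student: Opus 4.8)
The plan is to fix an edge-disjoint cycle decomposition of the Eulerian multigraph $M_{S^*}=(V,\Eu,\#_{S^*})$ and to charge the quantity $\sum_{i=1}^{k}d_{S^*}(v_i)$ against the edges of $\Ee$, exploiting that the multiply used edges form a forest. First I would record the identity that $\sum_{i=1}^{k}d_{S^*}(v_i)$ equals the total number of times the cycles of the decomposition pass through representative vertices. This holds because $d_{S^*}(v_i)$ is exactly the out-degree of $v_i$ in $M_{S^*}$, the decomposition partitions all edge-copies of $M_{S^*}$, and each passage of a cycle through $v_i$ consumes one distinct outgoing copy; since the representatives lie in pairwise different trees of the forest $F=(V,\Ex)$, they are distinct vertices, so no passage is counted twice and the double sum over representatives and cycles may be reorganized into a sum over cycles of their representative-passages.

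Second, the key local estimate I would prove is that, for every cycle $C$ of the decomposition, the number of passages of $C$ through representatives is at most the number of $\Ee$-edges traversed by $C$. To see this, I list the representative-passages of $C$ in cyclic order and consider, between two consecutive such passages, the segment of $C$ joining them. I claim this segment must use at least one edge of $\Ee$. Indeed, suppose it used only edges of $\Ex$; then it would be a walk inside the forest $F$ connecting the representatives at its two endpoints. Since each tree of $F$ contains exactly one representative and distinct trees are disconnected in $F$, both endpoints would have to be the \emph{same} representative, making the segment a closed walk of positive length inside $F$. By \Cref{kein-gerichteter-kreis,kein-ungerichteter-kreis} the graph $F$ is a forest even in its underlying undirected graph (and by \Cref{doppel} it has no antiparallel multi-edge pairs), so no such closed walk exists, a contradiction. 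Hence every gap between consecutive representative-passages contributes at least one $\Ee$-edge; as the gaps partition the edges of $C$, the charged $\Ee$-edges are distinct, which gives the estimate. The case of a cycle with a single representative-passage is covered as well, since then the whole cycle forms one gap, and a cycle cannot consist entirely of $\Ex$-edges by the same forest argument.

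Finally I would sum the local estimate over all cycles $C$ of the decomposition. Because the decomposition is edge-disjoint and each copy of an $\Ee$-edge occurs exactly once in $M_{S^*}$, the total number of $\Ee$-edges summed over all cycles is exactly $|\Ee|$. Combining this with the identity from the first step yields $\sum_{i=1}^{k}d_{S^*}(v_i)\le |\Ee|\le|\Eu|\le m$, as claimed. The main obstacle is the local estimate of the second step: one must argue carefully that a representative-to-representative segment avoiding $\Ee$ cannot exist, which is exactly where the forest structure of $\Ex$ is essential, and one must handle the cyclic bookkeeping so that the $\Ee$-edges charged across different gaps are genuinely distinct.
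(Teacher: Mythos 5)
Your proposal is correct and takes essentially the same route as the paper's own proof: both fix the edge-disjoint cycle decomposition of $M_{S^*}$ from \Cref{max-cycles}, observe that $\sum_{i=1}^{k}d_{S^*}(v_i)$ counts the passages of these cycles through representative vertices, and charge each such passage to a distinct $\Ee$-edge of the same cycle (using the forest structure of $\Ex$ to ensure one exists between consecutive representative visits), yielding $\sum_{i=1}^{k}d_{S^*}(v_i)\le|\Ee|\le m$. Your explicit ``gap'' argument is simply a more careful rendering of the paper's informal step that exiting a tree $T_i$ forces the use of an edge from $\Ee$.
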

\begin{proof}
Every vertex $v_i$ is part of $d_{S^*}(v_i)$ cycles. 
If we interpret every multi-edge as a set of parallel edges that are used once, we know from \Cref{max-cycles} that these cycles are edge-disjoint. 
The edge-disjoint cycles can either contain only one $v_i$ or more of the representative vertices which would mean that these share an edge-disjoint cycle. 
To exit any $T_i$, we always use an edge from $\Ee$. 
Thus, if a cycle contains only one $v_i$, we need at least one edge from $\Ee$ to close this cycle. 
Therefore, every cycle that traverses only one tree $T_i$ has at least one edge that is used precisely once. 
Thus, the number of cycles that contain only one $v_i$ is less or equal to the number of edges from $\Ee$. 

If a cycle contains more than one of the representative vertices, there are at least as many edges from $\Ee$ in the cycle as representative vertices, because they were chosen from different trees $T_i$. 
The cycles that traverse multiple trees use at least one edge from $\Ee$ for every step from one tree $T_i$ to the next one $T_j$. 
Thus, a cycle has as many edges from $\Ee$ as it visits different trees $T_i$. 
So, we can allocate an edge from $\Ee$ to every representative vertex $v_i$. 
Thus, the sum of traversals over all these vertices is bounded by the number of edges from $\Ee$. 
\[ \sum_{i=1}^{k} d_{S^*}(v_i) \leq |\Ee| \leq m \text{.}  \qedhere \] 
\end{proof}

So, for any set of representative vertices chosen from different trees $T_i$, the sum of their traversal numbers is bounded. 
This gives a first intuition that the advice for the precise number of traversals for the light edges could be bounded. 
But, we need to discuss how the number of traversals $d_{S^*}(v)$ of a vertex is transferred to its outgoing edges $e_1$ and $e_2$. 
Recall that the algorithm only asks for advice if it has to choose between two edges from $\Ex$. 
Note that the following part concentrates on the outgoing edges and we will consider the symmetric case of incoming edges later. 

So, the algorithm asks if $e_1$ is a light edge, i.e., whether $\#_{S}(e_1) \leq \#_{S}(e_2)$. 
Because the number of traversals for one vertex is bounded by $n$ (see \Cref{max-degree}), the less used outgoing edge is used at most $\frac{n}{2}$ times. 
The algorithm asks the oracle for the exact number of traversals for the light edge. 
Because we do not want the algorithm to read $\log(\frac{n}{2})$ advice bits every time it asks for the number of traversals, we use self-delimiting codes (see \cite{Komm16}). 
Therefore, we use $\log(x)+2\log \log(x) + 2$ bits of advice to read the natural number $x$. 
As described in \Cref{sec:bounded}, the algorithm sends the explorer along the edges with a high number of traversals, e.g., the heavy edges in the first traversal of a vertex. 
If $v$ has at most one outgoing multi-edge, the algorithm will not use advice bits and mark it immediately as heavy. 
This strategy leads to a traversed path of multi-edges for which we do not know the number of traversals. 
Note that, for every edge from $\Ee$, the exact number of traversals is one and known. 
If a vertex $v$ has only outgoing edges that are used exactly once, the explorer knows immediately the number of traversals $d_{S^*}(v)$ for this vertex. 

\begin{figure}[htb]
    \begin{center}
    \begin{tikzpicture}[>=latex]
      \draw (0,0) node[label={[label distance=0.09cm]180:$v$}] (v) {};\filldraw (v) circle (\KnotenDiam);
      \draw (8,0) node[label={[label distance=0.09cm]180:$u$}] (u) {};
      
      \foreach \x/\y/\name in {1/1/l1, 1/-1/r1, 2/0/l2, 2/-2/r2, 3/1/l3, 3/-1/r3, 4/2/l4, 4/0/r4, 5/1/l5, 5/-1/r5, 6/2/l6, 6/0/r6, 7/-1/r7, 8/0/l8, 8/-2/r8, 9/1/l9, 9/-1/r9} {\draw (\x,\y) node (\name) {};\filldraw (\name) circle (\KnotenDiam); }
      \foreach \parent/\child in {v/l1, r1/r2, l3/l4, l5/l6, r7/r8} {\draw[->,gray,line width=1] (\parent) -- (\child) node[above,sloped,midway] {};}
      \foreach \parent/\child in {v/r1, r1/l2, l2/l3, l3/r4, r4/l5, l5/r6, r6/r7, r7/l8} {\draw[->,line width=2] (\parent) -- (\child) node[above,sloped,midway] {};}
      \foreach \parent/\child in {l2/r3, r4/r5, l8/l9, l8/r9} {\draw[->,gray,dashed,line width=1.0] (\parent) -- (\child) node[above,sloped,midway] {};}
    \end{tikzpicture}
    \end{center}
    \caption{The vertex $v$ is the first one that has at least one outgoing edge from $\Ex$. 
    The incoming edges are not drawn. 
    The thick black edges form the path of heavy edges, ending at the leaf $u$ of $F$. 
    The gray edges are the light edges for which the algorithm asks the exact number of traversals. 
    The dashed gray edges are used exactly once and the explorer does not need further advice for these edges.}\label{outgoing-traversal}
\end{figure}
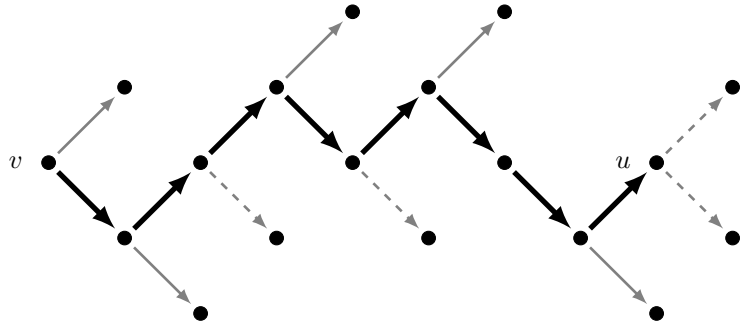

In \Cref{outgoing-traversal}, we see the outgoing edges on a traversed path of heavy edges from a vertex $v$ to a vertex $u$, together with the knowledge that our explorer gets from the oracle. 
For the sake of simplicity, we assume for the moment that the algorithm knows the exact number of traversals for all incoming edges, which are not drawn in \Cref{outgoing-traversal}. 
The {\em heavy} edges form a unique path $(v, \ldots ,u)$ and, if the explorer knows the number of traversals for one edge on this path, it knows the number of traversals for every edge on the path. 

We need to look closely at the number of questions for the light edges and especially at how their traversal number does change along a path of heavy edges. 
From \Cref{max-degree}, we know that the number of traversals for a vertex $v$ is bounded by $d_{S^*}(v) \leq n$.
If we ask for the exact number of traversals for the less used outgoing edge, we ask for a number $x$ that is between $2 \leq x \leq \frac{ d_{S^*}(v) }{2}$. 

\begin{figure}[htb]
    \begin{center}
    \begin{tikzpicture}[>=latex]
      \draw (-1,0) node[label={[label distance=0.05cm]90:$v$}] (root) {};
      \filldraw (root) circle (\KnotenDiam);
      \draw (2*0.7,1) node (l) {}; \filldraw (l) circle (\KnotenDiam); 
      \draw (2*0.7,-1) node (r) {}; \filldraw (r) circle (\KnotenDiam); 
      \draw (5*0.7,1.5) node (16) {}; \filldraw (16) circle (\KnotenDiam); 
      \draw (5*0.7,0.5) node (17) {}; \filldraw (17) circle (\KnotenDiam);  
      \draw (5*0.7,-0.5) node (18) {}; \filldraw (18) circle (\KnotenDiam); 
      \draw (5*0.7,-1.5) node (19) {}; \filldraw (19) circle (\KnotenDiam); 
      \draw[->,line width=1] (-2.5,0) -- (root) node[above,sloped,midway] {$y$};
      \draw[->,dashed,line width=1] (root) -- (l) node[above,sloped,midway] {$x_1$};
      \draw[->,line width=1] (root) -- (r) node[below,sloped,midway] {$y-x_1$}; 
      \draw[->,dashed,line width=1] (l) -- (16) node[above,sloped,midway] {$x_2$};
      \draw[->,line width=1] (l) -- (17) node[below,sloped,midway] {$x_1-x_2$};
      \draw[->,dashed,line width=1] (r) -- (18) node[above,sloped,midway] {$x_3$};
      \draw[->,line width=1] (r) -- (19) node[below,sloped,midway] {$y-x_1-x_3$};
    \end{tikzpicture}
    \end{center}
    \caption{
    The incoming number of traversals $y$ at the vertex $v$ needs to be split up between the two outgoing multi-edges. 
    This continues recursively downwards in the tree. 
    An edge is dashed if it is light. 
    }\label{durchlauf-aufteilung}
  \end{figure}
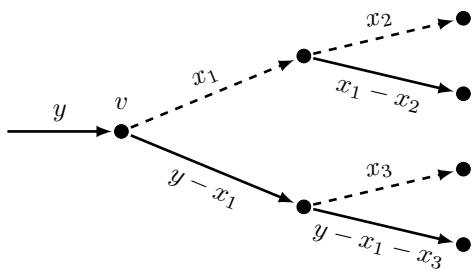

So, the traversals $d_{S^*}(v)$ have to be distributed to the two outgoing edges. 
\Cref{durchlauf-aufteilung} shows how $d_{S^*}(v)=y$ is distributed to the second level. 
The algorithm asks, at the vertex $v$, for the number of traversals of the light edge $x_1$ and this number will strongly influence the number of traversals for the successors of $v$. 
When the algorithm moves the explorer onto one of them, it asks again for the light edge and its number of traversals which again influences the number of traversals for its successors, and so on. 
Thus, we formulate a recursive function $g(y)$ that describes the costs of these questions. 

For a number of traversals $d_{S^*}(v) = y$ at a vertex $v$, the algorithm asks for the traversals of the {\em light} edges and thus it asks for a number $x_1$ with $2 \leq x_1 \leq \frac{y}{2}$. 
On the next level, we have to call this function two times. 
Once for $y-x_1$ and another time for $x_1$, because these are the traversal numbers for the successors. 
From \Cref{kein-ungerichteter-kreis}, we know that the edges of $\Ex$ induce a forest. 
Therefore, the function $g(y)$ becomes zero for all values $y < 4$ because, if the incoming number of traversals  is less than four, there cannot be two outgoing edges from $\Ex$ (see \Cref{possible-vertices} for all possible edge configurations at a vertex). 
Thus, we end up with the following recursive function \[ g(y) = \max_{2 \leq x \leq \frac{y}{2}} \{ \log(x) + 2 \log\log(x) + g(y-x) + g(x) \}\] with $g(1)=g(2)=g(3)=0$.
Recall that the function $g(y)$ describes the number of advice bits needed to get their number of traversals for all downwards directed light edges in one tree rooted at $v$. 
Thus, our goal is to bound this function linearly in its input $y$. 

\begin{lemma}\label{bound1}
The function $g(y) = \max_{2 \leq x \leq \frac{y}{2}} \{ \log(x) + 2 \log\log(x) + g(y-x) + g(x) \}$, with $g(1)=g(2)=g(3)=0$, can be bounded by 
\begin{align*}
g(y) \leq \frac{5}{2} y - 3\log(y) - 3 \leq \frac{5}{2} y \text{.}
\end{align*}
\end{lemma}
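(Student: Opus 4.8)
The plan is to prove the bound $g(y) \le \frac{5}{2}y - 3\log y - 3$ by strong induction on the integer $y$, establishing it for all $y \ge 4$. For $y \in \{1,2,3\}$ the right-hand side is already negative, so nothing is claimed there; these values enter only through the stored base data $g(1)=g(2)=g(3)=0$. The trailing inequality $\frac{5}{2}y - 3\log y - 3 \le \frac{5}{2}y$ is immediate from $\log y \ge 0$. First I would dispatch the base cases $y \in \{4,5,6,7\}$ by evaluating the recursion directly: since $x$ ranges only over $\{2,3\}$ when $y \le 7$, and $g$ vanishes on $\{1,2,3\}$, each $g(y)$ is a small explicit number (for instance $g(4)=1$, matching the bound with equality), and one checks the inequality by a short computation.

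For the inductive step I would take $y \ge 8$ and let $x$ be the maximizing argument, so $2 \le x \le y/2$ and hence $y - x \ge y/2 \ge 4$. The key structural point is that $y-x$ is always at least $4$, so the induction hypothesis applies to $g(y-x)$; for $g(x)$ I split into two cases. If $x \ge 4$, I apply the hypothesis to both $g(x)$ and $g(y-x)$: the two copies of $\frac{5}{2}$ combine to $\frac{5}{2}y$, and after cancelling that term it suffices to show $-2\log x + 2\log\log x + 3\log\frac{y}{y-x} \le 3$. Here the two governing estimates are $\log\log x \le \log x$ (since $\log x \le x$), which kills the first two terms, and $3\log\frac{y}{y-x} \le 3\log 2 = 3$ (since $y-x \ge y/2$). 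If instead $x \in \{2,3\}$, then $g(x)=0$; I apply the hypothesis only to $g(y-x)$, and the reduced goal becomes $-\frac{5}{2}x + \log x + 2\log\log x + 3\log\frac{y}{y-x} \le 0$. Using the same slack $3\log\frac{y}{y-x}\le 3$, this reduces to checking $\log x + 2\log\log x - \frac{5}{2}x + 3 \le 0$ for $x=2$ (value $-1$) and $x=3$ (value roughly $-1.59$).

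I expect the main obstacle to be precisely this bookkeeping around small arguments: because the claimed bound dips below zero for $y < 4$, the induction hypothesis is useless on arguments in $\{1,2,3\}$, so a naive application would break. The two facts that rescue the argument are that $x \le y/2$ forces $y - x \ge y/2$ (so $y-x$ never drops into the forbidden range once $y \ge 8$, while simultaneously supplying the $3\log 2 = 3$ of slack that the constant $-3$ in the bound absorbs), and that the only small surviving argument is $x \in \{2,3\}$, handled by hand. I would stress that the $-3\log y$ term is not cosmetic: it is exactly what furnishes the slack matched against $3\log\frac{y}{y-x}$, and the coefficient $\frac{5}{2}$ is tuned so that the per-question cost $\log x + 2\log\log x$ is amortized correctly in the $x\in\{2,3\}$ case.
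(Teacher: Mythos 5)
Your proof is correct and follows the same strategy as the paper: strong induction, with the two key estimates $2\log\log x \le 2\log x$ and $x \le y/2 \Rightarrow 3\log\frac{y}{y-x} \le 3\log 2 = 3$, the latter being absorbed by the constant $-3$ in the bound. In fact, your version is \emph{more} careful than the paper's own proof on exactly the point you flag as the main obstacle: the paper verifies only the base case $y=4$ and then applies the induction hypothesis to $g(x)$ uniformly for all $x<y$, including $x\in\{2,3\}$, where the claimed bound $\frac{5}{2}x - 3\log x - 3$ is negative while $g(x)=0$ — so that application is, strictly speaking, invalid (and similarly the hypothesis is silently used on $g(y-x)$ when $y-x=3$, which occurs for $y\in\{5,6\}$). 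Your case split ($x\ge 4$ handled by the double application of the hypothesis; $x\in\{2,3\}$ handled by the explicit check $\log x + 2\log\log x - \frac{5}{2}x + 3 \le 0$), together with the extra base cases $y\in\{4,\ldots,7\}$ ensuring $y-x\ge 4$ in the inductive step, closes this gap. So the two arguments share the same skeleton and the same slack accounting, but yours is the one that is airtight as written.
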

\begin{proof}
We prove the statement by induction.

For $y=4$, the only possible value for $x$ is $2$:
\begin{align*}
g(4) \leq \log(2) + 2 \log\log(2) + g(4-2) + g(2) = 1 \leq \frac{5}{2} \cdot 4 - 3\log(4) - 3 = 10 - 6 - 3 = 1 \text{.}
\end{align*}

Now we assume that the statement holds for all natural numbers smaller than $y$. 
Therefore, we can use the induction hypothesis to bound the function $g(x)$ for all values $x<y$. 
Thus,  
\begin{align*}
g(y) &= \max_{2 \leq x \leq \frac{y}{2}} \{ \log(x) + 2 \log\log(x) + g(y-x) + g(x) \} \\
&\leq    \max_{2 \leq x \leq \frac{y}{2}} \{ \log(x) + 2 \log\log(x) + \frac{5}{2} (y-x) - 3\log(y-x) -3 + \frac{5}{2} x - 3\log(x) -3 \} \\
&=         \max_{2 \leq x \leq \frac{y}{2}} \{ 2 \log\log(x) + \frac{5}{2}y - 3\log(y-x)- 2\log(x) -6 \} \\
&\leq    \max_{2 \leq x \leq \frac{y}{2}} \{ \frac{5}{2}y - 3\log(y-x) - 6 \} \\
&\leq    \max_{2 \leq x \leq \frac{y}{2}} \{ \frac{5}{2} y - 3\log(y) - 3 \} \text{,}
\end{align*}
where the last inequality can be proven as follows. 
For any $x \in \{ 2, \ldots, \frac{y}{2}  \}$, the equation 
\[ \frac{5}{2}y - 3(\log(y-x) + 1) - 6 \leq \frac{5}{2}y - 3\log(y) - 3\] 
is equivalent to 
\begin{equation}\label{eq2}
\log(y) - 1 \leq \log(y-x) \text{.}
\end{equation}
Since $2 \leq x \leq \frac{y}{2}$, the right-hand-side of \eqref{eq2} is between $\log(\frac{y}{2})$ and $\log(y-2)$ and thus, always larger than the left-hand-side of \eqref{eq2}.
Thus, the bound for $g(y)$ is correct.  
\end{proof}

If all edges from $\Ex$ are contained in one tree, the largest possible value for $y$ is $n$ (see \Cref{max-degree}) and it is not possible that a second edge with two successive outgoing multi-edges has the same value. 
Otherwise, we have $k \geq 2$ trees and we know from \Cref{sum-traversals} that, for any set of representative vertices from the $k$ trees, the sum of their traversal numbers is bounded by $m$. 
Thus, the sum of advice bits for the $k$ different trees is bounded by $\sum_{i=1}^k g(d_{(S^*)}(v_i)) \leq \sum_{i=1}^k \frac{5}{2} d_{(S^*)}(v_i) = \frac{5}{2} m$. 

Until now, we assumed that the algorithm knows the number of incoming traversals. 
We now extend the approach to distinguish also the incoming edges between heavy and light. 
If the algorithm knows, for each pair of incoming and outgoing edges, one of the traversal numbers, it can compute the number of traversals for the heavy edges in a bottom-up way. 
At the end, the algorithm knows the exact number of traversals for every edge in the already visited parts of the graph. 

For every vertex that has two incoming edges from $\Ex$, the explorer asks which of the two edges $e_1$ and $e_2$ is light and its exact number of traversals $\min(\#_{S^*}(e_1) , \#_{S^*}(e_2))$. 
In \Cref{incoming-traversal}, these edges are represented by the gray edges with two arrowheads. 
The incoming edges which are black and have two arrowheads are heavy. 
Note that the bold black edges are heavy outgoing edges, but they can be light or heavy as incoming edge. 

\begin{figure}[htb]
    \begin{center}
    \begin{tikzpicture}[>=latex]
      \draw (0,0) node[label={[label distance=0.09cm]180:$v$}] (v) {};\filldraw (v) circle (\KnotenDiam);
      \draw (8,0) node[label={[label distance=0.09cm]180:$u$}] (u) {};
      
      \foreach \x/\y/\name in {1/1/l1, 1/-1/r1, 2/0/l2, 2/-2/r2, 3/1/l3, 3/-1/r3, 4/2/l4, 4/0/r4, 5/1/l5, 5/-1/r5, 6/2/l6, 6/0/r6, 7/-1/r7, 8/0/l8, 8/-2/r8, 9/1/l9, 9/-1/r9} {\draw (\x,\y) node (r) {}; \filldraw (\name) circle (\KnotenDiam);}
      \foreach \parent/\child in {v/l1, r1/r2, l3/l4, l5/l6, r7/r8} 
      {\draw[->,gray,line width=1] (\parent) -- (\child) node[above,sloped,midway] {};}
      \foreach \parent/\child in {v/r1, r1/l2, l2/l3, l3/r4, r4/l5, l5/r6, r6/r7, r7/l8} 
      {\draw[->,line width=2] (\parent) -- (\child) node[above,sloped,midway] {};}
      \foreach \parent/\child in {l2/r3, r4/r5, l8/l9, l8/r9} 
      {\draw[->,gray,dashed,line width=1.0] (\parent) -- (\child) node[above,sloped,midway] {};}

      \foreach \x/\y/\child in {-1/0.5/v, 1/0.5/l2, 4/1.5/l5}
      {\draw[->,gray,dashed,line width=1.0] (\x,\y) -- (\child) node[above,sloped,midway] {};}
      \foreach \x/\y/\child in {-1/-0.5/v, 0/-1.5/r1, 3/-0.5/r4}
      {\draw[->>,line width=2.0] (\x,\y) -- (\child) node[above,sloped,midway] {};}
      \foreach \x/\y/\child in {2/1.5/l3, 5/-0.5/r6}
      {\draw[->>,gray,line width=1.0] (\x,\y) -- (\child) node[above,sloped,midway] {};}
    \end{tikzpicture}
    \end{center}
    \caption{The vertex $v$ is the first one that has at least one outgoing edge from $\Ex$. 
    The black, gray and dashed edges are like in \Cref{outgoing-traversal}. 
    If an incoming edge with two arrowheads is black, it is heavy because either it is the only incoming multi-edge or the advice told us that it is heavy. 
    The gray incoming edges with two arrowheads are light. 
    }\label{incoming-traversal}
\end{figure}
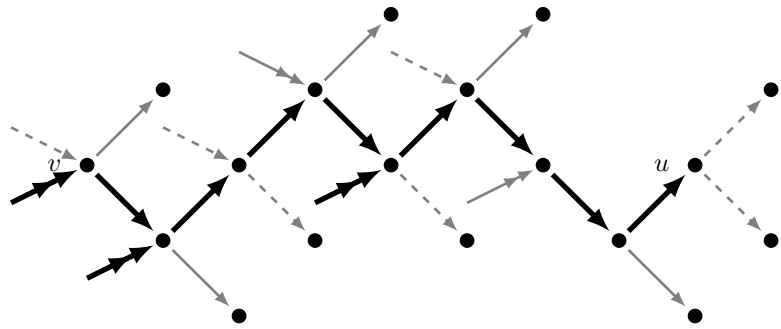

The number of advice bits that is necessary to get the exact number of traversals for the incoming edges can be bounded by the function $g(y)$ from \Cref{bound1}, as well. 
Thus, the algorithm needs linearly many advice bits to ask for the number of traversals for all light edges. 
As a last step, we need to explain how the traversals for the heavy edges can be calculated when the algorithm moves the explorer onto a leaf of the tree of multi-edges. 

If the algorithm moves the explorer onto a leaf $u$, it walked a path of heavy edges and can compute the number of traversals for all edges that are adjacent to this path. 
The algorithm knows that the incoming heavy edge of $u$ is used precisely two times although it did not ask for the number of traversals for this edge. 
Thus, the preceding vertex $w$ has either one outgoing edge from which the algorithm now knows the number of traversals or it has an additional one which then must be light and therefore the algorithm asked for its number of traversals. 
Therefore, we know the number of traversals for all outgoing edges of $w$. 
The oracle assured that there is exactly one incoming edge for which we do not know the precise number of traversals, the heavy one. 

Due to the Eulerian property of the graph, the explorer is able to compute the number of traversals for this edge. 
Thus, we know for $w$ the number of traversals for all incoming edges. 
One of these edges was used because it was an outgoing heavy edge and we did not know its number of traversals. 
Thus, the algorithm repeats its computation to get the traversal number for all adjacent edges of the predecessor of $w$. 
This repeats until the calculations reach the start vertex $v$ because from there on we do not have (or know) a predecessor with an outgoing multi edge. 
These computations are visualized in \Cref{incoming-traversal-questions}. 

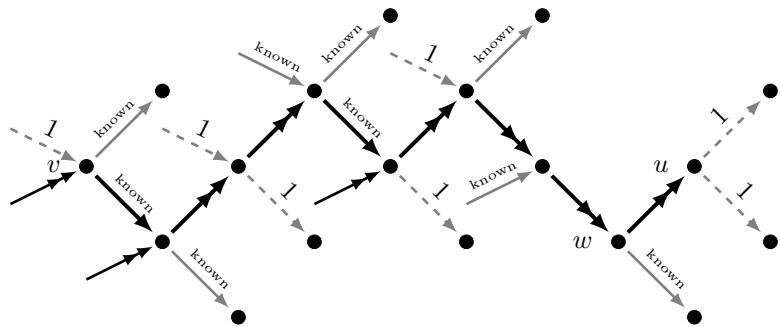
\begin{figure}[htb]
    \begin{center}
    \begin{tikzpicture}[>=latex]
      \draw (0,0) node[label={[label distance=0.09cm]180:$v$}] (v) {};\filldraw (v) circle (\KnotenDiam);
      \draw (8,0) node[label={[label distance=0.09cm]180:$u$}] (u) {};
      \draw (7,-1) node[label={[label distance=0.09cm]180:$w$}] (w) {};
      
      \foreach \x/\y/\name in {1/1/l1, 1/-1/r1, 2/0/l2, 2/-2/r2, 3/1/l3, 3/-1/r3, 4/2/l4, 4/0/r4, 5/1/l5, 5/-1/r5, 6/2/l6, 6/0/r6, 7/-1/r7, 8/0/l8, 8/-2/r8, 9/1/l9, 9/-1/r9} {\draw (\x,\y) node (r) {}; \filldraw (\name) circle (\KnotenDiam);}
      \foreach \parent/\child in {v/l1, r1/r2, l3/l4, l5/l6, r7/r8} 
      {\draw[->,gray,line width=1] (\parent) -- (\child) node[above,sloped,midway] {\textcolor{black}{\tiny{known}}};}
      \foreach \parent/\child in {r1/l2, l2/l3, r4/l5, l5/r6, r6/r7, r7/l8} 
      {\draw[->>, line width=1.5] (\parent) -- (\child) node[above,sloped,midway] {};}
      \foreach \parent/\child in {l2/r3, r4/r5, l8/l9, l8/r9} 
      {\draw[->,gray,dashed,line width=1.0] (\parent) -- (\child) node[above,sloped,midway] {\textcolor{black}{$1$}};}

      \foreach \x/\y/\child in {-1/0.5/v, 1/0.5/l2, 4/1.5/l5}
      {\draw[->,gray,dashed,line width=1.0] (\x,\y) -- (\child) node[above,sloped,midway] {\textcolor{black}{$1$}};}
      \foreach \x/\y/\child in {-1/-0.5/v, 0/-1.5/r1, 3/-0.5/r4}
      {\draw[->>,line width=1.0] (\x,\y) -- (\child) node[above,sloped,midway] {};}
      \foreach \x/\y/\child in {2/1.5/l3, 5/-0.5/r6}
      {\draw[->,gray,line width=1.0] (\x,\y) -- (\child) node[above,sloped,midway] {\textcolor{black}{\tiny{known}}};}

      {\draw[->,line width=1.5] (v) -- (r1) node[above,sloped,midway] {\textcolor{black}{\tiny{known}}};}
      {\draw[->,line width=1.5] (l3) -- (r4) node[above,sloped,midway] {\textcolor{black}{\tiny{known}}};}
    
    \end{tikzpicture}
    \end{center}
    \caption{
    A black edge with two arrowheads is a heavy incoming and a heavy outgoing edge. 
    If a black edge has only one arrowhead, it is a heavy outgoing edge, but also a light incoming edge. 
    A gray dashed edge is from the set $\Ee$ and therefore labeled with its number of traversals. 
    We observe that every path of black edges has to end at a vertex $u$ with two outgoing edges from $\Ee$. 
    Thus, the algorithm knows or can compute the exact number of traversals for every edge adjacent to the path. 
    }\label{incoming-traversal-questions}
\end{figure}

 It is not necessarily required that the path of heavy edges ends at a leaf. 
The important property of a leaf of a multi-edge tree is that the number of traversals for the leaf is known. 
Thus, if the algorithm moves the explorer along a path of {\em heavy} edges through a tree of multi edges and hits a once traversed path, it knows the number of traversals from that point on, as for a leaf. 
With this approach, the algorithm knows the number of traversals for all adjacent edges without exhausting any edge. 
Because it is impossible that the algorithm exhausts one of the multi-edges, the case $\#_{S}(e) - \#_{S^*}(e) = 0$ can happen only if the algorithm already knows the optimal number of traversals. 
Thus, all the necessary knowledge for \Cref{sicher-weg} is gathered before step (c) occurs. 
Recall that, in step (a) and (b), the algorithm chooses just interchangeable cycles. 
These observations immediately imply the following lemma. 

\begin{lemma}\label{known-traversals}
If the explorer knows the number of traversals for every incident edge on a path $(v_1, \ldots , v_k)$ of heavy edges, it can also compute the exact number of traversals for every edge on the path. \hfill$\Box$
\end{lemma}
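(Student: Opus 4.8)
The plan is to reconstruct the heavy path edges from the leaf end backwards, relying on the Eulerian balance $d_\Delta(v)=0$ at each vertex together with the fact that every incident edge which is \emph{not} on the heavy path is already known. Recall that the edges incident to the path vertices $v_1,\dots,v_k$ that branch off the path are exactly the $\Ee$-edges (each traversed once) and the light multi-edges (whose traversal numbers the algorithm has explicitly requested, see \Cref{heavy-light} and the preceding discussion). By hypothesis, then, every off-path edge incident to a path vertex has a known traversal number, so the only unknowns are the heavy on-path edges $(v_i,v_{i+1})$, and the task is to determine these from the balance conditions. This is the same bottom-up mechanism as in \Cref{berechne-mehrfach}, now restricted to the heavy path traversed during the online exploration.

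First I would establish the base case at the terminal vertex $v_k$. Since $v_k$ is a leaf of the multi-edge forest $F=(V,\Ex)$ (or the point where the heavy path meets an already once-traversed path), it carries no outgoing heavy edge, and by the bounded out-degree its outgoing edges all lie in $\Ee$, so $d_{S^*}(v_k)$ is known. Because $M_{S^*}$ is Eulerian we have $d_\Delta(v_k)=0$, so the single unknown incoming heavy edge satisfies
\[ \#_{S^*}(v_{k-1},v_k) = \sum_{(v_k,x)\in E}\#_{S^*}(v_k,x) - \sum_{(x,v_k)\in E,\ x\neq v_{k-1}}\#_{S^*}(x,v_k), \]
where every term on the right is already known. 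This determines $\#_{S^*}(v_{k-1},v_k)$.

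For the inductive step I would process the path backwards from $v_k$ to $v_1$. When the computation reaches a vertex $v_i$ with $1<i<k$, its outgoing heavy edge $(v_i,v_{i+1})$ has already been computed in the previous step and now enters the balance as a known summand, while all other incident edges are off-path and hence known. The forest property from \Cref{kein-ungerichteter-kreis} together with the degree bound guarantees that $v_i$ has at most one incoming and one outgoing multi-edge on the path, so the incoming heavy edge $(v_{i-1},v_i)$ is the unique remaining unknown; applying $d_\Delta(v_i)=0$ exactly as in the base case (with the known outgoing heavy term moved to the right-hand side) solves for it. Iterating down to $v_1$ yields the traversal numbers of all heavy path edges, which is precisely the claimed reconstruction.

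The main obstacle is verifying the single-unknown invariant at every vertex: one must ensure that, after the off-path edges are accounted for, exactly one heavy edge remains unknown at the vertex currently processed. This is where the bounded in- and out-degree and the acyclicity of $\Ex$ (\Cref{kein-gerichteter-kreis,kein-ungerichteter-kreis}) are essential, and where the distinction between the leaf base case and an internal vertex has to be handled carefully, since only at the terminal vertex is the outgoing heavy edge absent rather than inherited from the previous step. Once this invariant is in place, each vertex contributes exactly one balance equation for exactly one unknown, and the backward sweep terminates at $v_1$ having labelled the whole path.
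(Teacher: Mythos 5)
Your proof is correct and takes essentially the same approach as the paper: the paper's preceding observations likewise resolve the heavy path by a backward sweep from the leaf (or from the point where the path meets a once-traversed path), using the Eulerian balance $d_\Delta(v)=0$ and the fact that at each step exactly one incident edge --- the incoming heavy path edge --- remains unknown. The only cosmetic difference is that the paper notes the base-case value at the leaf is necessarily $2$, whereas you derive it directly from the balance equation, which is equally valid.
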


Before we present the final results for the advice complexity, we would like to summarize the most important points. 
Every vertex in the input graph has at most two outgoing and two incoming edges. 
The algorithm asks for advice to classify every adjacent edge, also invisible incoming ones, into one of the three sets $\En$, $\Ee$ or $\Ex$. 
For every vertex $v$ in $G_{S^*}=(V, \Eu)$ that has two outgoing or incoming edges from $\Ex$, the algorithm asks which edge is light. 
For every light edge, it asks for the number of traversals. 
Additionally, if $v$ has two outgoing edges from $\Eu$, the algorithm asks which edge is {\em last}. 
Because the algorithm traverses a path of heavy edges which must end at a vertex with known number of traversals, it can compute the number of traversals for the heavy edges in a bottom-up way. 

\begin{theorem}\label{unbekannt-gerichtet-grad}
There exists an online algorithm which solves the cyclic graph exploration problem using $4n + (\log(3) + 5)m$ bits of advice on a given unknown directed graph with bounded degree.
\end{theorem}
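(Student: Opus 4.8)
The plan is to assemble the advice budget from the quantities established in the preceding lemmas and to invoke \Cref{sicher-weg} for correctness; no new structural argument is required, so the proof reduces to a careful bookkeeping of advice bits. The algorithm behaves exactly as described above: on the first visit of a vertex it classifies all incident edges into $\En$, $\Ee$, and $\Ex$, it learns which incoming and outgoing multi-edges are \emph{light} and which outgoing edge is marked \emph{last}, and for every light edge it queries the exact traversal number. The traversal numbers of the heavy edges are never queried; they are reconstructed bottom-up whenever the explorer reaches a leaf of the multi-edge forest $F=(V,\Ex)$.

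First I would check that the hypotheses of \Cref{sicher-weg} hold throughout the execution. By \Cref{incoming-edges} the algorithm knows, for every edge, its membership in $\En$, $\Ee$, or $\Ex$ on the first visit of an incident vertex, and by \Cref{light-last} it knows the \emph{last} edge of every vertex. What remains is that, whenever the algorithm must distinguish the cases $\#_{S^*}(e)-\#_{S}(e)>1$, $=1$, and $=0$, it already possesses the relevant traversal counts. This is precisely \Cref{known-traversals}: since the explorer walks a path of heavy edges that terminates at a vertex of known traversal number (a leaf of $F$, or a vertex entered through an already once-traversed path), the number of traversals of every edge incident to that path can be recovered in the bottom-up manner described above. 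Hence, before step~(c) of \Cref{sicher-weg} can ever apply, all the needed traversal numbers are known, and \Cref{sicher-weg} then guarantees that the computed sequence is an optimal exploration sequence.

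It then remains to sum the advice. Classifying all edges together with the in-degree bits costs $n+\log(3)m$ by \Cref{incoming-edges}. The three binary markings per vertex---the light incoming edge, the light outgoing edge, and the \emph{last} outgoing edge---cost at most $3n$ by \Cref{light-last}. For the exact traversal numbers of the light edges I would bound the outgoing and the incoming contributions separately. For the outgoing light edges of a tree with representative vertex $v_i$, the total cost is $g(d_{S^*}(v_i))$, where $g$ already absorbs the self-delimiting encoding overhead; by \Cref{bound1} this is at most $\frac{5}{2}d_{S^*}(v_i)$, and summing over all trees and applying \Cref{sum-traversals} gives $\sum_i g(d_{S^*}(v_i))\le\frac{5}{2}\sum_i d_{S^*}(v_i)\le\frac{5}{2}m$. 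The incoming light edges are handled by the symmetric argument applied to the reversed orientation, contributing a further $\frac{5}{2}m$. Adding the four contributions yields
\begin{align*}
\bigl(n+\log(3)m\bigr)+3n+\frac{5}{2}m+\frac{5}{2}m = 4n+(\log(3)+5)m\text{,}
\end{align*}
as claimed.

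The step I expect to require the most care is the separate accounting of the incoming light edges: one must confirm that the per-tree decomposition and the bound of \Cref{sum-traversals}---stated above for outgoing traversals---carry over verbatim to the in-going direction by running the same argument on the reversed graph. The outgoing and incoming queries are genuinely distinct questions (''which outgoing edge is light and how often is it used'' versus the incoming analogue), so charging $\frac{5}{2}m$ to each produces a valid upper bound of $5m$ on all traversal-number bits, even in the worst case where a single edge of $\Ex$ is queried from both of its endpoints.
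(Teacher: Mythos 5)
Your proposal is correct and follows essentially the same route as the paper's own proof: it assembles the same three advice components ($n+\log(3)m$ from \Cref{incoming-edges}, $3n$ from \Cref{light-last}, and $2\cdot\frac{5}{2}m$ for the light-edge traversal numbers via \Cref{bound1} and \Cref{sum-traversals}) and invokes \Cref{sicher-weg} together with \Cref{known-traversals} for correctness. Your explicit justification that the incoming-edge accounting follows by symmetry on the reversed orientation is slightly more detailed than the paper's terse "count $g(y)$ two times," but it is the same argument.
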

\begin{proof}
Assume we have a unit cost function. 
The costs become irrelevant after the edges are classified into the disjoint sets $\En$, $\Ee$ or $\Ex$, because the correct number of traversals for each edge from $\Ex$ is sufficient to compute an optimal exploration sequence. 
The number of traversals can be determined independent of the costs. 
We will use the strategy described in the proof of \Cref{sicher-weg} for our algorithm and summarize the number of advice bits which is sufficient to apply \Cref{known-traversals}. 
\begin{enumerate}
\item 
For each edge $e=(v,w)$, the algorithm reads which of the following cases hold: $e \in \En$ or $e \in \Ee$ or $e \in \Ex$. 
From \Cref{incoming-edges}, we know that $n + \log(3)m$ advice bits suffice. 
All the edges from $\En$ are ignored. 
\item
For every vertex in $G_{S^*}$ with two outgoing edges, the algorithm reads one bit of advice to know which edge is marked as {\em last}.
Additionally, it asks for every pair of equally directed multi-edges which of them is light. 
In \Cref{light-last}, we saw that this information consumes $3n$ bits of advice. 
\item
The last part of the algorithm that needed advice was the number of traversals for the light edges. 
We developed the recursive formula \[ g(y) = \max_{2 \leq x \leq \frac{y}{2}} \{\log(x) + 2 \log\log(x) + g(y-x) + g(x) \} \] that describes these costs. 
But we have to ask for the number of traversals for the incoming and the outgoing light edges. 
Therefore we need to count the function $g(y)$ two times. 
From \Cref{sum-traversals} and \Cref{bound1}, we know that $2g(y) \leq 2g(m) \leq 5m$ bits of advice suffice. 
\end{enumerate}
So, with $4n + (\log(3) + 5)m$ bits of advice, the algorithm has the necessary prerequisites to use the approach from \Cref{sicher-weg}. 
\end{proof}

Now, we can apply the main algorithm from \Cref{unbekannt-gerichtet-grad} to the non-cyclic variants of the problem. 
So, the algorithm is asked to compute the cheapest path that visits every vertex at least once, starting at $v_0$. 
In the following, we work similar to \Cref{sec:known}, where we developed, from the algorithm in \Cref{bekannt-gerichtet} for the cyclic graph exploration problem, similar algorithms for the non-cyclic exploration problem. 
Because the algorithm does only know the vertex identifiers of already seen vertices, we cannot use the same approach as in \Cref{bekannt-gerichtet-b}, where the oracle wrote the identifier of the last vertex in the exploration sequence. 

Instead, the oracle encodes the position of the last vertex in the order in which the vertices will be explored. 
So, the oracle computes the optimal non-cyclic exploration sequence for $G$ and provides the same advice as described in \Cref{unbekannt-gerichtet-grad} such that the algorithm knows, for every edge, its traversal number. 
The algorithm knows that it should compute a non-cyclic exploration sequence and that the Eulerian property is not fulfilled at the starting vertex and the end vertex. 
Therefore, it adds an edge $e$ from the last vertex in the non-cyclic exploration sequence to the starting vertex. 
The edge $e$ is {\em last} and in $\Ee$. 
Thus, the algorithm will use the advice to explore the whole graph as in \Cref{sicher-weg}, ends at the correct vertex and knows that the added edge $e$, which would close the cycle, cannot be used, because it is not part of the given graph. 

\begin{lemma}\label{same-a}
   Assume we have an algorithm $\mathcal A$ with advice that solves the problem cyclic graph exploration problem on directed graphs reading $a(n,m)$ bits of advice. 
   Assume also that $\mathcal A$ does not use the cost function. 
   Then $\mathcal A$ can be adjusted to solve the non-cyclic graph exploration problem on directed graphs reading $a(n,m)+\lceil\log(n-1)\rceil$ bits of advice. 
\end{lemma}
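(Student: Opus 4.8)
The plan is to reduce the non-cyclic problem to the cyclic one by closing the optimal path into a cycle with a single virtual edge. Let $S^*=(v_0,\ldots,v_{end})$ be a fixed optimal non-cyclic exploration sequence for $G$ and let $v_{end}$ be its final vertex; we may assume $v_{end}\neq v_0$, since otherwise $S^*$ is already cyclic and $\mathcal A$ applies directly. The oracle builds the auxiliary graph $G'=(V,E\cup\{e\})$ with the fresh edge $e=(v_{end},v_0)$ and gives it a tiny cost. Appending $e$ to $S^*$ yields the cyclic sequence $S^*_c=(v_0,\ldots,v_{end},v_0)$, a cyclic exploration sequence of $G'$. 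The idea is to let $\mathcal A$ reconstruct $S^*_c$ on $G'$ and then simply drop the last, virtual edge to recover $S^*$.

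First I would establish that $S^*_c$ is an optimal cyclic exploration sequence of $G'$ that uses $e$ exactly once. A non-cyclic exploration ending at $v_{end}$ has $d_\Delta(v_0)=1$, $d_\Delta(v_{end})=-1$ and $d_\Delta(v)=0$ otherwise, so adding one copy of $e$ restores $d_\Delta\equiv 0$ and makes the underlying multigraph Eulerian; conversely, deleting one traversal of $e$ from any cyclic exploration of $G'$ that uses $e$ once gives a non-cyclic exploration of $G$ from $v_0$ to $v_{end}$. Since $\mathcal A$ ignores the cost function, it suffices that the two solutions have matching traversal numbers. Choosing the cost of $e$ small and perturbing the remaining costs as in \Cref{lem:trav-number} makes the optimal cyclic solution of $G'$ unique and forces $e$ into $\Ee$, so that $S^*_c$ is exactly the solution $\mathcal A$ is guaranteed to reconstruct.

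Next I would describe the online simulation. The difficulty is that the explorer learns the identifiers only during the walk, so it cannot recognise $v_{end}$ a priori and thus cannot know where to insert the virtual edge. To resolve this, the oracle prepends to the advice the rank of $v_{end}$ in the order in which vertices are first visited; as $v_{end}\neq v_0$, this rank lies in $\{1,\ldots,n-1\}$ and costs $\lceil\log(n-1)\rceil$ bits. The algorithm maintains a counter of distinct vertices seen, and when it matches the communicated rank it knows the current vertex is $v_{end}$ and simulates the extra outgoing edge $e\in\Ee$, marked as the \emph{last} edge of $v_{end}$. From then on it runs $\mathcal A$ verbatim on $G'$; whenever $\mathcal A$ would finally traverse $e$ to close the cycle, the algorithm instead halts at $v_{end}$, because $e\notin E$. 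By \Cref{sicher-weg} applied to $G'$, the reconstructed prefix stays expandable and $e$ is taken only as the very last edge, so the produced sequence is precisely $S^*$.

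Finally, the advice accounting. Running $\mathcal A$ on $G'$ formally concerns $m+1$ edges, but the single extra edge $e$ needs no advice of its own: once $v_{end}$ is recognised, the algorithm already knows that $e\in\Ee$, that it is used exactly once, and that it is \emph{last}, so none of the per-edge classification, light/heavy, or traversal-count advice (cf.\ \Cref{incoming-edges,light-last,bound1}) is spent on $e$. Hence $\mathcal A$ consumes the same $a(n,m)$ bits it would use on the real edges, giving the total $a(n,m)+\lceil\log(n-1)\rceil$. The step I expect to be the main obstacle is exactly this accounting: one must argue that the virtual edge is advice-free and that inserting it online via the first-visit rank genuinely makes $\mathcal A$ behave as on the cyclic instance $G'$, without ever forcing the explorer across the non-existent edge before its final return to $v_{end}$.
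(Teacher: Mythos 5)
Your proposal is correct and follows essentially the same route as the paper: encode the first-visit rank of $v_{end}$ with $\lceil\log(n-1)\rceil$ bits, virtually add the edge $e=(v_{end},v_0)$ classified in $\Ee$ and marked as \emph{last} to restore the Eulerian property, run the cyclic algorithm, and suppress the final traversal of $e$, exploiting that $\mathcal A$ ignores edge costs. Your additional justification of the correspondence between optimal cyclic solutions of $G'$ and optimal non-cyclic solutions of $G$ is a welcome elaboration of a step the paper leaves implicit, but it is not a different method.
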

\begin{proof}
Assume the algorithm has to explore the graph $G=(V,E)$, the explorer starts at vertex $v_0 \in V$ and the oracle knows the optimal solution for the non-cyclic graph exploration problem $S^*=(v_0, \ldots, v_{end})$. 
With $\log(n)$ bits of advice, the oracle encodes how many newly visited vertices will be explored until $v_{end}$ is visited for the first time. 
It provides advice such that the algorithm knows the number of traversals, for all edges adjacent to visited vertices. 
When the algorithm arrives at the vertex $v_{end}$ it virtually adds an edge $e=(v_{end},v_0)$, with $\#_{S^*}(e)=1$, and marks it as {\em last}. 
The edge $e$ makes $M_{S^*}=(V, \Eu \cup \{e\}, \#_{S^*})$ Eulerian and we can use the algorithm from \Cref{sicher-weg} to compute a cyclic exploration sequence. 
Our algorithm knows that the exploration sequence on $G_{S^*}=(V, \Eu \cup \{e\})$ will be of the form $S=(v_0, \ldots, v_{end}, v_0)$. 
The advice from the oracle is used to compute a non-cyclic sequence until all vertices from $V$ are visited and we skip the last exploration step. 
Note that we use the fact that the cost function is ignored by the algorithm. 
\end{proof}

Now, we discussed all presented graph exploration problems for directed graphs with bounded degree. 
Before looking at undirected graphs, we show how every given graph can be transformed into a degree-bounded graph such that we can use the algorithms from \Cref{unbekannt-gerichtet-grad} and \Cref{same-a} for any given directed graph. 
\section{General Unknown Directed Graphs}
\label{sec:general}
We now explain how our algorithms can be adapted to solve the graph exploration problem on general directed graphs. 
To count the number of read advice bits more easily, we transform the given unbounded graph $G$ into a bounded graph $H$ with in- and out-degree boundded by $2$, which has an increased number of vertices and edges. 
To be more precisely, $H$ is constructed from $G_{S^*}=(V, \Eu)$. 
The algorithm and the oracle agree on this construction and the oracle will provide the advice for $H$. 
One step in $G_{S^*}$ will be represented by a sequence of steps in $H$. 

To construct $H$, the algorithm needs to know the number of incoming edges. 
The approach from \Cref{incoming-edges} only works if we know that there are one or two incoming edges. 
If the number is between $1 $ and $n-1$, the algorithm starts to classify the incoming edges with advice until it reads a delimiter that tells the algorithm that there are no more incoming edges. 
\begin{lemma}\label{incoming-edges-advanced}
The algorithm can learn for every edge its membership to the sets $\En$, $\Ee$, or $\Ex$ on the first visit of an incident vertex using $2(n+m)$ advice bits overall. 
\end{lemma}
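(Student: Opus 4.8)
The plan is to adapt the classification scheme of \Cref{incoming-edges} to unbounded in-degree. The only place where that proof used the degree bound was the single in-degree bit per vertex, which worked precisely because $d_{in}(v)\in\{1,2\}$. Now $d_{in}(v)$ may be any value between $1$ and $n-1$, so a fixed-length field can no longer announce how many hidden incoming edges remain to be classified; instead the oracle will terminate the list of incoming classifications with a delimiter, exactly as sketched informally before the statement.

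Concretely, I would fix a four-symbol alphabet consisting of three symbols for the classes $\En$, $\Ee$, $\Ex$ together with one stop symbol, and encode every symbol with two bits. The governing invariant is that each edge is classified exactly once, at the first visit of whichever of its two endpoints is reached first. When the explorer first reaches a vertex $v$, the algorithm works in two phases. In the outgoing phase it reads one class symbol for every outgoing edge $(v,w)$ whose head $w$ has not yet been visited; since $N_{out}(v)$ is visible, the algorithm knows how many such symbols to read and needs no delimiter. In the incoming phase it reads a self-terminating stream: one class symbol for every incoming edge $(u,v)$ whose tail $u$ has not yet been visited, followed by a single stop symbol. Because the incoming edges are invisible, this stop symbol is the delimiter that tells the algorithm that no further incoming edge remains to be classified.

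The correctness rests on attributing each stream entry to a concrete edge, which I would handle exactly as the start-vertex case of \Cref{incoming-edges}: the oracle lists the incoming classifications of $v$ in the order in which their tails will be explored, and whenever the explorer later reaches a fresh vertex $u$ carrying an outgoing edge to an already-visited vertex $v$, the algorithm consumes the next unused slot of $v$'s stream. This is consistent because the two phases are mutually exclusive and exhaustive: if $u$ is visited before $v$, then $v$ is still unvisited at the first visit of $u$, so $(u,v)$ is read in the outgoing phase at $u$ and is \emph{not} listed in $v$'s incoming stream; symmetrically, if $v$ is visited before $u$, then $(u,v)$ appears in $v$'s incoming stream and is skipped when $u$ is later reached. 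Hence every edge is classified once, and its membership is available at the first visit of an incident vertex.

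Counting is then immediate: the $m$ edges contribute one class symbol each, for $2m$ bits, and each of the $n$ vertices contributes exactly one stop symbol closing its incoming stream, for $2n$ bits, giving $2(n+m)$ advice bits overall. The main obstacle is not the arithmetic but the bookkeeping behind the invariant — guaranteeing that a class symbol and a stop symbol are never double-counted and that the oracle's ordering of each incoming stream matches the deterministic exploration order; once that invariant is secured, the bound follows directly.
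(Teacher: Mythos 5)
Your proposal is correct and matches the paper's own proof: the paper likewise replaces the one-out-of-three question by a one-out-of-four question whose fourth symbol is a stop/delimiter, orders each vertex's incoming classifications by the (deterministic) order in which their tails are first explored, and counts $2$ bits per edge plus one $2$-bit delimiter per vertex for $2(n+m)$ in total. Your explicit two-phase bookkeeping (skipping outgoing edges whose heads are already visited, and incoming edges whose tails are already visited) is just a spelled-out version of the paper's rule that a once-classified edge is never classified again.
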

\begin{proof}
The classification for the outgoing edges works as described in \Cref{incoming-edges}.

But, for general graphs, the in-degree of the vertices is unknown to the algorithm. 
To decode the advice, we change the one-out-of-three question to an one-out-of-four question, adding a delimiter that tells the algorithm that all adjacent edges of $v$ are classified. 
The oracle writes the advice bits for classification in the same order in which the incident vertices $u_i$ will be explored. 
Whenever the explorer is moved for the first time onto some predecessor $u_i$ of $v$, it knows that the edge $(u_i,v)$ was classified as the first incoming edge of $v$. 
And if the explorer reaches the second predecessor, the edge leading to $v$ is the one that was classified as the second incoming edge of $v$. 
Thus, the algorithm will read $2$ bits of advice to distinguish the following four cases for a potential incoming edge: 
\begin{enumerate}
\item The next incoming edge is from $\En$. 
\item The next incoming edge is from $\Ee$. 
\item The next incoming edge is from $\Ex$. 
\item There are no more incoming edges. 
\end{enumerate}
A once classified edge will never be classified again and therefore, the algorithm will read at most $2$ bits for every edge. 
Additionally, the delimiter is read exactly once for every vertex and therefore, the overall number of advice bits is bounded from above by $2(m+n)$. 
\end{proof}

For constructing the graph $H$, we need to introduce the terms \emph{compact out-tree} and \emph{compact in-tree}, which are used to replace a large number of directed edges. 

\begin{definition}
A \emph{compact out-tree} $T_{out}(v)$ is a directed binary tree, directed from the root $v$ to the leaves, that minimizes the maximum distance between $v$ and some leaf. 
Analogously, we define a \emph{compact in-tree} $T_{in}(v)$, which is directed from the leaves to the root $v$. 

The union of a compact in-tree and a compact out-tree, with the same root $v$, is called an \emph{in-out-tree} $T_v = T_{out}(v) \cup T_{in}(v)$. 
\end{definition}

When the algorithm visits a vertex $v$ for the first time, it uses the approach from \Cref{incoming-edges} to know, for every edge incident to $v$, about their membership to the three sets $\En$, $\Ee$ and $\Ex$. 

The algorithm replaces every vertex $v$ with more than two outgoing edges from $\Eu$ with $T_{out}(v)$. 
If $v$ has also more than two incoming edges from $\Eu$, it constructs $T_{in}(v)$ to replace these edges and merges the two trees. 
So, $v$ is replaced by an in-out-tree $T_v$. 
\Cref{bild-hoher-grad} shows a vertex with eleven incoming and nine outgoing edges and the resulting in-out-tree $T_v = T_{out}(v) \cup T_{in}(v)$. 
Note that all newly created edges in an in-out-tree, drawn in gray in \Cref{bild-hoher-grad}, are multi-edges by construction. 
Thus, the algorithm does not need to read additional advice to know their membership to the three sets. 

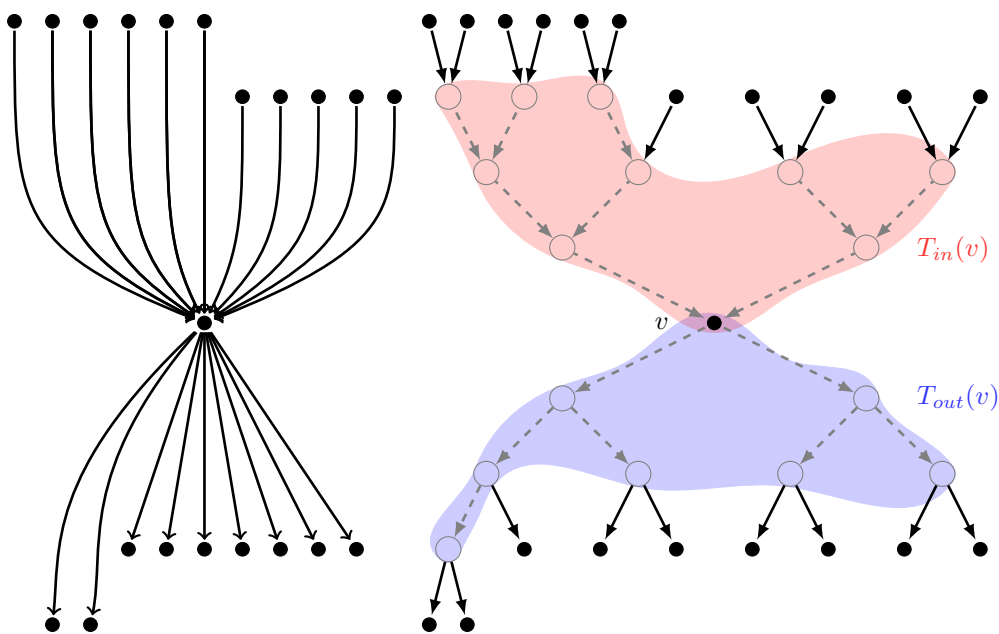
\begin{figure}[htb]
    \begin{center}
    \begin{tikzpicture}    
    \draw (2.5,2) node (r) {}; \filldraw (r) circle (\KnotenDiam); 
      \foreach \x in {1,...,2} {\draw (\x*0.5,-2) node (\x) {}; \filldraw (\x) circle (\KnotenDiam);} 
      \foreach \x in {3,...,9} {\draw (\x*0.5,-1) node (\x) {}; \filldraw (\x) circle (\KnotenDiam);} 
      \foreach \x in {1,...,2} {\draw[->,line width=1] (r) .. controls(\x*0.4,0) and (\x*0.6,-1) ..(\x);}  
      \foreach \x in {3,...,9} {\draw[->,line width=1] (r) -- (\x);} 
      \foreach \x in {0,...,5} {\draw (\x*0.5,6) node (\x) {}; \filldraw (\x) circle (\KnotenDiam);} 
      \foreach \x in {6,...,10} {\draw (\x*0.5,5) node (\x) {}; \filldraw (\x) circle (\KnotenDiam);} 
      \foreach \x in {0,...,10} {\draw[<-,line width=1] (r) .. controls(\x*0.5,3) and (\x*0.5,3.5) .. (\x) {};}  
      \foreach \x in {1,...,5} {\draw[<-,line width=1] (r) .. controls(\x*0.5,3) and (\x*0.5,3.5) .. (\x)  {};} 
\end{tikzpicture}\hskip1mm
\begin{tikzpicture}[>=latex]
\draw (3.75,4) node (root) {}; \filldraw (root) circle (\KnotenDiam); 
\draw (3.75,4) node[label={[label distance=2.5cm]15:\textcolor{red!80}{$T_{in}(v)$}}] (root) {}; 
\draw (3.75,4) node[label={[label distance=2.5cm]-15:\textcolor{blue!80}{$T_{out}(v)$}}] (root) {}; 
\draw (3.75,4) node[label={[label distance=0.35cm]180:$v$}] (root) {};
    
      {\filldraw (1.75,3) node[circle=0.9mm, draw, color=gray, minimum
        size=1pt] (l) {};} 
      {\filldraw (5.75,3) node[circle=0.9mm, draw, color=gray, minimum
        size=1pt] (r) {};} 
      \foreach \x in {16,17,18,19} {\filldraw (\x*2-31.25,2)
        node[circle=0.9mm, draw, color=gray, minimum size=1pt] (\x) {};} %
      \foreach \x in {9,10,...,15} {\draw (\x-8+0.25,1) node (\x) {}; \filldraw (\x) circle (\KnotenDiam);} 
      {\filldraw (0.25,1) node[circle=0.9mm, draw, color=gray, minimum
        size=1pt] (8) {};} 
      \foreach \x in {0,1} {\draw (\x*0.5,0) node (\x) {}; \filldraw (\x) circle (\KnotenDiam);} 
      \foreach \x/\parent in {l/root, r/root, 16/l, 17/l, 18/r, 19/r}
      {\draw[->,gray,dashed,line width=1] (\parent) -- (\x)
        node[above,sloped,midway] {};} 
      \foreach \x/\parent in {8/16} {\draw[->,gray,dashed,line width=1]
        (\parent) -- (\x) node[above,sloped,midway] {};} 
      \foreach \x/\parent in {0/8, 1/8, 11/17, 12/18, 13/18, 14/19, 15/19,
        9/16,10/17} {\draw[->,line width=1] (\parent) -- (\x)
        node[above,sloped,midway] {};}  
      
      \begin{pgfonlayer}{background}
        \draw[blue,fill=blue,opacity=0.2](root.north) to[closed,curve
        through={($(root.south west)!0.5!(l.north west)$) .. (l.north west)
          .. (16.north west) .. (16.west) .. (8.south west) .. (8.south) ..
          (8.east) .. ($(8.east)!0.5!(16.south east)$) .. (16.south east)
          .. (17.south) .. (18.south) .. (19.south east).. (19.east) ..
          (19.north east) .. (r.east) .. (r.north east) .. ($(r.north
          east)!0.5!(root.south west)$)}](root.north);
   \end{pgfonlayer}
      
   {\filldraw (1.75,5) node[circle=0.9mm, draw, color=gray, minimum
     size=1pt] (l) {};} 
   {\filldraw (5.75,5) node[circle=0.9mm, draw, color=gray, minimum
     size=1pt] (r) {};} 
   
   \foreach \x in {16,17,18,19} 
   {\filldraw (\x*2-31.25,6) node[circle=0.9mm, draw, color=gray,
     minimum size=1pt] (\x) {};}
   \foreach \x in {11,12,...,15} {\draw (\x-8+0.25,7) node (\x) {}; \filldraw (\x) circle (\KnotenDiam);} 
   {\filldraw (0.25,7) node[circle=0.9mm, draw, color=gray, minimum
     size=1pt] (8) {};} 
   {\filldraw (1.25,7) node[circle=0.9mm, draw, color=gray, minimum
     size=1pt] (9) {};} 
   {\filldraw (2.25,7) node[circle=0.9mm, draw, color=gray, minimum
     size=1pt] (10) {};} 
   \foreach \x in {0,1,...,5} {\draw (\x*0.5,8) node (\x) {}; \filldraw (\x) circle (\KnotenDiam);} 
   \foreach \x/\parent in {l/root, r/root, 16/l, 17/l, 18/r, 19/r}
   {\draw[<-,gray,dashed,line width=1] (\parent) -- (\x)
     node[above,sloped,midway] {};} 
   \foreach \x/\parent in {8/16, 9/16, 10/17} {\draw[<-,gray,dashed,line
     width=1] (\parent) -- (\x) node[above,sloped,midway] {};} 
   \foreach \x/\parent in {0/8, 1/8, 2/9, 3/9, 4/10, 5/10, 11/17, 12/18,
     13/18, 14/19, 15/19} {\draw[<-,line width=1] (\parent) -- (\x);} 
   
   \begin{pgfonlayer}{background}
     \draw[red,fill=red,opacity=0.2](root.south) to[closed,curve
     through={($(root.north west)!0.5!(l.south west)$) .. (l.south west) ..
       (16.south west) .. (16.west) .. (8.south west) .. (8.west) .. (8.north)
       .. (9.north) .. (10.north east) .. ($(10.east)!0.5!(17.north)$) ..
       (17.north east) .. (18.north) .. (19.north east).. (19.east) ..
       (19.south east)  .. (r.south east) .. ($(r.south east)!0.5!(root.south
       west)$)}](root.south); 
   \end{pgfonlayer}
    \end{tikzpicture}
    \end{center}
    \caption{
    A vertex of high degree is replaced by an in-out-tree. 
    The gray vertices and dotted edges are virtually added and not part of $G_{S^*}$.
    The black edges incident to the gray leaves of the in-out-tree represent the edges from $G_{S^*}$.}\label{bild-hoher-grad}
\end{figure}

The number of outgoing edges from $v$ in $G_{S^*}$ is equal to the number of paths that start at $v$ and exit $T_{out}(v)$. 
Analogously, the number of incoming edges to $v$ in $G_{S^*}$ is equal to the number of paths that enter $T_{in}(v)$ and end at the root $v$. 
Thus, the edges attached to the leaves of the in-out-trees are very important, because they represent the edges from $G_{S^*}$. 
An edge between two different in-out-trees $T_v$ and $T_u$ represents the edge $(v,u) \in \Eu$. 
\Cref{bild-hoher-grad-nachbarn} shows two neighboring vertices $v$ and $u$ of high degree and how the in-out-trees $T_v$ and $T_u$ look like. 
The black edge between $v$ and $u$ on the left-hand side corresponds to the black path that leads from $v$ to $u$. 

\begin{figure}[htb]
    \begin{center}
    \begin{tikzpicture}[>=latex]
    \draw (1,7) node[label={[label distance=0.05cm]180:$v$}] (r1) {};\filldraw (r1) circle (\KnotenDiam);
      \foreach \x in {0,1,...,4} {\draw (\x*0.5,8) node (\x) {}; \filldraw (\x) circle (\KnotenDiam);}
      \foreach \x in {0,1,...,4} {\draw[<-,gray,line width=1] (r1) -- (\x) node[above,sloped,midway] {};}
      
      \foreach \x/\name in {0/20,2/21,3/23,4/24} {\draw (\x*0.5,6) node (\name) {}; \filldraw (\x) circle (\KnotenDiam);}
      \foreach \x/\name in {0/20,1/21,3/23,4/24} {\draw[->,gray,line width=1] (r1) -- (\name) node[above,sloped,midway] {};}
      
    \draw (1,2) node[label={[label distance=0.05cm]180:$u$}] (r2) {};\filldraw (r2) circle (\KnotenDiam);
    \draw[->,line width=2.5] (r1) to[out=-120, in=90,bend angle=1,looseness=1] (r2) node[above,sloped,midway] {};
      
      \foreach \x/\name in {0/30,1/31,3/33,4/34} {\draw (\x*0.5,3) node (\name) {}; \filldraw (\x) circle (\KnotenDiam);} 
      \foreach \x/\name in {0/30,1/31,2/33,4/34} {\draw[<-,gray,line width=1] (r2) -- (\name) node[above,sloped,midway] {};}
      
      \foreach \x in {0,1,3,4} {\draw (\x*0.5,1) node (\x) {}; \filldraw (\x) circle (\KnotenDiam);}
      \foreach \x in {0,1,3,4} {\draw[->,gray,line width=1] (r2) -- (\x) node[above,sloped,midway] {};}
      
      \draw (1,-2) node[label={[label distance=0.05cm]180:$$}] (v) {};
    \end{tikzpicture}\hskip40mm
    \begin{tikzpicture}[>=latex]
      \draw (0,4) node[label={[label distance=0.05cm]180:$v$}] (v) {};\filldraw (v) circle (\KnotenDiam);
      
      \foreach \x/\y/\name in {-1.5/6/21, -1/5/31, 1/5/32, -1/3/41a, 1/3/42, -1.5/2/51a} 
      {\filldraw (\x,\y) node[circle=0.9mm, draw, color=gray, minimum size=1pt] (\name) {};} 
      \foreach \x/\y/\name in {-2/7/11,-1/7/12,-0.5/6/22, 1.5/6/24,0.5/6/23, 0.5/2/53, 1.5/2/54, -2/1/61, -0.5/2/62} 
      {\draw (\x,\y) node (\name) {}; \filldraw (\name) circle (\KnotenDiam);}
      \foreach \parent/\child in {11/21, 12/21, 22/31, 23/32, 24/32, 51a/61, 41a/62, 42/53, 42/54} {\draw[->,gray,line width=1] (\parent) -- (\child) node[above,sloped,midway] {};}
      \foreach \parent/\child in {21/31, 31/v, 32/v, v/42} {\draw[->,gray,dashed,line width=1] (\parent) -- (\child) node[above,sloped,midway] {};}
      {\draw[->, line width=2.5] (41a) to (51a) node[above,sloped,midway] {};}
      
      \draw (0,-3) node (u) {}; \filldraw (u) circle (\KnotenDiam);
      \draw (0,-3) node[label={[label distance=0.05cm]180:$u$}] (u) {};\filldraw (u) circle (\KnotenDiam);
      
      \foreach \x/\y/\name in {-1.5/-1/21,-1/-2/31b, 1/-2/32, -1/-4/41, 1/-4/42}{\filldraw (\x,\y) node[circle=0.9mm, draw, color=gray, minimum size=1pt] (\name) {};} 
        \foreach \x/\y/\name in {-2/0/11,-1/0/12, 1.5/-1/24, 0.5/-1/23, -1.5/-5/51, -0.5/-5/52, 0.5/-5/53, 1.5/-5/54} 
        {\draw (\x,\y) node (\name) {}; \filldraw (\name) circle (\KnotenDiam);}
      \foreach \parent/\child in {23/32, 11/21, 12/21, 24/32, 41/51, 41/52, 42/53, 42/54} {\draw[->,gray,line width=1] (\parent) -- (\child) node[above,sloped,midway] {};}
      \foreach \parent/\child in {21/31b, 31b/u, u/41, u/42, 32/u} {\draw[->,gray,dashed,line width=1] (\parent) -- (\child) node[above,sloped,midway] {};}
      
      {\path [->,line width=2.5] (51a) edge [out=-45, in=45,bend angle=2,looseness=1](31b);}
      {\draw[->, line width=2.5] (v) to (41a) node[above,sloped,midway] {};}
      {\draw[->, line width=2.5] (31b) to (u) node[above,sloped,midway] {};}
      
      \draw (-1.5,2) node[label={[label distance=0.05cm]180:$x$}] (x) {};
      \draw (-1,-2) node[label={[label distance=0.05cm]180:$w$}] (w) {};
      
    \end{tikzpicture}
    \end{center}
    \caption{The vertices $v$ and $u$ are of high degree and replaced with the in-out-trees $T_v$ and $T_u$. 
    The black edge between $(v,u)$ corresponds to the black path from $v$ to $u$.
    The gray vertices and dotted edges are virtual and not part of $G_{S^*}$.}\label{bild-hoher-grad-nachbarn}
\end{figure}
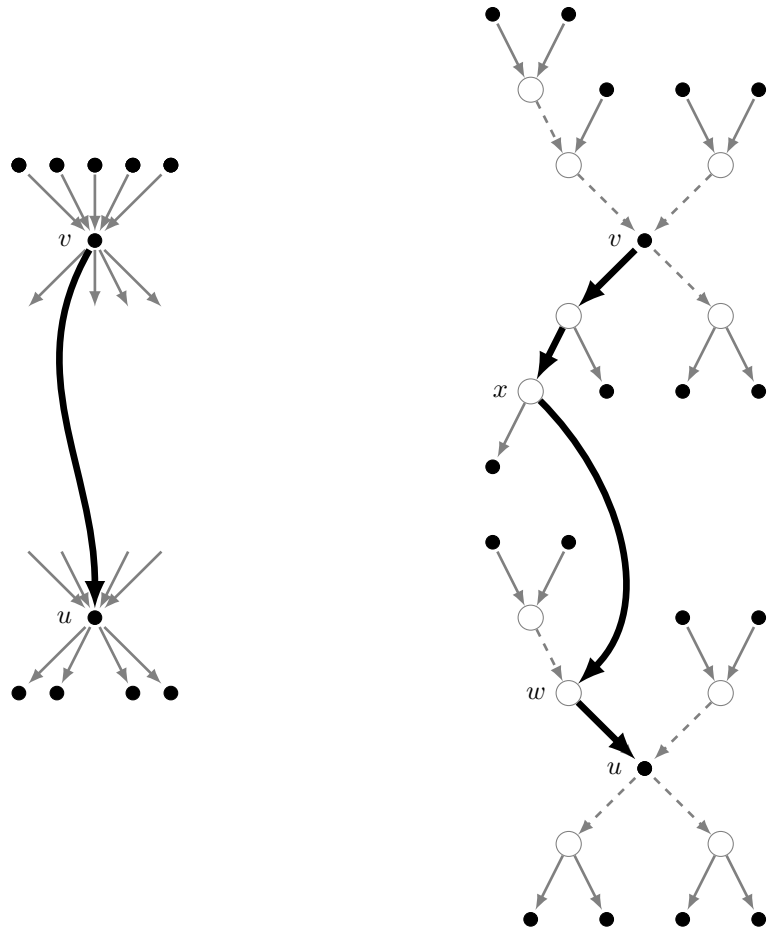

The construction of $H$ itself does not cost advice bits, but the increased number of edges and vertices influences the costs for the steps $2$ and $3$ in the proof of \Cref{unbekannt-gerichtet-grad}. 
Therefore, we will now analyze how the number of vertices and edges changes. 

\begin{lemma}\label{construction}
A graph $G=(V,E)$ with an optimal exploration sequence $S^*$ can be transformed into a graph $H=(V',E')$, with $|V^\prime|=n^\prime$ and $|E^\prime|=m^\prime$, such that every vertex has an outgoing and incoming degree of at most $2$ by replacing vertices $v$ with an in-out-tree $T_v$. 

The graph $H=(V',E')$ is bounded in the number of vertices and edges by $n' \leq 2m$ and $m' \leq 3m$. 
\end{lemma}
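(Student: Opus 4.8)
The plan is to count $|V'|$ and $|E'|$ directly from the in- and out-degrees of the vertices of $G_{S^*}=(V,\Eu)$ and then amortize these counts against $|\Eu|$, using $|\Eu|\le m$ together with the strong connectivity of $G_{S^*}$ established in Section~\ref{sec:basic}. Write $d_{out}(v),d_{in}(v)$ for the degrees in $G_{S^*}$ and set $m^{*}=|\Eu|=\sum_{v}d_{out}(v)=\sum_{v}d_{in}(v)\le m$. First I would count the newly created vertices. A vertex is expanded only when one of its degrees exceeds $2$, and $T_{out}(v)$ is a full binary tree whose root is the original vertex $v$ and whose $d_{out}(v)$ leaves carry the outgoing edges of $v$. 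Since a full binary tree with $\ell$ leaves has $\ell-1$ internal nodes, $T_{out}(v)$ creates $\max(d_{out}(v)-2,0)$ new internal vertices (all internal nodes except the root), and the symmetric in-tree, which shares the root, adds $\max(d_{in}(v)-2,0)$. The key point is that the leaves are not new vertices: each leaf is either an already existing neighbour of $v$ (when that neighbour is not expanded) or is the internal endpoint of the neighbour's in-tree reached by the single edge that represents the corresponding edge of $\Eu$. Hence, setting $N=\sum_{v}\bigl(\max(d_{out}(v)-2,0)+\max(d_{in}(v)-2,0)\bigr)$, I obtain $n'=n+N$.

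For the edges I would use the same bookkeeping. Inside each expanded vertex every new internal node is attached to its parent by exactly one edge, so the number of new tree edges equals the number of new internal vertices, i.e.\ $N$ in total; the root has no parent edge and the leaf edges are already the original edges. The only remaining edges are the $m^{*}$ edges of $\Eu$, each realized as a single edge of $H$ joining the two in-out-trees it connects. Therefore $m'=m^{*}+N$, so that once $N$ is controlled both $n'$ and $m'$ are governed by the same quantity.

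The step I expect to require the most care is the bound on $N$, and it is here that strong connectivity is essential. Every vertex of $G_{S^*}$ satisfies $d_{out}(v)\ge 1$ and $d_{in}(v)\ge 1$, whence $\max(d-2,0)\le d-1$ for each relevant degree; summing this gives
\begin{align*}
N \;\le\; \sum_{v}\bigl(d_{out}(v)-1\bigr)+\sum_{v}\bigl(d_{in}(v)-1\bigr) \;=\; (m^{*}-n)+(m^{*}-n) \;=\; 2m^{*}-2n.
\end{align*}
Substituting into $n'=n+N$ and $m'=m^{*}+N$ and using $m^{*}\le m$ yields
\begin{align*}
n' \;\le\; 2m^{*}-n \;\le\; 2m, \qquad m' \;\le\; 3m^{*}-2n \;\le\; 3m,
\end{align*}
exactly the claimed bounds. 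The two points to nail down rigorously are precisely the ones that make this argument succeed: that the leaves genuinely cost nothing—so that $n'=n+N$ and $m'=m^{*}+N$ hold with the stated $N$, rather than with an additional $\Theta(m^{*})$ coming from fresh leaf vertices—and that the degree lower bound $d\ge 1$ furnished by strong connectivity is available, since it is exactly the resulting saving of $2n$ in the estimate for $N$ that keeps $n+N$ below $2m$ and $m^{*}+N$ below $3m$.
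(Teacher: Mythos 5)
Your proposal is correct and follows essentially the same route as the paper's proof: the same per-vertex count of $\max(d_{out}(v)-2,0)+\max(d_{in}(v)-2,0)$ virtual vertices, the same observation that new edges are in bijection with new internal nodes while the edges of $\Eu$ survive as single connecting edges, and the same use of minimum in-/out-degree $1$ (from strong connectivity) to extract the $-2n$ saving, yielding $n'\leq 2m^*-n\leq 2m$ and $m'\leq 3m^*-2n\leq 3m$. Your bookkeeping via $\max(d-2,0)\leq d-1$ summed uniformly over all vertices is a slightly cleaner presentation of the paper's partition into $V_{in}$, $V_{out}$, and $V_r$, but it is the same argument.
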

\begin{proof}
As soon as the algorithm moves the explorer onto a vertex $v$ for the first time, it will read advice to know the number of incident edges from $G_{S^*}=(V, \Eu)$, as described in \Cref{incoming-edges-advanced}. 
We call the number of incoming incident edges $\deg_{in}(v)$ and the number of outgoing incident edges $\deg_{out}(v)$. 

If a vertex has two incoming and two outgoing edges (or less) from $\Eu$, we will not transform it. 
Note that every vertex has at least one incoming and one outgoing edge from $\Eu$. 
We call the set of vertices that are not transformed $V_r$. 

Otherwise, we construct a compact in-tree $T_{in}(v)$ for the incoming edges and compact out-tree $T_{out}(v)$ for the outgoing edges of a vertex $v$. 
We construct a set for the vertices with many incoming edges $V_{in} = \{ v \in V \mid \deg_{in}(v) \geq 3 \}$ and, analogously, a set for the vertices with many outgoing edges $V_{out}$. 
Note that a vertex can be part of $V_{in}$ and $V_{out}$, if it has many outgoing edges and many incoming edges. 
The three sets have the following relation $V_r = V \setminus (V_{in} \cup V_{out})$. 
The newly added vertices in $V^\prime \setminus V$ are all part of the in-out-trees and, because they have no representation in $G_{S^*}$, we will call them virtual vertices. 

For a vertex $v$ with $d_{in}(v)$ predecessors, we need $d_{in}(v)-2$ virtual vertices in the compact in-tree replacing $v$. 
The same holds for the compact out-tree, it contains $d_{out}(v)-2$ virtual vertices. 
This can be proven easily by an induction over the degree of $v$ in $G_{S^*}$: 
If we have a vertex with $x$ outgoing edges and a compact out-tree with $x-2$ virtual vertices that replaces the edges, we need to attach one new virtual vertex to a leaf on the lowest possible level to create a compact out-tree for a vertex with $x+1$ outgoing edges. 
This removes one outgoing edge of the tree, but the new leaf allows two additional exits. 
Thus, a compact out-tree has two inner vertices less than the number of edges it replaces. 
Due to symmetry, the same holds for compact in-trees. 

Thus, the number of virtual vertices in all in-out-trees is at most: 
\begin{align*}
&\quad \sum_{v \in V_{in}} (\deg_{in}(v) -2 ) + \sum_{v \in V_{out}} (\deg_{out}(v) -2 ) \\
&= \sum_{v \in V_{in}} (\deg_{in}(v)) + \sum_{v \in V_{out}} (\deg_{out}(v)) - 2(|V_{in}|+|V_{out}|) \\
&\leq \sum_{v \in V} (\deg_{in}(v)) + \sum_{v \in V} (\deg_{out}(v)) - 2(|V_{in} \cup V_{out}|) -2|V_r| \\
&\leq 2m  - 2(|V_{in}|+|V_{out}| +|V_r|)  = 2m  - 2n \text{.}
\end{align*}

The vertices from $V_r$ and the roots of the in-out-trees are the old vertices of $G_{S^*}$ and are also part of $H$. 
Therefore, the number of vertices $n^\prime $ in $H$ is bounded from above by $2m-n$. 

Every inner edge is traversed multiple times and an edge connecting two different trees $T_v$ and $T_u$, e.g., the edge $(x,w)$ in \Cref{bild-hoher-grad-nachbarn}, is traversed as often as the edge $(v,u)$ in the original graph. 
Thus, the algorithm does not need to classify any edges in $H$, because either they were classified to know $\deg_{out}(v)$ and $\deg_{in}(v)$ or they are virtual and therefore part of $\Ex$. 

The number of the virtual edges is bounded by the number of virtual vertices. 
So, the number of edges in an in-out-tree is equal to the number of virtual vertices, because each virtual vertex in a $T_{in}(v)$ can be assigned with either one outgoing edge, or, for a $T_{out}(v)$, an incoming edge. 
The old edges of $G_{S^*}$ are preserved in $H$ and connect the in-out-trees. 
Thus, the number of edges $m^\prime$ in $H$ is bounded from above by $m^\prime \leq 3m - 2n$. 
\end{proof}

Although the construction increases the number of vertices and edges, the computed solution for $H$ can be easily modified to obtain a solution for $G$. 

\begin{lemma}\label{construction-solution}
Let $G$ be an arbitrary directed graph with an optimal exploration sequence $S^*$. 
Let $H$ be the degree bounded graph resulting from the construction described in \Cref{construction}. 
An algorithm computing an optimal exploration sequence $S$ for the transformed graph $H$ can be used to compute an optimal exploration sequence $S'$ for $G$. 
\end{lemma}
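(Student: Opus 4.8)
The plan is to set up an explicit two-way correspondence between exploration sequences of $H$ and of $G$ and to transfer optimality through it. The central structural observation is that, in the graph $H$ built in \Cref{construction}, the only non-virtual vertices are the roots of the in-out-trees together with the untransformed vertices of $V_r$, and these are exactly the vertices of $G_{S^*}$, hence of $G$. Moreover, each compact in-tree $T_{in}(u)$ is a \emph{trap}: all its edges point towards the root, so once the explorer enters a leaf of $T_{in}(u)$ it can leave $T_{in}(u)$ only through the root $u$; symmetrically, $T_{out}(v)$ can be entered only at its root $v$. Consequently, between two consecutive visits of (not necessarily distinct) original vertices, any walk in $H$ crosses exactly one edge joining two in-out-trees, and every such connecting edge represents a unique edge of $\Eu \subseteq E$ in $G$.

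First I would define the projection $\pi$ that deletes all virtual vertices from a sequence, and show that if $S$ is an exploration sequence for $H$ then $S'=\pi(S)$ is a valid exploration sequence for $G$. By the trap property, each maximal block of virtual vertices occurring in $S$ between two original vertices $v$ and $u$ is traversed along a walk that uses precisely one connecting edge, representing $(v,u)\in\Eu$; hence consecutive entries of $S'$ are adjacent in $G$. Coverage is immediate, since $S$ visits every vertex of $H$, in particular every root, so $S'$ visits every vertex of $G$; and if $S$ is cyclic and starts at the root $v_0$, then so does $S'$. To track cost, I would equip $H$ with the natural cost function assigning each connecting edge the cost of the $G$-edge it represents and giving every virtual (inner) edge cost $0$. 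Since each step of $S'$ crosses exactly one connecting edge, $\#_S(e')=\#_{S'}(e)$ for the connecting edge $e'$ representing $e\in\Eu$, and therefore $\cost(S)=\cost(S')$.

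Conversely, I would lift $S^*$ to an exploration sequence $\bar S^*$ of $H$ by replacing every traversal of an edge $(v,u)$ with the unique tree walk that descends from root $v$ through $T_{out}(v)$ to the leaf representing $(v,u)$, crosses the connecting edge, and ascends through $T_{in}(u)$ to root $u$. The traversal number thereby induced on an inner edge equals the sum of the $\#_{S^*}$-values of the $G$-edges in its subtree, so flow is conserved at every virtual vertex (one parent edge, two child edges), while at every root the relation $d_\Delta(v)=0$ in $G_{S^*}$ guarantees that the in-tree inflow equals the out-tree outflow. Thus the multigraph $M_H$ induced by $H$ and these traversal numbers satisfies the Eulerian degree balance, so $\bar S^*$ is a valid exploration sequence; since every edge of $\Eu$ has positive traversal, every virtual vertex lies below a used leaf and is visited. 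As inner edges are free, $\cost(\bar S^*)=\cost(S^*)$.

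Combining the two directions proves the claim. Let $S$ be an optimal exploration sequence for $H$. The lift gives $\cost(S)\le\cost(\bar S^*)=\cost(S^*)$, while the projection gives a feasible exploration sequence $S'=\pi(S)$ for $G$ with $\cost(S')=\cost(S)$; since $S^*$ is optimal for $G$, also $\cost(S')\ge\cost(S^*)$. Hence $\cost(S^*)\le\cost(S')=\cost(S)\le\cost(S^*)$, forcing equality, so $S'$ is optimal for $G$ and is obtained from $S$ by one linear pass deleting virtual vertices (the analogous statement for path solutions follows by treating the endpoint as in \Cref{same-a}). I expect the main obstacle to be the careful verification of the trap property and of the flow balance at the merged roots, that is, making rigorous that the in-out-tree routing neither introduces shortcuts invalidating $\pi(S)$ as a $G$-walk nor breaks the degree balance needed for $\bar S^*$ to be Eulerian; everything else then reduces to bookkeeping on traversal numbers under the chosen cost function.
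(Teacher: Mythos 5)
Your proof is correct and rests on the same core idea as the paper's: project an optimal exploration sequence $S$ for $H$ down to $G$ by deleting the virtual vertices, using the fact that every walk between two consecutive original vertices passes through exactly one connecting edge and hence corresponds to a unique edge of $\Eu$. The difference is one of completeness rather than of route. The paper's proof consists of exactly this projection plus the bare assertion that optimality transfers; it never specifies the cost function on $H$, and it does not argue why the optimal cost in $H$ cannot differ from the optimal cost in $G$. You supply both missing pieces: the explicit costs (connecting edges inherit the cost of the $G$-edge they represent, inner tree edges cost $0$) and the lift of $S^*$ to a feasible sequence $\bar S^*$ for $H$ with $\cost(\bar S^*)=\cost(S^*)$, justified by flow conservation at the virtual vertices and the Eulerian balance $d_\Delta(v)=0$ at the roots. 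Only with this second direction does the sandwich $\cost(S^*)\le\cost(S')=\cost(S)\le\cost(\bar S^*)=\cost(S^*)$ close, forcing $S'$ to be optimal. So your write-up is the rigorous version of what the paper asserts in three sentences; both arguments deliver the same result, but yours makes explicit the step (optimality transfer via the lift) that a careful reader would otherwise have to reconstruct.
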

\begin{proof}
If the algorithm knows an optimal exploration sequence $S$ for $H$, it can construct an optimal exploration sequence $S'$ for $G$ by removing the virtual vertices in $S$. 
Every path $(v, \ldots, u)$ that leads from a root $v$ in a constructed tree $T_v$ to another root $u$ of a constructed tree $T_u$ can be replaced with the edge $(v,u)$ which must be part of $G$ by construction. 
Thus, if $S$ was optimal for $H$, $S'$ must be an exploration sequence of minimal cost for $G$. 
\end{proof}

Note that the construction visualizes the binary search approach that the algorithm uses to compute the exact number of traversals for some edges. 
The important information like which edge is used for the last traversal or which edge is light can now be communicated with only one bit per (virtual or real) edge such that we can apply \Cref{unbekannt-gerichtet-grad}. 
Obviously, we also increased the number of vertices by transforming $G$ into $H$ and therefore it is also increased how often the explorer asks for such information. 
Therefore, we need to discuss how transforming $G$ into $H$ increases the needed advice, if we apply the same approach as in \Cref{unbekannt-gerichtet-grad}. 

\begin{theorem}\label{unbekannt-gerichtet}
There exists an online algorithm which solves the cyclic graph exploration problem using $2n + 23m$ bits of advice on a given unknown directed graph $G=(V,E)$, or $G=(V,E, \cost)$, respectively. 
\end{theorem}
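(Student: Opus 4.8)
The plan is to reduce the general directed case to the bounded-degree case already settled in \Cref{unbekannt-gerichtet-grad}, using the graph transformation of \Cref{construction,construction-solution} together with the general edge classification of \Cref{incoming-edges-advanced}. As in the proof of \Cref{unbekannt-gerichtet-grad}, I would first note that the cost function is irrelevant once the edges are classified, since the fixed (perturbed, hence unique) optimal sequence $S^*$ is reconstructed purely from traversal numbers; this covers both the $G=(V,E)$ and the $G=(V,E,\cost)$ formulations. The algorithm then proceeds in three phases: classify every edge of $G$ while exploring, simulate the bounded-degree algorithm on the transformed graph $H$, and finally contract the in-out-trees to recover a solution for $G$.

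First I would run the classification of \Cref{incoming-edges-advanced}, which, on the first visit of each incident vertex, reveals the membership of every edge of $G$ in $\En$, $\Ee$, or $\Ex$ and simultaneously the in- and out-degrees needed to build the in-out-trees; this costs $2(n+m)$ advice bits. By \Cref{construction}, this information lets the algorithm build $H$ on the fly, and crucially every edge created \emph{inside} an in-out-tree is a multi-edge by construction, so \emph{no} further classification advice is spent on $H$. By \Cref{construction-solution}, any optimal exploration sequence for $H$ yields one for $G$ by deleting the virtual vertices, so it suffices to explore $H$ optimally.

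Next I would invoke the strategy of \Cref{unbekannt-gerichtet-grad} on $H$, but omitting its classification step (already paid for on $G$). Two advice components remain: the light/last markings, costing $3n'$ bits by \Cref{light-last}; and the exact traversal numbers of the light edges, which by \Cref{sum-traversals,bound1} cost $2g(m') \le 5m'$ bits once the incoming and outgoing directions are counted separately. Correctness of the resulting exploration of $H$ follows verbatim from \Cref{sicher-weg} and \Cref{known-traversals}, whose only inputs are the edge classification, the light/last markings, and the traversal numbers of light edges — all now available on $H$.

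It remains to total the bits. Using the loose bounds $n' \le 2m$ and $m' \le 3m$ from \Cref{construction}, the total is
\[
\underbrace{2(n+m)}_{\text{classification}} + \underbrace{3n'}_{\text{light/last}} + \underbrace{5m'}_{\text{traversals}} \;\le\; 2n + 2m + 6m + 15m \;=\; 2n + 23m \text{,}
\]
as claimed. This is less a genuine obstacle than a bookkeeping point to verify carefully: that the classification advice provided for $G$ alone suffices to drive the bounded-degree algorithm on $H$ without any re-classification. This rests on the two facts that the real edges of $H$ inherit their class directly from $G$ and that every virtual edge lies in $\Ex$, so the per-vertex and per-edge advice of \Cref{unbekannt-gerichtet-grad} need only be charged on $H$ for the light/last and traversal-number phases, whose sizes are controlled by $n'$ and $m'$ and thus, via \Cref{construction}, ultimately by $m$.
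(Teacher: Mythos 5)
Your proposal is correct and follows essentially the same route as the paper: classify all edges of $G$ via \Cref{incoming-edges-advanced} for $2(n+m)$ bits, build the bounded-degree graph $H$ by \Cref{construction}, run the algorithm of \Cref{unbekannt-gerichtet-grad} on $H$ (charging $3n'\le 6m$ for light/last markings and $5m'\le 15m$ for light-edge traversal numbers), and recover the solution for $G$ via \Cref{construction-solution}. The bookkeeping matches the paper's exactly, yielding $2n+23m$.
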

\begin{proof}
We use the algorithm described in \Cref{unbekannt-gerichtet-grad} and make one adjustment. 
After step $1$, the algorithm uses the construction from \Cref{construction} to assure that every vertex has at most two outgoing and at most two incoming edges. 

This increases the advice complexity for step $2$ from $3n$ to at most $6m$ and for step $3$ from $5m$ to at most $15m$. 
Together with the $2(n+m)$ advice bits from step $1$, the algorithm needs $2n + 23m$ bits of advice. 
The computed solution for the transformed graph can then be transfered to $G$ by applying \Cref{construction-solution}. 
\end{proof}

To solve the problem for non-cyclic exploration sequences, we can still use \Cref{same-a}. 
\section{Exploring Unknown Undirected Graphs} 
\label{sec:extensions}

For undirected unbounded graphs, we need to make one last adaptation to the algorithm from \Cref{unbekannt-gerichtet}. 
Instead of classifying one edge into one of the three sets $\En$, $\Ee$, or $\Ex$, we split each undirected edge into two directed ones and classify both afterwards. 
Due to \Cref{kein-ungerichteter-kreis}, the algorithm needs to distinguish between $6$ possible cases for each undirected edge. 

\begin{theorem}\label{unbekannt-ungerichtet-grad}
 	 There exists an online algorithm which solves the cyclic graph exploration problem using $\log(6)(n+m) + 42m$ bits of advice on unknown undirected graphs.
\end{theorem}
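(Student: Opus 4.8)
The plan is to adapt the general directed algorithm of \Cref{unbekannt-gerichtet} to the undirected setting, changing only the edge-classification step and then re-accounting carefully for the blow-up caused by orienting undirected edges. First I would split every undirected edge $\{v,w\}$ into the two directed arcs $(v,w)$ and $(w,v)$ and run the entire machinery of \Cref{sec:bounded,sec:general} on the resulting directed graph. The one genuinely new ingredient is that the pair of arcs replacing a single undirected edge cannot be classified independently: by \Cref{doppel}, together with the forest property of \Cref{kein-ungerichteter-kreis}, the joint traversal behaviour of $(v,w)$ and $(w,v)$ falls into exactly one of six cases instead of the $3\times 3=9$ a-priori possibilities. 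Hence I would replace the per-edge one-out-of-four question of \Cref{incoming-edges-advanced} by a one-out-of-six question, retaining the per-vertex delimiter that signals when all edges incident to a newly visited vertex have been revealed. Charging $\log 6$ per classification symbol over the $m$ edges and one delimiter per vertex over the $n$ vertices, this contributes $\log(6)(n+m)$ advice bits for the classification phase.

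Once every edge is classified, I would execute verbatim the pipeline of \Cref{unbekannt-gerichtet}: apply the degree-reduction of \Cref{construction} to obtain a graph $H$ with in- and out-degree at most two, determine the light/last edges (step~2) and the exact traversal numbers of the light edges (step~3), and finally reconstruct the optimal tour using \Cref{sicher-weg} and \Cref{known-traversals}. The crucial bookkeeping point is that the oriented graph fed into \Cref{construction} now has up to $2m$ arcs rather than $m$ edges. Substituting $2m$ into the bounds $n'\le 2m$ and $m'\le 3m$ of \Cref{construction} yields $n'\le 4m$ and $m'\le 6m$ for the degree-bounded graph $H$. Consequently step~2 costs at most $3n'\le 12m$ bits by \Cref{light-last}, and step~3 costs at most $5m'\le 30m$ bits (combining \Cref{sum-traversals} and \Cref{bound1}, since the two invocations of $g$ sum to at most $5m'$), so together they need at most $42m$ advice bits. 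Adding the classification cost gives the claimed total $\log(6)(n+m)+42m$, and the exploration sequence computed on $H$ is mapped back to an optimal one for $G$ by \Cref{construction-solution}.

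The step I expect to be the main obstacle is the classification phase rather than the arithmetic. I must argue that the six-case characterisation of \Cref{doppel} is exactly the information the oracle commits to per undirected edge, and that the delimiter mechanism of \Cref{incoming-edges-advanced} still functions when the explorer only ever sees the arcs leaving already-visited vertices: the oracle has to write the six-way symbols in the order in which the endpoints of each edge are first reached, so that whenever an arc becomes visible the algorithm can match it to the correct pre-written class. Verifying that this ordering is well defined for both orientations of every edge, and that no edge is classified twice, is the delicate part. Everything downstream is the directed analysis applied unchanged to the oriented graph, so the correctness of the reconstructed tour follows directly from \Cref{sicher-weg,known-traversals,construction-solution}.
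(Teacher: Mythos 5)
Your proposal is correct and takes essentially the same route as the paper: the identical six-case classification justified by \Cref{doppel} and \Cref{kein-ungerichteter-kreis}, followed by running the directed pipeline on the oriented graph, where your substitution $n'\le 4m$, $m'\le 6m$ into \Cref{light-last} and \Cref{bound1} gives $12m+30m=42m$, exactly the paper's ``doubling'' of the $21m$ from \Cref{unbekannt-gerichtet}. The one point you flagged as the main obstacle is actually a non-issue: in an undirected graph the explorer sees \emph{all} edges incident to its current vertex (there are no hidden incoming arcs), so the paper drops the delimiter entirely and spends only $\log(6)m$ on classification, whereas your retained delimiter merely uses the extra $\log(6)n$ that the stated bound $\log(6)(n+m)+42m$ permits.
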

\begin{proof}
	The algorithm differs only in one step from the algorithm of \Cref{unbekannt-gerichtet}. 
	It additionally extends the first step of the algorithm from \Cref{unbekannt-gerichtet-grad}. 
 	Instead of asking an one-out-of-four question for every incident edge, it asks an one-out-of-six question. 
  
	Because $G_{S^*}$ does not contain a cycle of edges from $\Ex$ (see \Cref{kein-ungerichteter-kreis}), we can exclude some cases.
 	If we interpret an undirected edge as two directed ones, they cannot be both from $\Ex$. 
 	Additionally, the cases that one edge is from $\Ex$ and the other one from $\Ee$ is also not possible due to \Cref{doppel}. 
	 Note that a delimiter is not needed because the algorithm sees all incident edges. 
  
	 Thus, the algorithm asks for every undirected edge $e= \{ v,w \}$ which of the following cases is valid. 
	 \begin{enumerate}
 		\item $\#_{S^*}(v,w) = 0$ and $\#_{S^*}(w,v) = 0$, 
		 \item $\#_{S^*}(v,w) = 1$ and $\#_{S^*}(w,v) = 0$, 
		 \item $\#_{S^*}(v,w) = 0$ and $\#_{S^*}(w,v) = 1$, 
		 \item $\#_{S^*}(v,w) = 1$ and $\#_{S^*}(w,v) = 1$, 
		 \item $\#_{S^*}(v,w) > 1$ and $\#_{S^*}(w,v) = 0$, 
		 \item $\#_{S^*}(v,w) = 0$ and $\#_{S^*}(w,v) > 1$. 
 	\end{enumerate}
	Thus, the algorithm has to read $\log(6)$ advice bits for each edge, instead of $\log(4)$ bits, and does not read the delimiter $n$ times. 
	Therefore, the advice needed to obtain the same knowledge as in step $1$ of \Cref{unbekannt-gerichtet-grad} is $\log(6)m$ bits. 
  
	Because an undirected edge can be interpreted as two directed edges, the number of vertices and edges for the degree bounded graph $H=(V^\prime, E^\prime)$ can become twice as large. 
 	Therefore, the already modified number of needed advice bits, for the steps $2$ and $3$, in \Cref{unbekannt-gerichtet} is doubled. 
  	In the end, this sums up to $\log(6)m + 42m$ bits of advice. 
\end{proof}

We transfer this result to the non-cyclic problem, like in \Cref{sec:bounded}, where we used the algorithm for the cyclic graph exploration problem on bounded directed graphs to solve the non-cyclic problem on these graphs. 

\begin{lemma}\label{same}
   	Assume we have an algorithm $\mathcal A$ with advice that solves the cyclic graph exploration problem reading $a(n,m)$ bits of advice on unknown undirected graphs. 
  	 Assume also that $\mathcal A$ does not use the cost function. 
   	Then we may also solve the non-cyclic graph exploration problem reading $a(n,m)+\lceil\log(n-1)\rceil$ bits of advice on unknown undirected graphs. 
\end{lemma}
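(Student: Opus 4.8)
The plan is to mirror the argument of \Cref{same-a} almost verbatim, since the only structural difference is that $G$ is now undirected and $\mathcal A$ is the cyclic algorithm for undirected graphs. Given an input $(G,\cost)$ with start vertex $v_0$, I would have the oracle fix an optimal non-cyclic exploration sequence $S^* = (v_0, \ldots, v_{end})$. The first observation is that $v_0$ is necessarily the first vertex seen, so $v_{end}$ occupies one of the remaining $n-1$ positions in the order in which vertices are discovered; hence $\lceil\log(n-1)\rceil$ advice bits suffice for the oracle to communicate that position. The algorithm maintains a counter of newly discovered vertices and thereby recognizes exactly when the explorer reaches $v_{end}$ for the first time.

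Next, on top of these $\lceil\log(n-1)\rceil$ bits, the oracle supplies exactly the advice $a(n,m)$ that $\mathcal A$ uses in the cyclic case, so that the algorithm learns, for every edge incident to an already visited vertex, its membership in $\En$, $\Ee$, $\Ex$ and ultimately its precise traversal number. The key trick is then identical to \Cref{same-a}: as soon as the explorer sits on $v_{end}$, the algorithm virtually inserts a single directed edge $e = (v_{end}, v_0)$, declares $\#_{S^*}(e) = 1$, places it in $\Ee$, and marks it as \emph{last}. This restores $d_\Delta(v) = 0$ at $v_0$ and $v_{end}$, so that $M_{S^*} = (V, \Eu \cup \{e\}, \#_{S^*})$ becomes Eulerian and the cyclic routine from \Cref{sicher-weg} applies unchanged, producing a closed sequence of the form $S = (v_0, \ldots, v_{end}, v_0)$.

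To finish, I would argue that dropping the final virtual step yields an optimal non-cyclic exploration sequence for $G$. Since $\mathcal A$ ignores the cost function by assumption, the fictitious edge $e$, which is never part of the undirected input, influences neither the traversal computation nor the cost, and the prefix $(v_0, \ldots, v_{end})$ visits every vertex exactly as $S^*$ does; the total advice is $a(n,m) + \lceil\log(n-1)\rceil$.

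The step I expect to require the most care is checking that inserting the virtual directed edge $e = (v_{end}, v_0)$ does not clash with the undirected six-way classification used by $\mathcal A$: because the classification in \Cref{unbekannt-ungerichtet-grad} is applied only to genuine edges of $G$ and each undirected edge is split into two directed copies, I must ensure that $e$ is treated purely as an auxiliary $\Ee$-edge introduced at exploration time rather than as one of the split halves of some real edge. \Cref{doppel} guarantees that a multiply-used direction has no oppositely-used companion, so marking $e$ as a single-use \emph{last} edge stays consistent with the Eulerian bookkeeping, and no extra advice is needed for it.
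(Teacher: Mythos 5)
Your proposal is correct and follows exactly the route the paper takes: the paper's own proof of this lemma is simply ``Analogously to \Cref{same-a}'', and your argument spells out that analogy faithfully -- encoding the discovery position of $v_{end}$ with $\lceil\log(n-1)\rceil$ bits, inserting the virtual directed edge $(v_{end},v_0)$ marked as \emph{last} to restore the Eulerian property, running the cyclic routine via \Cref{sicher-weg}, and dropping the final virtual step. Your additional check that the virtual edge does not interfere with the six-way classification of \Cref{unbekannt-ungerichtet-grad} is a sensible clarification of a detail the paper leaves implicit.
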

\begin{proof}
	Analogously to \Cref{same-a}.
\end{proof}

Now we have solved all stated graph exploration problems. 
We started with the case that the algorithm knows the structure of the graph and showed that, if no cost function is given, the algorithm does not need any advice. 
Afterwards we looked at the case that the graph is unknown to the algorithm. 
There we started with directed bounded graphs. 
Then, we showed how the algorithm is able to transform an arbitrary directed graph into a degree bounded one and applied the results from the previous section. 
In the last section we showed how our results can be applied to undirected unbounded graphs. 

\bibliographystyle{plainurl}
\bibliography{bibliography}
\end{document}